\numberwithin{equation}{section} 
\theoremstyle{plain}
\newtheorem{theorem}{Theorem}[section]
\newtheorem{lemma}[theorem]{Lemma}
\newtheorem{proposition}[theorem]{Proposition}
\newtheorem{corollary}[theorem]{Corollary}
\definecolor{darkblue}{rgb}{0,0,.8}
\definecolor{red}{rgb}{1,0,0}
\definecolor{lightblue}{rgb}{0.63, 0.79, 0.95}
\renewcommand{\i}{\mathrm{i}}
\newcommand{\diff}{\mathrm{d}}
\newcommand{\ee}{\mathrm{e}}
\title{\Large \bf On the logarithmic bipartite fidelity of the open XXZ spin chain at $\boldsymbol{\Delta=-1/2}$}
\author{\normalsize \textsc{Christian Hagendorf}$^1$ and \textsc{Gilles Parez}$^{2}$
\medskip\\
{\normalsize
  \begin{minipage}{\textwidth}
  \begin{center}
  \textit{
   $^1$Universit\'e catholique de Louvain\\
  Institut de Recherche en Math\'ematique et Physique\\
  Chemin du Cyclotron 2, 1348 Louvain-la-Neuve, Belgium}
  \medskip \\
  \textit{$^2$Centre de Recherches Math\'ematiques (CRM),\\
Universit\'e de Montr\'eal,\\
P.O. Box 6128, Centre-ville Station,\\
Montr\'eal (Qu\'ebec), H3C 3J7, Canada}\\
 \bigskip
 \href{mailto:christian.hagendorf@uclouvain.be}{\normalsize 
\texttt{christian.hagendorf@uclouvain.be}}, \href{mailto:gilles.parez@umontreal.ca}{\normalsize 
\texttt{gilles.parez@umontreal.ca}}
  \end{center}
  \end{minipage}
}
}
\date{}  
\begin{document}
\maketitle

\begin{abstract}
The open XXZ spin chain with the anisotropy $\Delta=-\frac12$ and a one-parameter family of diagonal boundary fields is studied at finite length. A determinant formula for an overlap involving the spin chain's ground-state vectors for different lengths is found. The overlap allows one to obtain an exact finite-size formula for the ground state's logarithmic bipartite fidelity. The leading terms of its asymptotic series for large chain lengths are evaluated. Their expressions confirm the predictions of conformal field theory for the fidelity.  
\end{abstract}

\tableofcontents

\section{Introduction and results}
\label{sec:Introduction}

The study of entanglement in extended quantum many-body systems stimulates intense theoretical and experimental efforts in modern condensed matter physics \cite{amico:08}. This interest comes from the observation that entanglement measures, such as the celebrated entanglement entropy, are powerful tools to detect quantum phase transitions \cite{osterloh:02,osborne:02,vidal:03}. Moreover, they often characterise the quantum field theory that describes a system at (or in the vicinity of) the transition point \cite{calabrese:04,calabrese:09}.

In this article, we investigate the so-called \textit{logarithmic bipartite fidelity} (LBF). Dubail and St\'ephan introduced this quantity to characterise the entanglement properties of the ground state of an extended quantum system \cite{dubail:11}. Similarly to the entanglement entropy, it possesses an area-law property off criticality \cite{weston:11}. The violation of this area law at quantum criticality provides valuable insights into the system's description by conformal field theory (CFT) at the transition point \cite{stephan:13}.

The system that we focus on is the open XXZ chain with diagonal magnetic fields. For $N\geqslant 1$ sites, its Hamiltonian is given by\footnote{For $N=1$, the Hamiltonian is $H =(p+\bar p)\sigma^z$ by convention.}
\begin{equation}
  \label{eqn:XXZHamiltonian}
  H = -\frac12 \sum_{i=1}^{N-1}\left( \sigma_i^x\sigma_{i+1}^x + \sigma_i^y \sigma_{i+1}^y + \Delta \sigma_i^z\sigma_{i+1}^z \right)+ p \sigma_1^z + \bar p \sigma_N^z.
\end{equation}
Here, $\sigma_i^x,\,\sigma_i^y$, and $\sigma_i^z$ are the standard Pauli matrices acting on the site $1\leqslant i \leqslant N$ of the chain. (We give a precise definition of the notations used in this introduction in \cref{sec:OverlapbqKZ}.) Moreover, $\Delta$ is the anisotropy parameter, and $p,\bar p$ are the boundary magnetic fields at the left and the right ends of the chain. In the following, we focus on the case where these parameters are chosen in such a way that the Hamiltonian's ground-state eigenvalue is non-degenerate for any $N\geqslant 1$. Let $|\psi_N\rangle$ be a vector that spans the corresponding ground-state eigenspace. We refer to it as the ground-state vector. The LBF is defined in terms of this vector as
\begin{equation}
  \mathcal F_{N_1,N_2} = -\ln \left(\frac{\left|\langle \psi_{N}|\left(|\psi_{N_1}\rangle\otimes |\psi_{N_2}\rangle\right)\right|^2}{\|\psi_{N}\|^2 \|\psi_{N_1}\|^2\|\psi_{N_2}\|^2}\right),
  \label{eqn:LBF}
\end{equation}
where $N_1,N_2\geqslant 1$ are integers such that $N=N_1+N_2$. Hence, it measures to what extent the ground-state vector of a chain of length $N$ resembles the tensor product of the ground-state vectors of smaller chains with lengths $N_1,N_2$. 

Despite the simplicity of its definition, exact finite-size expressions for the LBF are scarce in the literature, even for the well-studied XXZ Hamiltonian \eqref{eqn:XXZHamiltonian}. For $\Delta=0$, exact results are available for vanishing boundary fields $p=\bar p = 0$ \cite{stephan:13} and Pasquier-Saleur boundary conditions  $p=-\bar p = \i$ \cite{parez:19}. The key to finding these results are free-fermion methods. The point $\Delta=-\frac12$ with $p=\bar p = -\frac14$ provides another example for which the finite-size LBF is explicitly known \cite{hagendorf:17,hagendorf:21}. For this choice of the parameters, the spin-chain Hamiltonian \eqref{eqn:XXZHamiltonian} possesses an exact lattice supersymmetry and its ground-state vector has a rich combinatorial structure. Both these properties allow one to find the LBF for chains of finite lengths. Several generalisations to periodic and mixed boundary conditions have recently been investigated for $\Delta=0$ and $\Delta=-\frac12$, too \cite{lienardy:20,morin:21}. Beyond the XXZ chain, an explicit expression for the LBF is known for the staggered supersymmetric $M_1$ model of strongly-interacting fermions with open boundary conditions\cite{parez:17}.

\subsubsection{The ground state}

This article presents a new exact lattice calculation of the LBF for the open XXZ chain in finite size for a case that can be treated neither by free fermions nor by supersymmetry methods. We define this case through the choice
\begin{equation}
  \label{eqn:Parameters}
  \Delta = -\frac12,\quad p = \frac12\left(\frac12 - x\right), \quad \bar p = \frac12\left(\frac12-\frac1x\right),
\end{equation}
where $x$ is a real parameter. For this choice, the spin-chain Hamiltonian possesses the remarkably simple eigenvalue \cite{degier:04,nichols:05,hagendorf:20}
\begin{equation}
  \label{eqn:SpecialEV}
    E_0 = -\frac{3N-1}{4}-\frac{(1-x)^2}{2x}.
\end{equation}
At the supersymmetric point $x=1$, it is the Hamiltonian's non-degenerate ground-state eigenvalue \cite{hagendorf:17}. By continuity, this property still holds in the vicinity of the supersymmetric point. Moreover, numerical data supports the conjecture that $E_0$ is the non-degenerate ground-state value for all $x>0$. In the following, we focus on $x>0$ and assume that this conjecture holds.

The ground-state vector $|\psi_N\rangle$ spanning the eigenspace of $E_0$ was explicitly constructed in \cite{hagendorf:21}. It has magnetisation $0$ if $N$ is even, and $+\frac12$ if $N$ is odd. By convention, its normalisation is fixed through the choice
\begin{equation}
  (\psi_N)_{\underset{\scriptstyle n}{\underbrace{\scriptstyle\downarrow\cdots \downarrow}}\underset{\scriptstyle \bar n}{\underbrace{\scriptstyle\uparrow\cdots \uparrow}}}=1,
\end{equation}
where $n = \lfloor N/2\rfloor,\,\bar n = \lceil N/2\rceil$. With this normalisation, the vector's components are polynomials in $x$ with integer coefficients. Some of them display remarkable relations with the enumerative combinatorics of alternating sign matrices and plane partitions \cite{bressoudbook}. Similar relations have been thoroughly investigated for the XXZ spin chain at $\Delta=-\frac12$ with (twisted) periodic boundary conditions \cite{stroganov:01,razumov:01,razumov:01_2,batchelor:01,degier:02,difrancesco:06,cantini:12_1,morin:20}.

\subsubsection{Finite-size results}

In the following, we exploit the rich combinatorial and analytical properties of the ground-state vector to find the LBF. To this end, we write it as
\begin{equation}
  \label{eqn:DefFO}
  \mathcal F_{N_1,N_2} = -\ln \left(\frac{\mathcal O_{N_1,N_2}^2}{\mathcal O_{N_1,0}\mathcal O_{N_2,0},\mathcal O_{N_1+N_2,0}}\right),
\end{equation}
where $\mathcal O_{N_1,N_2}$ is the overlap of the ground-state vector for a chain of length $N=N_1+N_2$ with the tensor product of the ground-state vectors for two chains of lengths $N_1$ and $N_2$:
\begin{equation}
  \label{eqn:DefO}
  \mathcal O_{N_1,N_2} = \langle \psi_{N}|\left(|\psi_{N_1}\rangle \otimes |\psi_{N_2}\rangle\right).
\end{equation}
We include the cases $N_1=0$ or $N_2=0$, where the overlap corresponds to the square norm:
\begin{equation}
  \mathcal O_{0,N} =\mathcal O_{N,0} =  \langle \psi_N|\psi_N\rangle.
\end{equation}
It will also be convenient to define $\mathcal O_{0,0}=1$. Clearly, we have $\mathcal O_{N_1,N_2}=0$ if both $N_1$ and $N_2$ are odd, because of the magnetisation of the ground-state vectors in the definition of the overlap. In this case, $\mathcal F_{N_1,N_2}$ is ill-defined.

Our first main result provides an explicit non-trivial expression for $\mathcal O_{N_1,N_2}$, and thus for $\mathcal F_{N_1,N_2}$, in the case where at most one of the integers $N_1,N_2$ is odd. This expression was conjectured in \cite{hagendorf:21}. We write it in terms of a determinant whose entries are quadratic polynomials in $x$ with simple integer coefficients and in terms of
\begin{equation}
  \label{eqn:DefGamma}
    \gamma_N = 
    \begin{cases}
      A_{\mathrm{V}}(N+1),& \textrm{for even }N,\\
      N_{8}(N+1),& \textrm{for odd }N.
    \end{cases}
\end{equation}
Here,
\begin{equation}
  A_{\mathrm{V}}(2k+1) = \frac{1}{2^k}\prod_{i=1}^k \frac{(6i-2)!(2i-1)!}{(4i-2)!(4i-1)!}, \quad N_8(2k) = \prod_{i=0}^{k-1}\frac{(3i+1)(6i)!(2i)!}{(4i)!(4i+1)!},
\end{equation}
are the number of vertically-symmetric alternating sign matrices of size $(2k+1)\times (2k+1)$ and the number of cyclically-symmetric transpose complement plane partitions in a cube of size $2k\times 2k\times 2k$, respectively \cite{bressoudbook}.
\begin{theorem}
 \label{thm:MainTheorem1}
 For all $N_1,N_2\geqslant 0$, let $n=\lfloor (N_1+N_2)/2\rfloor$. 
 For even $N_1,N_2$, we have
  \begin{equation}
    \mathcal O_{N_1,N_2} = \gamma_{N_1}\gamma_{N_2}\det_{i,j=1}^n \left((x-1)^2 \binom{i+j-2}{2j-i-1}+x\binom{i+j}{2j-i}\right).
 \end{equation} 
 For even $N_1$ and odd $N_2$, or odd $N_1$ and even $N_2$, we have
 \begin{equation} 
    \mathcal  O_{N_1,N_2} =\gamma_{N_1}\gamma_{N_2}\det_{i,j=1}^n \left((x-1)^2 \binom{i+j-1}{2j-i}+x\binom{i+j+1}{2j-i+1}\right).
  \end{equation}
\end{theorem}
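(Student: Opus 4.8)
The plan is to lift the overlap to an inhomogeneous setting, where the ground-state vectors become solutions of the boundary quantum Knizhnik--Zamolodchikov (bqKZ) equations depending on spectral parameters; to derive from these equations a set of recursion relations that determine the overlap uniquely; and then to verify that the binomial determinants in the statement satisfy the same relations after the homogeneous limit is taken. Concretely, I would use the inhomogeneous ground-state vectors $|\Psi_N(z_1,\dots,z_N)\rangle$ constructed in \cite{hagendorf:21} --- polynomial solutions of the bqKZ equations at $\Delta=-\frac12$, with boundary parameters tied to $x$ and homogeneous specialisation $z_1=\dots=z_N$ reproducing $|\psi_N\rangle$ up to an explicit scalar --- and form the inhomogeneous overlap
\[
\mathcal O_{N_1,N_2}(z_1,\dots,z_N)=\langle \Psi_N(z_1,\dots,z_N)|\bigl(|\Psi_{N_1}(z_1,\dots,z_{N_1})\rangle\otimes|\Psi_{N_2}(z_{N_1+1},\dots,z_N)\rangle\bigr),
\]
a Laurent polynomial in the $z_i$ that degenerates to $\mathcal O_{N_1,N_2}(x)$ in the homogeneous limit. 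Both parities of $N_1+N_2$ in the statement are handled in parallel.

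From the bqKZ structure I would then extract the following properties of this object: (i) polynomiality, together with an explicit degree bound in each variable; (ii) symmetry under $z_i\leftrightarrow z_{i+1}$ when the sites $i,i+1$ lie strictly inside one of the two blocks, because there the same $R$-matrix acts on the bra and on the ket; (iii) a bulk recursion: specialising an adjacent ratio $z_{i+1}/z_i$ to the distinguished value inside a block annihilates two sites and collapses $\mathcal O_{N_1,N_2}$ to $\mathcal O_{N_1-2,N_2}$ or $\mathcal O_{N_1,N_2-2}$ times an explicit prefactor; (iv) left and right boundary recursions: sending $z_1$ (resp.\ $z_N$) to the boundary zero fixed by $x$ shortens the corresponding block by one site and supplies the surviving $x$-dependence; and (v) an interface relation at the sites $N_1,N_1+1$, where the $R$-matrix symmetry is broken by the tensor-product cut and one obtains instead a relation expressing the non-symmetric part in terms of lower overlaps. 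A degree count then shows that (i)--(v), together with the base cases read off directly from short chains (e.g.\ $\mathcal O_{0,0}=1$, $\mathcal O_{1,0}=1$, $\mathcal O_{2,0}=x^2+1$), determine $\mathcal O_{N_1,N_2}(z_1,\dots,z_N)$, hence $\mathcal O_{N_1,N_2}(x)$, uniquely.

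It then remains to check the right-hand sides of the theorem. Writing $D_n(x)$ for the relevant $n\times n$ binomial determinant, I would establish its recursion in $n$ --- most naturally by a Desnanot--Jacobi/Dodgson-condensation argument after row and column operations that separate out a trailing minor from the two-binomial entries --- and match it with the homogeneous shadow of recursions (iii) and (iv). Fixing the proportionality constants from the base cases identifies the prefactor $\gamma_{N_1}\gamma_{N_2}$; in particular, setting $N_2=0$ gives the product formula $\mathcal O_{N,0}=\gamma_N D_{\lfloor N/2\rfloor}(x)$ for the squared norm, and comparing it at $x=1$ with the known value at the supersymmetric point \cite{hagendorf:17} pins down $\gamma_N$ as the vertically-symmetric alternating sign matrix number $A_{\mathrm V}(N+1)$ for even $N$ and the cyclically-symmetric transpose-complement plane partition number $N_8(N+1)$ for odd $N$.

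The decisive obstacle is the derivation of the recursions (iii)--(v) for a quantity that glues chains of \emph{different} lengths. For the squared norm one can exploit the full exchange and reflection symmetry of a single chain, whereas here the symmetry across the cut $N_1\,|\,N_1+1$ is absent, so one must track how the $R$- and $K$-matrices act asymmetrically on the bra versus the tensor-product ket, and keep every scalar prefactor exact so that the homogeneous limit stays under control. A secondary, purely combinatorial difficulty is to find the elementary transformation of the $x$-dependent binomial determinant that exposes its recursion in $n$ and matches the normalisation coming from the spin-chain side.
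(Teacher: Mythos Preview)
Your proposal has the right overall architecture---inhomogeneous lift, characterising properties, uniqueness by degree count, homogeneous specialisation---but there is a concrete gap in step (ii) and a missing key idea that the paper supplies. With your definition of the inhomogeneous overlap, where the bra carries the \emph{same} arguments $z_1,\dots,z_N$ as the kets, the exchange relation gives $\langle\Psi_N(\dots,z_i,z_{i+1},\dots)|=\langle\Psi_N(\dots,z_{i+1},z_i,\dots)|\check R(z_{i+1}/z_i)$ on the bra and $|\Psi_{N_1}(\dots,z_i,z_{i+1},\dots)\rangle=\check R(z_{i+1}/z_i)|\Psi_{N_1}(\dots,z_{i+1},z_i,\dots)\rangle$ on the ket; inserting both leaves $\check R(z_{i+1}/z_i)^2$ in the middle, and since $\check R(z)^2\neq 1$ your overlap is \emph{not} symmetric under $z_i\leftrightarrow z_{i+1}$. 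Symmetry comes from unitarity $\check R(z)\check R(z^{-1})=1$, which requires the bra and ket to carry inverse spectral parameters. The paper therefore defines
\[
\Omega_{N_1,N_2}=\langle\Psi_N(z_1^{-1},\dots,z_{N_1}^{-1},q^{-3}z_{N_1+1}^{-1},\dots,q^{-3}z_N^{-1})|\bigl(|\Psi_{N_1}(z_1,\dots)\rangle\otimes|\Psi_{N_2}(z_{N_1+1},\dots)\rangle\bigr),
\]
the inversion and the extra $q^{-3}$ shift on the second block being chosen so that both the left and right reflection relations also yield clean inversion symmetries in each $z_i$ (and $q^{-3}=1$ at the combinatorial point, so the homogeneous limit still recovers $\mathcal O_{N_1,N_2}$). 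Without these symmetries the bulk reduction supplies only a handful of interpolation points in $z_1$, far fewer than the degree width, and the uniqueness-by-degree argument collapses; your boundary recursions (iv) and interface relation (v) are not used in the paper and do not close this gap.

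The other difference is strategic. You propose to verify the binomial determinant directly by Dodgson condensation in $n$. The paper instead identifies $\Omega_{N_1,N_2}$ at $q=\ee^{2\pi\i/3}$ with an explicit product of three symplectic characters, $\epsilon_{N_1,N_2}\,\chi_{N_1}\chi_{N_2}\chi_{N+1}$, since this candidate visibly enjoys the same symmetries, degree bound and bulk reduction (the latter via the known recursion of $\chi$). Only then is the homogeneous limit taken: $\chi_{N_1}(1,\dots,1)$ and $\chi_{N_2}(1,\dots,1)$ give $\gamma_{N_1}\gamma_{N_2}$ up to powers of $3$, and the binomial determinant arises from $\chi_{N+1}(1,\dots,1,(\beta/q)^2)$ via a multiple contour-integral representation combined with Andr\'eief's identity, not from a recursion at all. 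This route sidesteps entirely the problem of finding and matching a closed recursion for the $x$-dependent determinant.
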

We prove this theorem in \cref{sec:HomogeneousLimit}. To this end, we utilise a well-established technique that consists of deriving exact finite-size results for the XXZ spin chain at $\Delta =-\frac12$ from polynomial solutions of the quantum Knizhnik-Zamolodchikov equations associated to the six-vertex model \cite{difrancesco:05_3,difrancesco:06}. In the present case, we work with Laurent-polynomial solutions of a version of these equations with boundaries. We introduce an overlap for these solutions that generalises the spin-chain overlap $\mathcal O_{N_1,N_2}$. The properties of the generalised overlap as a Laurent polynomial determine it uniquely. Exploiting this uniqueness property and applying a standard induction argument, we prove that the generalised overlap can be written as a product of so-called symplectic characters. The expressions of \cref{thm:MainTheorem1} follow from a specialisation of these symplectic characters.

\subsubsection{Asymptotic series}

The asymptotic behaviour of $\mathcal F_{N_1,N_2}$ for large $N_1,N_2$ is of particular interest. Indeed, Dubail and St\'ephan argued that for one-dimensional quantum critical systems, the coefficients of the leading terms of its asymptotic series possess universal properties, which are predicted by CFT\cite{dubail:11,stephan:13}. The asymptotic behaviour is the subject of our second main result. To state it, we use the parameterisation \cite{degier:16}
\begin{equation}
  x = \frac{\sin (\pi (r+1)/3)}{\sin(\pi r/3)}.
  \label{eqn:Xr}
\end{equation}
In this parameterisation, each $x>0$ can be obtained from a unique $0<r<2$. Furthermore, it will be useful to define the abbreviations
\begin{equation}
    D = \frac{2}{\Gamma(1/3)}\sqrt{\frac{\pi}{3}}\frac{\sin\left(2\pi(r-1)/3\right)}{\sin\left(\pi(r-1)/2\right)}
  \end{equation}
and
\begin{equation}
  E = 13-14 \sin^2\left(\frac{\pi(r-1)}{2}\right), \quad \bar E = 11-10\sin^2\left(\frac{\pi(r-1)}{2}\right).
\end{equation}
\begin{theorem}
 \label{thm:MainTheorem2}
 Let $0<r<2$, $N=N_1+N_2$ and $\xi = N_1/N$. As $N_1,N_2\to \infty$ (such that $\xi$ converges in $]0,1[$), the LBF is asymptotically given by
  \begin{equation}
    \label{eqn:Asymp1}
    \mathcal F_{N_1,N_2} = \frac16\ln N + \frac16 \ln\left( \xi(1-\xi)\right)-\ln D+\frac{E}{72}\left(\frac{1}{\xi}+\frac{1}{1-\xi}-1\right)N^{-1}+o(N^{-1}),
  \end{equation}
  for even $N_1,N_2$. For even $N_1$ and odd $N_2$, it is asymptotically given by
  \begin{equation}
    \label{eqn:Asymp2}
    \mathcal F_{N_1,N_2} = \frac16\ln N + \frac16 \ln \left(\frac{\xi}{1-\xi}\right)-\ln D+\frac{1}{72}\left(\frac{E}{\xi}+ \bar E \left(1-\frac{1}{1-\xi}\right)\right)N^{-1} +o(N^{-1}).
  \end{equation}
  For odd $N_1$ and even $N_2$, the asymptotic series follows from \eqref{eqn:Asymp2} with $\xi$ replaced by $1-\xi$.
 \end{theorem}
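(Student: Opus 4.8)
The plan is to feed the exact formula of \cref{thm:MainTheorem1} into \eqref{eqn:DefFO} and to expand its two building blocks --- the prefactors $\gamma_N$ of \eqref{eqn:DefGamma} and the determinants --- to the required order. Write $d_n(x)$ for the $n\times n$ determinant appearing for even $N_1,N_2$ and $\bar d_n(x)$ for the one appearing for mixed parities, and recall $\gamma_0=A_{\mathrm V}(1)=1$. The crucial bookkeeping observation is that the determinant in the numerator $\mathcal O_{N_1,N_2}$ has size $n=\lfloor(N_1+N_2)/2\rfloor$ and exactly the same entries as the determinant in $\mathcal O_{N,0}$ with $N=N_1+N_2$, so the two cancel against each other in \eqref{eqn:DefFO}. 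With $n_1=\lfloor N_1/2\rfloor$, $n_2=\lfloor N_2/2\rfloor$ (hence $n=n_1+n_2$), this gives for even $N_1,N_2$
\begin{equation*}
  \mathcal F_{N_1,N_2}=-\ln\frac{\gamma_{N_1}\gamma_{N_2}}{\gamma_N}-\ln\frac{d_n(x)}{d_{n_1}(x)\,d_{n_2}(x)},
\end{equation*}
and the same identity with $d_n\to\bar d_n$ and $d_{n_2}\to\bar d_{n_2}$ for even $N_1$ and odd $N_2$. The case of odd $N_1$ and even $N_2$ follows by exchanging $N_1\leftrightarrow N_2$, under which $\mathcal F_{N_1,N_2}$ is symmetric and $\xi\mapsto 1-\xi$, which accounts for the final sentence of the theorem. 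Thus everything reduces to the large-size behaviour of $\ln\gamma_N$, $\ln d_n(x)$ and $\ln\bar d_n(x)$.

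The prefactors are the easy part. Since $A_{\mathrm V}(2k+1)$ and $N_8(2k)$ are finite products of ratios of factorials, applying Stirling (or quoting the known asymptotics of these round numbers, which involve the Glaisher--Kinkelin constant) yields $\ln\gamma_N=a_2N^2+a_1N+a_0\ln N+a_0'+O(N^{-1})$ with explicit constants, and hence an expansion of $\ln(\gamma_{N_1}\gamma_{N_2}/\gamma_N)$ whose coefficients depend on $\xi$ and on the parities of $N_1$ and $N_2$. Because $N=N_1+N_2$ exactly, the only genuinely superadditive term is the $N^2$-piece; all remaining pieces survive to form part of the LBF.

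The determinants are the heart of the matter. I would first establish a factored closed form for $d_n(x)$ and $\bar d_n(x)$, the natural tool being the very machinery used to prove \cref{thm:MainTheorem1}: these determinants are specialisations of symplectic characters, which at the relevant specialisation admit Weyl-type product formulas; equivalently one may evaluate them by the Lindstr\"om--Gessel--Viennot lemma. One then expects each to split into an $x$-independent ``bulk'' product of factorials times a ``defect'' polynomial in $x$ --- ideally with an explicit factorisation whose zeros are trigonometric. Taking logarithms and using Stirling, $\ln d_n(x)=b_2n^2+b_1n+b_0\ln n+b_0'(x)+c(x)n^{-1}+O(n^{-2})$, and likewise for $\bar d_n$; the functions of $r$ in $D$, $E$ and $\bar E$ appear when the $x$-dependent factors are re-expressed through the parameterisation \eqref{eqn:Xr}.

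Finally one assembles the pieces. Because $n=n_1+n_2$ is an exact identity, every additive term of $\ln d_n$ drops out of $\ln(d_n/(d_{n_1}d_{n_2}))$; only the $n^2$-term survives there, and it is cancelled by the $N^2$-term of the $\gamma$-part --- a stringent consistency check, and the reason the LBF is only $O(\ln N)$. What remains assembles into $\frac16\ln N$ (whose coefficient turns out to be independent of $r$, matching the CFT prediction with central charge $c=1$), the geometric term $\frac16\ln(\xi(1-\xi))$ --- or $\frac16\ln(\xi/(1-\xi))$ in the mixed case --- produced by the $\ln n$- and $\ln N$-terms together with $n_i\approx N_i/2$, the constant $-\ln D$ built from the surviving $x$-dependent constants, and the $O(N^{-1})$ correction, in which $E$, $\bar E$ emerge from the terms $c(x)n^{-1}$, from the $O(N^{-1})$ part of $\ln\gamma_N$, and from the half-integer shifts $n_i-N_i/2$ that distinguish the parity cases (these also explain the poles at $\xi=0,1$ and the asymmetry of \eqref{eqn:Asymp2}). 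I expect the main obstacle to be this third step: obtaining product formulas for $d_n(x)$ and $\bar d_n(x)$ explicit enough to be expanded to order $n^{-1}$ uniformly in $\xi$ --- in particular in the regime where one part is much smaller than the other --- while carefully tracking the dependence on $r$, since $x$ is not a polynomial in $r$. A secondary, purely organisational difficulty is ensuring that the individually divergent contributions cancel to exactly the order to which each expansion is carried out.
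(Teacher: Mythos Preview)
Your algebraic reduction is correct and is in fact the same first step the paper takes: once you recognise that the determinants $d_n(x),\bar d_n(x)$ are (up to the elementary factors $(1-x+x^2)^n 3^{-\nu}$) the specialised symplectic characters $\chi_{N+1}(1,\dots,1,z)$, your split into a $\gamma$-part and a $d$-part is exactly the identity
\[
  \mathcal F_{N_1,N_2}(x)=\mathcal F_{N_1,N_2}(1)-\ln\frac{\mathfrak X_{N+1}(z)}{\mathfrak X_{N_1+1}(z)\,\mathfrak X_{N_2+1}(z)},
  \qquad \mathfrak X_M(z)=\frac{\chi_M(1,\dots,1,z)}{\chi_M(1,\dots,1)},
\]
the $(1-x+x^2)$-powers cancelling because $n=n_1+n_2$. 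The first term is the supersymmetric LBF, whose asymptotics to order $N^{-1}$ follow by Stirling, exactly as you say.

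The genuine gap is in your plan for the second term. You propose to obtain a ``factored closed form'' for $d_n(x)$ and $\bar d_n(x)$ via Weyl-type product formulas or LGV, and then Stirling-expand it. No such product formula is available: the symplectic character $\chi_M(1,\dots,1,z)$ with one generic argument does not factor into a product of gamma functions the way $\chi_M(1,\dots,1)$ does. The large-$M$ behaviour of $\mathfrak X_M(z)$ is instead a steepest-descent result of Gorin and Panova, which gives the leading exponential and the $O(1)$ prefactor but not the $O(M^{-1})$ corrections you need. The paper closes this last gap by a different device: it shows that $(z-1)^{2M-1}(z+1)z^{-M}\mathfrak X_M(z)$ satisfies a second-order linear ODE in $z$ (derived from a contour-integral representation and a reflection symmetry of the integrand), substitutes the Gorin--Panova ansatz $\mathfrak X_M(z)=G(z)^M H(z)\bigl(1+\tau_1(z)M^{-1/2}+\tau_2(z)M^{-1}+\cdots\bigr)$ into the ODE, and reads off $\tau_1=0$ and an explicit $\tau_2$ (with a separate $\bar\tau_2$ for odd $M$). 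These subleading coefficients, together with the Stirling expansion of $\mathcal F_{N_1,N_2}(1)$, are precisely what produce $E$ and $\bar E$.

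So keep your decomposition, but replace the search for product formulas by: (i) quoting the Gorin--Panova asymptotic for $\mathfrak X_M(z)$, and (ii) deriving and solving the ODE for the subleading coefficients. Your anticipated cancellation of the $O(N^2)$ pieces then happens automatically, since the exponential factor $G(z)^M$ obeys $G^{N+1}/(G^{N_1+1}G^{N_2+1})=G^{-1}$.
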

The proof of this theorem uses the asymptotic properties of symplectic characters, obtained by Gorin and Panova \cite{gorin:15}. We provide this proof in \cref{sec:Asymptotics} and argue that the leading terms of the asymptotic series up to $O(1)$ match the CFT prediction.

In \cref{fig:Asymptotics}, we compare the leading terms of the asymptotic series and the finite-size LBF for $x=\frac12$ and $N=72,73$ graphically. The inclusion of the $O(N^{-1})$-term leads to an approximation of the exact results by the truncated asymptotic series with a satisfactory precision.
\begin{figure}[h]
 \centering
 \begin{tikzpicture}
   \draw (0,0) node {\includegraphics[width=.45\textwidth]{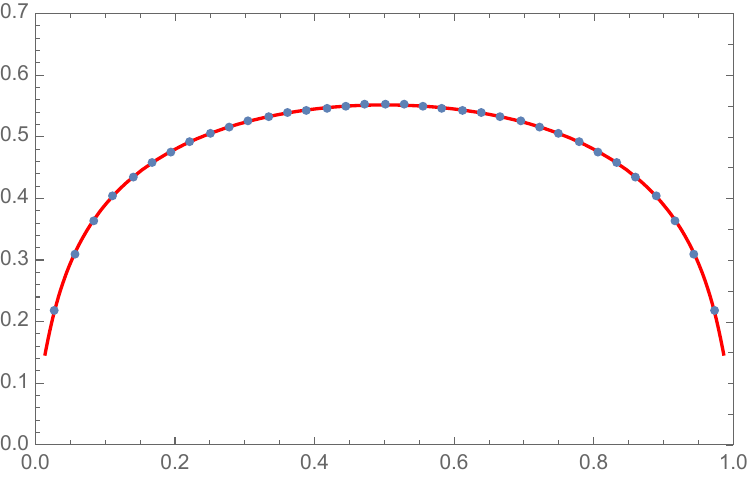}};
   \draw (8,-.015) node {\includegraphics[width=.45\textwidth]{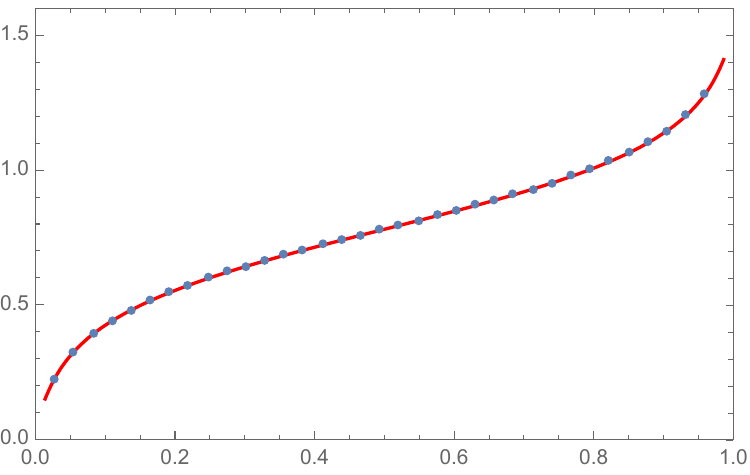}};
   \draw (-2.35,1.7) node {$\mathcal F_{N_1,N_2}$};
   \draw (5.65,1.7) node {$\mathcal F_{N_1,N_2}$};
   \draw (2.95,-1.55) node{$\xi$};
   \draw (10.95,-1.55) node{$\xi$};

 \end{tikzpicture}
 \caption{The figure shows a comparison of the LBF with $x=\frac12$ and its approximation by the leading terms of its asymptotic series. We have $N=72$ with even $N_1,N_2$ on the left panel, and $N=73$ with even $N_1$ and odd $N_2$ on the right panel. The dots correspond to the exact finite-size values as a function of $\xi = N_1/N$. The solid lines represent the asymptotic series, truncated after the $O(N^{-1})$-term and plotted as a function of $0<\xi<1$. It approximates the exact values with a precision of the order of $10^{-3}$.}
 \label{fig:Asymptotics}
 \end{figure}
 
\subsubsection{Outline}

This article is organised as follows. In \cref{sec:OverlapbqKZ}, we define and analyse a multi-parameter generalisation of the spin-chain overlap with the help of a solution to the boundary quantum Knizhnik-Zamolodchikov equations related to the six-vertex model. For a particular choice of the parameters, we compute the overlap in terms of symplectic characters. The purpose of \cref{sec:HomogeneousLimit} is to analyse the so-called homogeneous limit of the generalised overlap. This limit allows us to recover the spin-chain overlap and prove \cref{thm:MainTheorem1}. We prove \cref{thm:MainTheorem2} in \cref{sec:Asymptotics}, and use it to discuss the relation between the leading terms of the LBF's asymptotic series and CFT. \Cref{sec:Conclusion} contains our conclusions and discusses several open problems.

\section{An overlap for a solution of the boundary quantum Knizh{\-}nik-Zamo{\-}lod{\-}chikov equations}
\label{sec:OverlapbqKZ}

In this section, we consider a generalisation of the spin-chain overlap. The purpose of \cref{sec:Notations} is to set our notations and conventions, recall a few elements of the theory of the six-vertex model, and provide a short summary of several properties of a solution to the boundary quantum Knizhnik-Zamolodchikov equations associated with the six-vertex model. In \cref{sec:Reduction}, we establish a reduction relation for this solution. We introduce and study the generalised overlap in \cref{sec:Omega}. Using its properties, we find in \cref{sec:OverlapCombinatorial} an explicit expression of this overlap at the so-called combinatorial point in terms of symplectic characters.

\subsection{Notations}
\label{sec:Notations}

\subsubsection{Space of states and operators}

Throughout the following, $N\geqslant 1$ denotes an integer. The space of states of the spin chain of length $N$ is $V^N = V_1\otimes V_2 \otimes \cdots \otimes V_N$, where $V_i = \mathbb C^2$ is the local space of states of the spin at site $1\leqslant i \leqslant N$. This local space is spanned by the canonical basis vectors
\begin{equation}
  |{\uparrow}\rangle =
  \begin{pmatrix}
  1\\ 0
  \end{pmatrix},
  \quad 
  |{\downarrow}\rangle =
  \begin{pmatrix}
  0\\ 1
  \end{pmatrix}.
\end{equation}
The canonical basis vectors of $V^N$ are the tensor products
$|\alpha_1\alpha_2\cdots \alpha_N\rangle= |\alpha_1\rangle\otimes |\alpha_2\rangle \otimes \cdots \otimes |\alpha_N\rangle$, where $\alpha_1,\dots,\alpha_N\in \{\uparrow,\downarrow\}$. The components of a vector $|\psi\rangle \in V^N$ are the coefficients in its expansion along the basis vectors:
\begin{equation}
   |\psi\rangle = \sum_{\alpha_1\alpha_2\cdots \alpha_N\in \{\uparrow,\downarrow\}^N}\psi_{\alpha_1\cdots\alpha_N}|\alpha_1\alpha_2\cdots \alpha_N\rangle.
\end{equation}
For each vector $|\psi\rangle \in V^N$, we define a co-vector $\langle \psi|\in (V^N)^\ast$ by transposition, $\langle \psi| = |\psi\rangle^t$. The action of a co-vector $\langle \phi| \in (V^N)^\ast$ on a vector $|\psi\rangle\in V^N$ is

\begin{equation}
  \langle \phi|\psi\rangle = \sum_{\alpha_1\alpha_2\cdots \alpha_N\in \{\uparrow,\downarrow\}^N}\phi_{\alpha_1\alpha_2 \cdots\alpha_N}\psi_{\alpha_1\alpha_2 \cdots\alpha_N}.
\end{equation}
We refer to it as the \textit{overlap} between $|\phi\rangle$ and $|\psi\rangle$, which is different from the standard Hermitian scalar product in quantum mechanics. However, for vectors with real components, such as the spin chain's ground-state vector discussed in the introduction, it defines a real scalar product. For any such vector $|\psi\rangle$, the overlap $\langle \psi|\psi\rangle$ yields the square norm.

Let $A\in \mathrm{End}(V^N)$ be a linear operator on the spin chain's space of states. Its action on the canonical basis can be encoded in a $2^N\times 2^N$ matrix. Since we only work with the canonical basis, we do not distinguish the operator from its matrix representation. Let $1\leqslant M\leqslant N$ be an integer and $A \in \mathrm{End}(V^M)$. For each sequence $1\leqslant i_1<i_2<\cdots<i_M \leqslant N$, we use the tensor-leg notation $A_{i_1,\dots,i_M}$ for the canonical embedding of $A$ into $\mathrm{End}(V^N)$. For example, if $M=1$, we have
\begin{equation}
  A_i = \underset{i-1}{\underbrace{1 \otimes \dots \otimes 1}} \otimes A \otimes \underset{N-i}{\underbrace{1 \otimes \cdots \otimes 1}},
\end{equation}
where $1$ denotes the identity operator on $\mathbb C^2$.
In particular, we used this notation for the standard Pauli matrices
\begin{equation}
  \sigma^x =
  \begin{pmatrix}
  0 & 1\\
  1 & 0
  \end{pmatrix},
  \quad 
  \sigma^y =
  \begin{pmatrix}
  0 & -\i\\
  \i & 0
  \end{pmatrix},
  \quad
  \sigma^z =
  \begin{pmatrix}
  1 & 0\\
  0 & -1
  \end{pmatrix},
\end{equation}
in the definition of the spin-chain Hamiltonian \eqref{eqn:XXZHamiltonian}.

The Hamiltonian commutes with the magnetisation operator $\mathcal M$. On $V^N$, it is given by
\begin{equation}
  \mathcal M = \frac12\sum_{i=1}^N \sigma_i^z.
\end{equation}
The commutation implies that $H$ and $\mathcal M$ can be simultaneously diagonalised. The spectrum of $\mathcal M$ on $V^N$ is $\{-N/2,-N/2+1,\dots,N/2\}$. We say that $|\psi\rangle\in V^N$ has magnetisation $\mu$ if it is an eigenvector of $\mathcal M$ with eigenvalue $\mu$.

\subsubsection{The six-vertex model: $\bm{\check R}$-matrix and $\bm{K}$-matrix}

The $\check R$-matrix of the six-vertex model is an operator $\check R(z) \in \mathrm{End}(V^2)$ that depends on a spectral parameter $z$. It acts on the canonical basis of $V^2$ as the matrix
\begin{equation}
  \check R(z) =
  \begin{pmatrix}
    a(z) & 0 & 0 & 0\\
    0 & c(z) & b(z) & 0\\
    0 & b(z) & c(z) & 0\\
    0 & 0 & 0 & a(z)
  \end{pmatrix}.
\end{equation}
The non-zero entries of this matrix are the vertex weights of the six-vertex model \cite{baxterbook}. We parameterise them in terms of $z$ and the crossing parameter $q$, according to
\begin{equation}
  a(z) = [q z]/[q/z], \quad b(z) = [z]/[q/z], \quad c(z)=[q]/[q/z].
\end{equation}
Here, we used the abbreviation
\begin{equation} 
 [z] = z-z^{-1}.
\end{equation}
In this parameterisation, the $\check R$-matrix obeys the (braid) Yang-Baxter equation: On $V^3$, we have
\begin{equation}
   \label{eqn:BraidYBE}
   \check R_{1,2}(z/w)\check R_{2,3}(z)\check R_{1,2}(w)=\check R_{2,3}(w)\check R_{1,2}(z)\check R_{2,3}(z/w).
\end{equation}

In addition to the $\check R$-matrix, we consider a diagonal $K$-matrix for the six-vertex model. It acts on the canonical basis of $V^1$ as the matrix
\begin{equation}
  \label{eqn:DefK}
  K(z) =
  \begin{pmatrix}
    1 & 0\\
    0 & [\beta z]/[\beta/z]
  \end{pmatrix}.
\end{equation}
We sometimes write $K(z) = K(z;\beta)$ to stress the dependence on $\beta$.
The $K$-matrix is a solution of the boundary Yang-Baxter equation of the six-vertex model \cite{sklyanin:88}. On $V^2$, it can be written as
\begin{align}
  \label{eqn:bYBE}
  \check R_{1,2}(z/w)K_1(z)\check R_{1,2}(zw)K_1(w)=K_1(w)\check R_{1,2}(zw)K_1(z)\check R_{1,2}(z/w).
\end{align}
The expression \eqref{eqn:DefK} is one of the simplest solutions to this equation \cite{cherednik:84}. The general solution was found in \cite{vega:93,goshal:94}.

\subsubsection{A solution to the boundary quantum Knizhnik-Zamolodchikov equations}

The building block of our generalisation of the spin-chain overlap is a vector $|\Psi_N\rangle \in V^N$. It was constructed and studied in \cite{hagendorf:21}. Here, we summarise the properties of this vector that are central to our investigation.

The vector's magnetisation is $\mu = (\bar n-n)/2$, where
\begin{equation}
  \label{eqn:Defnnbar}
  n = \left\lfloor N/2\right\rfloor, \quad \bar n = \left\lceil N/2\right\rceil.
\end{equation}
The components of the vector that do not vanish trivially have explicit expressions in terms of multiple contour integrals, and depend on the complex numbers $z_1,\dots,z_N$ and $q,\beta$. We often write $|\Psi_N\rangle = |\Psi_N(z_1,\dots,z_N)\rangle$ to stress the dependence on $z_1,\dots,z_N$. A crucial property of the vector is that it obeys the so-called \textit{exchange relations} \cite[Proposition 3.2]{hagendorf:21}: For $N\geqslant 2$ and each $1 \leqslant i \leqslant N-1$, we have
\begin{equation}
  \label{eqn:Exchange}
  \check R_{i,i+1}(z_i/z_{i+1})|\Psi_N(\dots,z_i,z_{i+1},\dots)\rangle = |\Psi_N(\dots,z_{i+1},z_i,\dots)\rangle.
\end{equation}
These relations are compatible thanks to the Yang-Baxter equation \eqref{eqn:BraidYBE}. They lead to a system of linear equations for the vector's components. These equations imply that all its non-identically vanishing components can be inferred from the sole knowledge of the following factorised component:
\begin{equation}
  \label{eqn:SimpleComponent}
  (\Psi_N)_{\underset{\scriptstyle n}{\underbrace{\scriptstyle\downarrow\cdots \downarrow}}\underset{\scriptstyle \bar n}{\underbrace{\scriptstyle \uparrow \cdots \uparrow}}} = \prod_{i=1}^n [\beta z_i]\prod_{1\leqslant i<j\leqslant n}[qz_j/z_i][qz_iz_j]\prod_{n+1\leqslant i < j \leqslant N}[q z_j/z_i][q^2 z_i z_j].
\end{equation}
This expression follows from the component's multiple contour-integral formula \cite[Proposition 3.1]{hagendorf:21}. In addition to the exchange relations, the vector $|\Psi_N\rangle$ also obeys two \textit{reflection relations} \cite[Proposition 3.7]{hagendorf:21}. They express the local action of the $K$-matrix on the vector:
\begin{subequations}
\label{eqn:Reflection}
\begin{align}
  K_1(z_1^{-1};\beta)|\Psi_N(z_1,\dots,z_N)\rangle &= |\Psi_N(z_1^{-1},\dots,z_N)\rangle,\\
   K_N(sz_N;sq^{-1}\beta^{-1})|\Psi_N(z_1,\dots,z_N)\rangle &= |\Psi_N(z_1,\dots,s^{-2}z_N^{-1})\rangle.
\end{align}
\end{subequations}%
Here, the parameter $s$ satisfies $s^2 = q^3$. (It is possible to work with the less restrictive condition $s^4=q^6$ \cite{hagendorf:21}, but this is not necessary for the following.) The exchange and reflection relations are compatible thanks to the boundary Yang-Baxter equation \eqref{eqn:bYBE}. Moreover, the combination of \eqref{eqn:Exchange} and \eqref{eqn:Reflection} implies that $|\Psi_N\rangle$ is a solution to the boundary quantum Knizhnik-Zamolodchikov equations \cite{cherednik:92}.

Using the exchange relations \eqref{eqn:Exchange} and the simple component \eqref{eqn:SimpleComponent}, it is straightforward to construct the components of the vector $|\Psi_N\rangle$ that do not vanish identically for small values of $N$. For $N=1$, the only such component is
\begin{equation}
  (\Psi_1)_{\uparrow} =1.
\end{equation}
For $N=2$, we find
\begin{equation}
  (\Psi_2)_{\downarrow\uparrow} = [\beta z_1], \quad (\Psi_2)_{\uparrow\downarrow} = -[q\beta z_2].
\end{equation}
For $N=3$, we have
\begin{equation}
  (\Psi_3)_{\downarrow\uparrow\uparrow} = [\beta z_1][qz_3/z_2][q^2z_2z_3], \quad (\Psi_3)_{\uparrow\uparrow\downarrow} = [q\beta z_3][qz_2/z_1][qz_1z_2],
\end{equation}
and
\begin{equation}
    (\Psi_3)_{\uparrow\downarrow\uparrow} = \frac{[q][\beta z_1][qz_3/z_2][q^2z_2z_3]-[\beta z_2][qz_2/z_1][qz_3/z_1][q^2z_1z_3]}{[z_2/z_1]}.
\end{equation}
Despite its appearance, this last component actually is a Laurent polynomial in $z_1,z_2,z_3$. This property generalises to arbitrary $N$: Each component of $|\Psi_N\rangle$ is a Laurent polynomial in $z_i$ for each $1\leqslant i \leqslant N$\cite[Propositions 3.11, 3.12]{hagendorf:21}.

\subsection{Reduction relations}
\label{sec:Reduction}

We now show that the vector $|\Psi_N\rangle$ simplifies if $z_{i+1}=q^{-1}z_i$ for some $1\leqslant i \leqslant N-1$. An example of such a simplification occurs for $N=2$ and $i=1$, where the specialisation essentially yields a singlet $|\zeta\rangle = |{\uparrow\downarrow}\rangle - |{\downarrow\uparrow}\rangle$:
\begin{equation}
  \label{eqn:ReductionTwoSites}
  |\Psi_2(z_1,z_2=q^{-1}z_1)\rangle = -[\beta z_1]|\zeta\rangle.
\end{equation}
For $N\geqslant 3$, a similar phenomenon occurs. Upon specialisation, the vector $|\Psi_N\rangle$ simplifies to an expression involving $|\zeta\rangle$ and $|\Psi_{N-2}\rangle$. We call the resulting relation between $|\Psi_N\rangle$ and $|\Psi_{N-2}\rangle$ a \textit{reduction relation}.

To write the reduction relation for $N\geqslant 3$, we introduce the linear mappings $\Xi_N^i:V^{N-2}\to V^N$, $1\leqslant i \leqslant N-1$. They are defined by their action on the basis vectors of $V^{N-2}$: For $\alpha_1,\dots,\alpha_{N-2}\in \{\uparrow,\downarrow\}$, we have
\begin{equation}
  \label{eqn:DefXi}
  \Xi_N^i|\alpha_1\cdots \alpha_{i-1}\alpha_i\cdots \alpha_{N-2}\rangle = |\alpha_1\cdots \alpha_{i-1}\rangle \otimes |\zeta\rangle \otimes |\alpha_i\cdots \alpha_{N-2}\rangle.
\end{equation}
We note that $\Xi_N^{i}$ is injective. In the following, we need to evaluate the action of several $\check R$-matrices on $\Xi_N^{i}$. To this end, it is convenient to define the abbreviations
\begin{equation}
  \mathbb R_{i}(z,w) = \check R_{i,i+1}(z)\check R_{i-1,i}(w), \quad \bar{\mathbb R}_{i}(z,w) =\check R_{i-1,i}(z) \check R_{i,i+1}(w).
\end{equation}

\begin{lemma}
\label{lem:ActionRXi}
  For $2\leqslant i \leqslant N-1$, we have
  \begin{equation}
  \label{eqn:ActionRXi}
    \mathbb R_i(qz,z)\Xi^i_N = -\frac{[q^2 z]}{[q/z]}\Xi^{i-1}_N, \quad \bar{\mathbb R}_i(qz,z)\Xi^{i-1}_N= -\frac{[q^2 z]}{[q/z]}\Xi^i_N.
  \end{equation}
\end{lemma}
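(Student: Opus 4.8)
The plan is to reduce \cref{lem:ActionRXi} to a purely local identity on three neighbouring sites and to check that identity directly from the vertex weights. The first observation is that the operators $\mathbb R_i(qz,z)=\check R_{i,i+1}(qz)\check R_{i-1,i}(z)$ and $\bar{\mathbb R}_i(qz,z)=\check R_{i-1,i}(qz)\check R_{i,i+1}(z)$ act as the identity outside the sites $i-1,i,i+1$, whereas for every basis vector $|\alpha_1\cdots\alpha_{N-2}\rangle$ of $V^{N-2}$ the vectors $\Xi_N^i|\alpha_1\cdots\alpha_{N-2}\rangle$ and $\Xi_N^{i-1}|\alpha_1\cdots\alpha_{N-2}\rangle$ agree on all sites outside $i-1,i,i+1$ and carry $|\alpha_{i-1}\rangle\otimes|s\rangle$, respectively $|s\rangle\otimes|\alpha_{i-1}\rangle$, on those three sites. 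Relabelling $i-1,i,i+1$ by $1,2,3$, the lemma is therefore equivalent to the two identities on $V^3$
\begin{align*}
  \check R_{2,3}(qz)\,\check R_{1,2}(z)\bigl(|\alpha\rangle_1\otimes|s\rangle_{2,3}\bigr)&=-\frac{[q^2z]}{[q/z]}\,|s\rangle_{1,2}\otimes|\alpha\rangle_3,\\
  \check R_{1,2}(qz)\,\check R_{2,3}(z)\bigl(|s\rangle_{1,2}\otimes|\alpha\rangle_3\bigr)&=-\frac{[q^2z]}{[q/z]}\,|\alpha\rangle_1\otimes|s\rangle_{2,3},
\end{align*}
to be established for $\alpha\in\{\uparrow,\downarrow\}$.

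Next, I would prove the first of these identities for $\alpha=\uparrow$ by brute force. Expanding $|{\uparrow}\rangle_1\otimes|s\rangle_{2,3}=|{\uparrow\uparrow\downarrow}\rangle-|{\uparrow\downarrow\uparrow}\rangle$ and applying first $\check R_{1,2}(z)$ and then $\check R_{2,3}(qz)$ with the explicit matrix of $\check R$, one is left with a combination of $|{\uparrow\uparrow\downarrow}\rangle$, $|{\uparrow\downarrow\uparrow}\rangle$ and $|{\downarrow\uparrow\uparrow}\rangle$ whose coefficients are products of two weights evaluated at $z$ and $qz$. The computation then hinges on the elementary relations $a(qz)=-[q^2z]/[z]$, $b(qz)=-[qz]/[z]$ and $c(qz)=-[q]/[z]$ (consequences of $[1/z]=-[z]$) together with the quadratic weight identity $[qz]^2-[q]^2=[z]\,[q^2z]$, which one verifies at once by expanding both sides in powers of $q$ and $z$. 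With these, the coefficient of $|{\uparrow\uparrow\downarrow}\rangle$ vanishes, and those of $|{\uparrow\downarrow\uparrow}\rangle$ and $|{\downarrow\uparrow\uparrow}\rangle$ reduce to $-[q^2z]/[q/z]$ and $+[q^2z]/[q/z]$; the result is exactly $-\frac{[q^2z]}{[q/z]}\,|s\rangle_{1,2}\otimes|{\uparrow}\rangle_3$.

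The remaining cases I would obtain by symmetry rather than by repeating this computation. Since $\check R(z)$ commutes with $\sigma^x\otimes\sigma^x$ while $|s\rangle$ is odd under the simultaneous spin flip, conjugating the $\alpha=\uparrow$ identity by $\sigma^x_1\sigma^x_2\sigma^x_3$ produces the case $\alpha=\downarrow$. For the second three-site identity, I would use the left--right reversal $P$ on $V^3$, $P|\alpha\beta\gamma\rangle=|\gamma\beta\alpha\rangle$, which satisfies $P\,\check R_{1,2}(z)\,P=\check R_{2,3}(z)$ and maps $|\alpha\rangle_1\otimes|s\rangle_{2,3}$ to $-|s\rangle_{1,2}\otimes|\alpha\rangle_3$; conjugating the first identity by $P$ then yields the second. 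Reinserting the two local identities into $\Xi_N^i$ and $\Xi_N^{i-1}$, as in the first step, gives \eqref{eqn:ActionRXi}.

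I expect the only genuinely computational ingredient to be the verification for $\alpha=\uparrow$; it takes a few lines, and its sole nontrivial input is the weight identity $[qz]^2-[q]^2=[z][q^2z]$. I do not anticipate a conceptual obstacle: the point that needs care is the bookkeeping of the three site labels and of the sign picked up by $|s\rangle$ under the reversal $P$.
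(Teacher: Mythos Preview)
Your proof is correct. The local reduction to three sites is valid, the explicit check for $\alpha=\uparrow$ goes through exactly as you describe (the identity $[qz]^2-[q]^2=[z][q^2z]$ is indeed the only nontrivial input), and the two symmetry arguments are sound: $\check R(z)$ commutes with $\sigma^x\otimes\sigma^x$ and with the permutation of the two tensor factors, so both the spin-flip and the left--right reversal conjugations work as stated.

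The paper proceeds differently. Rather than isolating one spin state and computing by brute force, it uses the operator identity
\[
  \mathbb R_i(qz,z)\,\check R_{i,i+1}(q^{-1})=\frac{[q^2z]}{[q/z]}\,P_{i-1,i+1}\,\check R_{i,i+1}(q^{-1}),
\]
which is a fusion-type relation between $\check R$-matrices. Since $\check R(q^{-1})$ is proportional to the singlet projector, $\check R_{i,i+1}(q^{-1})\Xi_N^i$ is a scalar multiple of $\Xi_N^i$, and applying the identity together with $P_{i-1,i+1}\Xi_N^i=-\Xi_N^{i-1}$ gives the result in one stroke. Your approach trades this structural observation for an explicit matrix computation followed by symmetry reductions; it is more elementary and entirely self-contained, at the cost of hiding the underlying singlet-projector mechanism. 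Either route is perfectly adequate for a lemma of this type.
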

\begin{proof}
Let $P$ be the linear operator on $V^2$ defined through the action $P|\alpha_1\alpha_2\rangle = |\alpha_2\alpha_1\rangle$, for all $\alpha_1,\alpha_2\in \{\uparrow,\downarrow\}$.
It is straightforward to check the identity
\begin{equation}
  \mathbb R_i(qz,z)\check R_{i,i+1}(q^{-1})= \frac{[q^2 z]}{[q/z]}P_{i-1,i+1}\check R_{i,i+1}(q^{-1}).
\end{equation}
We apply both sides of this identity to $\Xi_N^{i}$ and use $\check R_{i,i+1}(q^{-1})\Xi_N^{i}=(2[q]/[q^2])\Xi_N^{i}$. This application leads to
\begin{equation}
   \mathbb R_i(qz,z)\Xi_N^{i}= \frac{[q^2 z]}{[q/z]}P_{i-1,i+1}\Xi_N^{i}.\end{equation}
Using \eqref{eqn:DefXi}, we have $P_{i-1,i+1}\Xi_N^{i}=-\Xi_N^{i-1}$ and obtain, thus, the first equality of \eqref{eqn:ActionRXi}. The proof of the second equality is similar.
\end{proof}

We now use \cref{lem:ActionRXi} to find the reduction relations for $N\geqslant 3$.
\begin{proposition}
  \label{prop:ReductionPsi}
  For $N\geqslant 3$ and $1\leqslant i\leqslant N-1$, we have
  \begin{multline}
     |\Psi_N(\dots,z_{i-1},z_i,z_{i+1}=q^{-1}z_i,z_{i+2},\dots)\rangle= (-1)^{n+i+1}[\beta z_i]\prod_{j=1}^{i-1}[qz_i/z_j][qz_iz_j]\\
     \times
    \prod_{j=i+2}^{N}[q^2z_j/z_i][qz_iz_j]\,\Xi_{N}^{i}|\Psi_{N-2}(\dots,z_{i-1},z_{i+2},\dots)\rangle,
    \label{eqn:Reduction}
  \end{multline}
  where $n$ is defined in \eqref{eqn:Defnnbar}.
\end{proposition}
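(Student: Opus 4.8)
The plan is, first, to show that under the specialisation the vector lies in the image of $\Xi_N^i$; second, to identify the resulting $|\phi_i\rangle$ up to a scalar for one value of $i$ by the uniqueness of the solution of the exchange relations; third, to fix the relative normalisations for all $i$ by means of \cref{lem:ActionRXi}; and fourth, to determine the remaining overall constant from a single component of \eqref{eqn:SimpleComponent}.

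The first point rests on the fact that $\check R(q^{-1})$ is proportional to the projector onto the singlet: since $a(q^{-1})=0$ and $b(q^{-1})=-c(q^{-1})=-[q]/[q^2]$, one has $\check R_{i,i+1}(q^{-1})=\tfrac{[q]}{[q^2]}\,|s\rangle\langle s|_{i,i+1}$. Hence, evaluating the exchange relation \eqref{eqn:Exchange} at $z_i/z_{i+1}=q^{-1}$ and relabelling the spectral parameter at site $i$, we obtain that $|\Psi_N(\dots,z_{i-1},z_i,q^{-1}z_i,z_{i+2},\dots)\rangle$ lies in the image of $\Xi_N^i$ for arbitrary values of the remaining parameters. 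By injectivity of $\Xi_N^i$, there is a unique $|\phi_i\rangle\in V^{N-2}$ with
\begin{equation*}
  |\Psi_N(\dots,z_{i-1},z_i,q^{-1}z_i,z_{i+2},\dots)\rangle=\Xi_N^i\,|\phi_i\rangle ,
\end{equation*}
and the proposition amounts to identifying $|\phi_i\rangle$ with the right-hand side of \eqref{eqn:Reduction}.

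For $i=1$ the sites $3,\dots,N$ not occupied by the singlet form a single block, so $\Xi_N^1$ intertwines $\check R_{k,k+1}$ on $V^{N-2}$ with $\check R_{k+2,k+3}$ on $V^N$; combined with the exchange relations of $|\Psi_N\rangle$ at the pairs $(k+2,k+3)$, $1\leqslant k\leqslant N-3$, and the injectivity of $\Xi_N^1$, this forces $|\phi_1\rangle$ to satisfy all the exchange relations of a length-$(N-2)$ solution in the variables $z_3,\dots,z_N$, with $z_1$ appearing only as an inert parameter, and to carry the magnetisation of $|\Psi_{N-2}\rangle$. By the uniqueness of such a solution with the appropriate analyticity \cite{hagendorf:21}, $|\phi_1\rangle=\lambda\,|\Psi_{N-2}(z_3,\dots,z_N)\rangle$ for some scalar $\lambda$. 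To propagate from $i-1$ to $i$, I would act with $\bar{\mathbb R}_i$ at the appropriate pair of arguments on $\Xi_N^{i-1}|\phi_{i-1}\rangle$: its two factors move the frozen pair $(z,q^{-1}z)$ from sites $(i-1,i)$ to sites $(i,i+1)$ via \eqref{eqn:Exchange}, while on the singlet $\bar{\mathbb R}_i(qz,z)\Xi_N^{i-1}=-\tfrac{[q^2z]}{[q/z]}\Xi_N^i$. Comparing the two outcomes (using injectivity of $\Xi_N^i$) shows that if $|\phi_{i-1}\rangle$ is proportional to $|\Psi_{N-2}\rangle$ then so is $|\phi_i\rangle$, the ratio of the proportionality constants being $-[q^2z]/[q/z]$; a direct computation with the product formulas shows that this is exactly the ratio of the prefactors of \eqref{eqn:Reduction} for the indices $i$ and $i-1$. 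Iterating, $|\phi_i\rangle$ is, for every $1\leqslant i\leqslant N-1$, proportional to $|\Psi_{N-2}\rangle$ with proportionality constant equal to the prefactor of \eqref{eqn:Reduction} times a single $i$-independent constant.

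It remains to check that this constant equals $1$, which I would do at $i=n=\lfloor N/2\rfloor$. In the configuration $\downarrow\cdots\downarrow\uparrow\cdots\uparrow$ underlying \eqref{eqn:SimpleComponent}, sites $n$ and $n+1$ carry $\downarrow$ and $\uparrow$, which is the support of $|s\rangle$; picking out the $|{\downarrow\uparrow}\rangle$-summand of $|s\rangle$ gives $(\Psi_N)_{\downarrow\cdots\downarrow\uparrow\cdots\uparrow}=-(\phi_n)_{\downarrow\cdots\downarrow\uparrow\cdots\uparrow}$ at $z_{n+1}=q^{-1}z_n$. Substituting $z_{n+1}=q^{-1}z_n$ into \eqref{eqn:SimpleComponent}, the factors of type $[qz_{n+1}/z_n]$ collapse and the result factorises precisely as the prefactor of \eqref{eqn:Reduction} for $i=n$ times the right-hand side of \eqref{eqn:SimpleComponent} for $N-2$ sites in the variables $z_1,\dots,z_{n-1},z_{n+2},\dots,z_N$, i.e. the simple component of $|\Psi_{N-2}\rangle$. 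Comparison with the proportionality $|\phi_n\rangle\propto|\Psi_{N-2}\rangle$ then forces the $i$-independent constant to be $1$, so \eqref{eqn:Reduction} follows for all $i$. I expect the propagation step to be the main obstacle: one has to keep careful track of signs, of the precise arguments at which the $\check R$-matrices and $\bar{\mathbb R}_i$ are evaluated, and of the rearrangement of the products in \eqref{eqn:Reduction} when the frozen pair is moved past a neighbouring spectral parameter; the invocation of the uniqueness theorem also calls for a brief check that $|\phi_1\rangle$ meets the relevant degree bounds.
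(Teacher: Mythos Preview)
Your approach is correct and constitutes a genuinely different route from the paper's. The paper establishes the base case at $i=n$: there the simple component of $|\Phi\rangle$ can be read off directly from \eqref{eqn:SimpleComponent}, fixing the normalisation at once, but one must verify the single exchange relation that ``crosses'' the singlet (the paper's \eqref{eqn:ExchangeDifficult}), which is the technical heart of their Step~1. Only then do they propagate to arbitrary $i$ via \cref{lem:ActionRXi}. You instead take the base case at $i=1$, where the singlet sits at the boundary and every exchange relation for $|\phi_1\rangle$ is inherited trivially from an adjacent pair $(k+2,k+3)$ in $V^N$; this sidesteps the cross-singlet relation entirely. The price is that the simple component of $|\phi_1\rangle$ corresponds to $(\Psi_N)_{\uparrow\downarrow\,\downarrow^{n-1}\uparrow^{\bar n-1}}$, which is not given by \eqref{eqn:SimpleComponent}, so the normalisation is deferred to a separate check at $i=n$. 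Both arguments use the same toolbox (the uniqueness statement from \cite{hagendorf:21}, \cref{lem:ActionRXi}, and \eqref{eqn:SimpleComponent}); they differ in where the hard step is paid for.

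Two small points to tighten in your write-up. First, the ``scalar $\lambda$'' produced by uniqueness at $i=1$ is, a priori, a rational function of $z_1$ and $z_3,\dots,z_N$; only after your propagation and the match at $i=n$ does it become the explicit product in \eqref{eqn:Reduction}. Second, the propagation moves the frozen pair past its right neighbour, so the arguments of $|\Psi_{N-2}\rangle$ are permuted as well; your recursion actually reads
\[
  C_i(z_1,\dots,z_{i-1},z_i;z_{i+2},\dots,z_N)=-\frac{[qz_i/z_{i-1}]}{[q^2z_{i-1}/z_i]}\,C_{i-1}(z_1,\dots,z_{i-2},z_i;z_{i-1},z_{i+2},\dots,z_N),
\]
which is precisely the ratio of the prefactors in \eqref{eqn:Reduction} once the relabelling is accounted for. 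Your remark about degree bounds is not needed: the uniqueness used here is the algebraic one (all components determined from one via the exchange relations), which holds over rational functions without any polynomiality hypothesis.
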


\begin{proof}
The proof consists of two steps. In step 1, we prove \eqref{eqn:Reduction} for $i=n$. In step 2, we extend it to all $1\leqslant i\leqslant N-1$.

\medskip

\noindent \textit{Step 1:} For $i=n$, the exchange relations \eqref{eqn:Exchange} imply
\begin{equation}
 |\Psi_{N}(\dots,z_n,z_{n+1}=q^{-1}z_n,\dots)\rangle= \check R_{n,n+1}(q^{-1})|\Psi_N(\dots,q^{-1}z_n,z_{n},\dots)\rangle.
\end{equation}
The $\check R$-matrix on the right-hand side is given by $\check R(q^{-1}) = ([q]/[q^2])|\zeta\rangle\langle \zeta|$. Hence, there is a vector $|\Phi\rangle = |\Phi(z_n;z_1,\dots,z_{n-1},z_{n+2},\dots,z_N)\rangle \in V^{N-2}$ of magnetisation $(\bar n-n)/2$, where $\bar n$ is defined in \eqref{eqn:Defnnbar}, such that\footnote{Including the pre-factor on the right-hand side of the equality turns out to be convenient in the following.}
\begin{multline}
  |\Psi_{N}(\dots,z_{n-1},z_n,z_{n+1}=q^{-1}z_n,z_{n+1}\dots)\rangle\\ = - [\beta z_n]\prod_{j=1}^{n-1}[qz_n/z_j][qz_nz_j]
    \prod_{j=n+2}^{N}[q^2z_j/z_n][qz_nz_j]\Xi_{N}^{n}|\Phi(z_n;\dots,z_{n-1},z_{n+2},\dots)\rangle.
    \label{eqn:ReductionIntermediate}
\end{multline}
We note that $|\Phi\rangle$ is unique because $\Xi_N^n$ is injective.

To find $|\Phi\rangle$, we establish two properties of this vector. First, it has the simple component 
\begin{equation}
  \Phi_{\underset{\scriptstyle n-1}{\underbrace{\scriptstyle\downarrow\cdots \downarrow}}\underset{\scriptstyle \bar n-1}{\underbrace{\scriptstyle \uparrow \cdots \uparrow}}}(z_n;z_1,\dots,z_{n-1},z_{n+2},\dots,z_N) = (\Psi_{N-2})_{\underset{\scriptstyle n-1}{\underbrace{\scriptstyle\downarrow\cdots \downarrow}}\underset{\scriptstyle \bar n-1}{\underbrace{\scriptstyle \uparrow \cdots \uparrow}}}(z_1,\dots,z_{n-1},z_{n+2},\dots,z_N).
\end{equation}
This equality straightforwardly follows from \eqref{eqn:SimpleComponent} and \eqref{eqn:ReductionIntermediate}. Second, $|\Phi\rangle$ obeys the same exchange relations as $|\Psi_{N-2}(z_1,\dots,z_{n-1},z_{n+2},\dots,z_N)\rangle$. Indeed, combining \eqref{eqn:Exchange} with \eqref{eqn:ReductionIntermediate}, we find
\begin{align}
   &\check R_{i,i+1}\left(\frac{z_i}{z_{i+1}}\right)|\Phi(\dots,z_i,z_{i+1},\dots)\rangle =\,|\Phi(\dots,z_{i+1},z_i,\dots)\rangle,\quad 1\leqslant i \leqslant n-2,\label{eqn:ExchangeEasy1}\\
 & \check R_{i,i+1}\left(\frac{z_{i+2}}{z_{i+3}}\right)|\Phi(\dots,z_{i+2},z_{i+3},\dots)\rangle = \,|\Phi(\dots,z_{i+3},z_{i+2},\dots)\rangle,\quad n\leqslant i \leqslant N-2.\label{eqn:ExchangeEasy2}
\end{align}
Moreover, we claim that the following exchange relation holds:
\begin{equation}
  \label{eqn:ExchangeDifficult}
   \check R_{n-1,n}(z_{n-1}/z_{n+2})|\Phi(\dots,z_{n-1},z_{n+2},\dots)\rangle = |\Phi(\dots,z_{n+2},z_{n-1},\dots)\rangle. 
\end{equation}
To see this, we use \eqref{eqn:Exchange} to write
\begin{multline}
  \mathbb R_{n+1}\left(\frac{z_{n-1}}{z_{n+1}},\frac{z_{n-1}}{z_n}\right)
  \check R_{n-1,n}\left(\frac{z_{n-1}}{z_{n+2}}\right)\bar{\mathbb R}_{n+1}\left(\frac{z_{n}}{z_{n+2}},\frac{z_{n+1}}{z_{n+2}}\right)\\
  \times |\Psi_N(\dots,z_{n-1},z_n,z_{n+1},z_{n+2},\dots)\rangle
  = |\Psi_N(\dots,z_{n+2},z_n,z_{n+1},z_{n-1},\dots)\rangle.
\end{multline}
We now set $z_{n+1}=q^{-1} z_n$. Using \eqref{eqn:ReductionIntermediate}, we find
\begin{multline}
    \mathbb R_{n+1}\left(\frac{q z_{n-1}}{z_{n}},\frac{z_{n-1}}{z_n}\right)
  \check R_{n-1,n}\left(\frac{z_{n-1}}{z_{n+2}}\right)\bar{\mathbb R}_{n+1}\left(\frac{z_{n}}{z_{n+2}},\frac{q^{-1}z_n}{z_{n+2}}\right)
  \Xi_{N}^n|\Phi(\dots,z_{n-1},z_{n+2},\dots)\rangle\\
  = \frac{[qz_{n}/z_{n+2}][q^2z_{n-1}/z_n]}{[qz_{n}/z_{n-1}][q^2z_{n+2}/z_n]}\Xi_{N}^n|\Phi(\dots,z_{n+2},z_{n-1},\dots)\rangle.
\end{multline}
Next, we apply \cref{lem:ActionRXi} and simplify this equality to 
\begin{multline}
    \mathbb R_{n+1}\left(\frac{q z_{n-1}}{z_{n}},\frac{z_{n-1}}{z_n}\right)
  \check R_{n-1,n}\left(\frac{z_{n-1}}{z_{n+2}}\right)
  \Xi_{N}^{n+1}|\Phi(\dots,z_{n-1},z_{n+2},\dots)\rangle\\
  = -\frac{[q^2z_{n-1}/z_n]}{[qz_{n}/z_{n-1}]}\Xi_{N}^n|\Phi(\dots,z_{n+2},z_{n-1},\dots)\rangle.
\end{multline}
The operators $\check R_{n-1,n}\left({z_{n-1}}/{z_{n+2}}\right)$ and $\Xi_{N}^{n+1}$ on the right-hand side of this equality commute, which straightforwardly follows from \eqref{eqn:DefXi}. Moreover, the action of $\mathbb R_{n+1}\left(q z_{n-1}/z_{n},z_{n-1}/z_n\right)$ onto $\Xi_{N}^{n+1}$ follows from \cref{lem:ActionRXi}. We combine these two observations and find
\begin{equation}
  \Xi_{N}^{n}\check R_{n-1,n}\left(\frac{z_{n-1}}{z_{n+2}}\right)|\Phi(\dots,z_{n-1},z_{n+2},\dots)\rangle
  = \Xi_{N}^n|\Phi(\dots,z_{n+2},z_{n-1},\dots)\rangle.
\end{equation}
Since $\Xi_{N}^{n}$ is injective, we conclude that \eqref{eqn:ExchangeDifficult} holds.

In conclusion, $|\Phi(z_n;z_1,\dots,z_{n-1},z_{n+2},\dots,z_N)\rangle$ and $|\Psi_{N-2}(z_1,\dots,z_{n-1},z_{n+2},\dots,z_N)$ have the same magnetisation, one common non-vanishing component, and obey the same exchange relations. It follows from \cite[Proposition 3.5]{hagendorf:21} that the two vectors are equal. This proves \eqref{eqn:Reduction} for $i=n$.

\medskip

\noindent \textit{Step 2:} Let us suppose that $i<n$. We use \eqref{eqn:Exchange} to write
\begin{multline}
 |\Psi_N(\dots,z_{i},z_{i+1},\dots)\rangle\\= \prod_{j=i+2}^{n+1}{\mathbb R}_{j-1}( z_{j}/z_{i+1},z_{j}/{z_i})|\Psi_{N}(\dots,z_{i-1},z_{i+2},\dots,z_{n+1},z_i,z_{i+1},z_{n+2},\dots)\rangle.
\end{multline}
The $n$-th and $(n+1)$-th arguments of the vector on the right-hand side are $z_i$ and $z_{i+1}$, respectively. We now set $z_{i+1}=q^{-1}z_i$, and apply the reduction relation obtained in step 1. It yields
\begin{multline}
  \label{eqn:ReductionIntermediate2}
  |\Psi_N(\dots,z_{i},z_{i+1}=q^{-1}z_i,\dots)\rangle=-[\beta z_i] \prod_{j=1}^{i-1}[qz_i/z_j][qz_iz_j]\prod_{j=i+2}^{n+1}[qz_i/z_j][qz_iz_j]\\
  \times\prod_{j=n+2}^N[q^2 z_j/z_i][qz_iz_j] \prod_{j=i+2}^{n+1}\mathbb R_{j-1}(q z_{j}/z_{i},z_{j}/z_{i})\Xi^n_{N}|\Psi_{N-2}(\dots,z_{i-1},z_{i+2},\dots)\rangle.
\end{multline}
By \cref{lem:ActionRXi}, we have
\begin{equation}
 \prod_{j=i+2}^{n+1}\mathbb R_{j-1}(q z_{j}/z_{i},z_{j}/z_{i})\Xi^n_{N} =(-1)^{n+i}\left(\prod_{j=i+2}^{n+1}\frac{[q^2z_{j}/z_i]}{[q z_{i}/z_{j}]}\right)\Xi^i_N,
\end{equation}
for each $i+1\leqslant j \leqslant n$. Using this relation in \eqref{eqn:ReductionIntermediate2}, we obtain \eqref{eqn:Reduction} for $i<n$. The proof for $i>n$ is similar.
\end{proof}

\subsection{The overlap}
\label{sec:Omega}

Throughout this section, $N_1,N_2\geqslant 0$ are integers and $N=N_1+N_2$, unless stated otherwise. For each $N_1,N_2\geqslant 1$, we define the generalised overlap
\begin{multline}
  \label{eqn:DefOmega}
  \Omega_{N_1,N_2} = \langle \Psi_N(z_1^{-1},\dots,z_{N_1}^{-1},q^{-3}z_{N_1+1}^{-1},\dots,q^{-3}z_N^{-1})|\\
  \times \left(|\Psi_{N_1}(z_1,\dots,z_{N_1})\rangle\otimes |\Psi_{N_2}(z_{N_1+1},\dots,z_N)\rangle\right).
\end{multline}
By convention, if $N_1=0$ or $N_2=0$, we replace the corresponding tensor factor by the scalar factor $1$. This convention implies
\begin{align}
   \Omega_{0,N}& =\langle \Psi_N(q^{-3}z_1^{-1},\dots,q^{-3}z_{N}^{-1})|
  \Psi_{N}(z_1,\dots,z_{N})\rangle,\\
  \Omega_{N,0}& =\langle \Psi_N(z_1^{-1},\dots,z_{N}^{-1})|
  \Psi_{N}(z_1,\dots,z_{N})\rangle,
\end{align}
and $\Omega_{0,0}=1$.

By construction, $\Omega_{N_1,N_2}$ is a Laurent polynomial in $z_1,\dots,z_N$. It identically vanishes if both $N_1$ and $N_2$ are odd, because of the magnetisation of the vectors that we use to compute the overlap. For the other parities of $N_1,N_2$, we expect $\Omega_{N_1,N_2}$ to be a non-trivial function of $z_1,\dots,z_N$. In the following, we characterise it as a function of these variables. We often write $\Omega_{N_1,N_2}=\Omega_{N_1,N_2}(z_1,\dots,z_N)$ to stress its dependence on them.

\subsubsection{Symmetry}

\begin{lemma}
\label{lem:Evenness}
  We have
  \begin{equation}
    \Omega_{N_1,N_2}(\dots,-z_i,\dots) = \Omega_{N_1,N_2}(\dots,z_i,\dots),
  \end{equation}
  for each $1\leqslant i \leqslant N$.
\end{lemma}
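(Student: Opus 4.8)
The plan is to prove the evenness of $\Omega_{N_1,N_2}$ in each variable $z_i$ by exploiting the reflection relations \eqref{eqn:Reflection} for the vector $|\Psi_N\rangle$, together with the explicit diagonal form of the $K$-matrix. The key observation is that sending $z_i\to -z_i$ is essentially the same as multiplying by $-1$, and $K(z)$ depends on $z$ only through $[\beta z]/[\beta/z]$ and through even powers in its off-diagonal structure — more precisely, $K(-z;\beta)=K(z;\beta)$ since $[\beta(-z)]=-[\beta z]$ and $[\beta/(-z)]=-[\beta/z]$, so the ratio is unchanged. I would treat three cases according to whether the site $i$ lies in the ``bra'' chain of length $N$ on the left boundary, in the interior, or at the right boundary, but the cleanest route is to distinguish $i\le N_1$ from $i>N_1$ and within each case separate the boundary sites $i=1$, $i=N_1$ (or $i=N_1+1$, $i=N$) from the bulk.

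First I would handle the bulk sites. If $1<i<N_1$, then $z_i$ and $z_{i+1}$ are neighbouring arguments of $|\Psi_{N_1}\rangle$ and of $|\Psi_N\rangle$ (after inversion); using the exchange relation \eqref{eqn:Exchange} one can move $z_i$ to a boundary position, apply the reflection relation there, and move it back — but a slicker argument is available. Actually the most economical approach: the map $z_i\mapsto -z_i$ acts on $\Omega_{N_1,N_2}$; I would show each of the three tensor factors transforms in a controlled way. For $i\le N_1$, the factor $|\Psi_{N_1}(z_1,\dots,z_{N_1})\rangle$ has $z_i$ as an argument, and the factor $\langle\Psi_N(z_1^{-1},\dots)|$ has $z_i^{-1}$ as its $i$-th argument. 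So I would use the reflection relations to relate $|\Psi_{N_1}(\dots,z_i,\dots)\rangle$ and $|\Psi_N(\dots,z_i^{-1},\dots)|$ to their counterparts with $z_i\to -z_i$. Concretely, for the left-boundary site $i=1$, the first relation in \eqref{eqn:Reflection} gives $K_1(z_1^{-1};\beta)|\Psi_N(z_1,\dots)\rangle=|\Psi_N(z_1^{-1},\dots)\rangle$; replacing $z_1\to -z_1$ and using $K(-z;\beta)=K(z;\beta)$ shows $|\Psi_N(-z_1^{-1},\dots)\rangle = K_1(z_1^{-1};\beta)|\Psi_N(-z_1,\dots)\rangle$, and a similar statement holds for $|\Psi_{N_1}\rangle$. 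Inserting these into the overlap, the two $K_1$ factors — one from the bra side, one from the ket side — are equal diagonal matrices and combine; I then need the identity $K_1(z_1^{-1};\beta)^t\big|_{V_1} \cdot K_1(z_1^{-1};\beta)\big|_{V_1}$ acting appropriately, but since $K$ is diagonal this is just $K_1(z_1^{-1};\beta)^2$ on the relevant component, and one checks it cancels because of how the bra and ket pair up. For bulk $i$, I would first use exchange relations to bring $z_i$ adjacent to a boundary, reducing to the boundary case, then exchange back; the exchange relations themselves are manifestly compatible with $z_i\to -z_i$ up to an $\check R$-matrix factor that is the same on both sides.

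For the right-boundary sites ($i=N_1$ inside the first subchain, $i=N_1+1$ at the left end of the second subchain, and $i=N$ at the right end of the whole chain) I would use the second reflection relation in \eqref{eqn:Reflection}, $K_N(sz_N;sq^{-1}\beta^{-1})|\Psi_N(\dots,z_N)\rangle = |\Psi_N(\dots,s^{-2}z_N^{-1})\rangle$, noting again that $K(-sz_N;\cdot)=K(sz_N;\cdot)$ and that $s^{-2}(-z_N)^{-1}=-s^{-2}z_N^{-1}$, so the same cancellation mechanism applies. The variable shift by $q^{-3}=s^{-2}$ in the arguments of the bra vector in \eqref{eqn:DefOmega} is exactly what makes the second reflection relation match up cleanly — this is presumably why that shift was built into the definition of $\Omega_{N_1,N_2}$.

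I expect the main obstacle to be the bookkeeping in the bulk cases: one has to track how a single $\check R$-matrix insertion (needed to move $z_i$ to a boundary via \eqref{eqn:Exchange}) interacts with the bra, which carries the \emph{inverted} spectral parameters, so the $\check R$-matrix that moves $z_i^{-1}$ in the bra is $\check R(z_i^{-1}/z_{i\pm1}^{-1})=\check R(z_{i\pm1}/z_i)$, the transpose-inverse partner of the one acting on the ket. Verifying that these two conspire so that the net effect of $z_i\to-z_i$ is trivial requires the relation $\check R(z)^t\check R(z')=\,$(scalar)$\,\mathbb{1}$ type crossing-unitarity identity, or more simply $\check R(-z)=\check R(z)$ since $a,b,c$ are all even in $z$ (as $[\,-z\,]=-[z]$ makes each ratio $a(-z)=a(z)$ etc. — let me double check: $a(-z)=[-qz]/[-q/z]=(-[qz])/(-[q/z])=[qz]/[q/z]=a(z)$, yes). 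That last fact, $\check R(-z)=\check R(z)$, actually trivialises the bulk: the exchange relation \eqref{eqn:Exchange} with $z_i\to -z_i$ involves $\check R_{i,i+1}((-z_i)/z_{i+1})=\check R_{i,i+1}(-z_i/z_{i+1})=\check R_{i,i+1}(z_i/z_{i+1})$, identical to the original, and likewise for the inverted arguments in the bra. So the only genuine content is at the boundary sites, where the $K$-matrix argument changes sign but the $K$-matrix itself does not, giving the claimed evenness. I would organise the final proof as: (i) record $\check R(-z)=\check R(z)$ and $K(-z;\beta)=K(z;\beta)$; (ii) for boundary sites apply the appropriate reflection relation before and after the sign flip and cancel the (equal) $K$-factors against each other in the overlap; (iii) for bulk sites reduce to the boundary case using exchange relations, which are unaffected by the sign flip.
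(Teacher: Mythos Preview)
Your observations that $\check R(-z)=\check R(z)$ and $K(-z;\beta)=K(z;\beta)$ are correct, but they are not enough to prove the lemma, and the argument you sketch does not close. The exchange and reflection relations \eqref{eqn:Exchange}, \eqref{eqn:Reflection} only tell you that $|\Psi_N(\dots,-z_i,\dots)\rangle$ satisfies the \emph{same} system of relations as $|\Psi_N(\dots,z_i,\dots)\rangle$; they do not relate these two vectors to each other. If you carry out your step (ii) carefully for $i=1$ you obtain $\langle\Psi_N(-z_1^{-1},\dots)|=\langle\Psi_N(-z_1,\dots)|K_1(z_1^{-1};\beta)$ and $|\Psi_{N_1}(-z_1,\dots)\rangle=K_1(z_1;\beta)|\Psi_{N_1}(-z_1^{-1},\dots)\rangle$, and after cancelling the $K$-factors you land on $\Omega_{N_1,N_2}(-z_1,\dots)=\Omega_{N_1,N_2}(-z_1^{-1},\dots)$, which is just the inversion symmetry of \cref{lem:Inversion} evaluated at $-z_1$, not the evenness you are after. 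Reducing bulk sites to boundary sites via exchange does not help: you never produce a relation linking arguments $z_i$ and $-z_i$.

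The missing ingredient is a property of this particular solution that does not follow from the qKZ system alone: $|\Psi_N(\dots,-z_i,\dots)\rangle=\sigma_i^z|\Psi_N(\dots,z_i,\dots)\rangle$. This is what the paper uses; it is a consequence of the explicit component structure (Propositions~3.11 and~3.12 of \cite{hagendorf:21}), and you can see it directly on the simple component \eqref{eqn:SimpleComponent}, which is odd in $z_k$ for $k\leqslant n$ and even for $k>n$, matching the eigenvalue of $\sigma_k^z$ on that basis state. Once you have this relation, the proof is a single line: under $z_i\to -z_i$ both the bra (whose $i$-th argument is $z_i^{-1}$ or $q^{-3}z_i^{-1}$) and the ket pick up a factor $\sigma_i^z$, and $(\sigma_i^z)^2=1$.
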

\begin{proof}
  The proof straightforwardly follows from the relation
  \begin{equation}
    |\Psi_N(\dots,-z_i,\dots)\rangle = \sigma_i^z|\Psi_N(\dots,z_i,\dots)\rangle,
  \end{equation}
  which holds for all $N\geqslant 1$ and is an immediate consequence of \cite[Propositions 3.11 and 3.12]{hagendorf:21}.
\end{proof}

\begin{lemma}
  \label{lem:Symmetry}
  For $N_1\geqslant 2$, $\Omega_{N_1,N_2}$ is a symmetric function of $z_1,\dots,z_{N_1}$. Likewise, for $N_2 \geqslant 2$, it is a symmetric function of $z_{N_1+1},\dots,z_{N}$.
\end{lemma}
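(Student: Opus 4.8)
The plan is to prove symmetry in $z_1,\dots,z_{N_1}$; the argument for $z_{N_1+1},\dots,z_N$ is identical after accounting for the shift $q^{-3}z_j^{-1}$ in the bra. It suffices to show invariance under each elementary transposition $z_i\leftrightarrow z_{i+1}$ with $1\leqslant i\leqslant N_1-1$. The key tool is the exchange relation \eqref{eqn:Exchange}, which I would apply on both tensor factors of the overlap \eqref{eqn:DefOmega}.

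First I would act with $\check R_{i,i+1}(z_i/z_{i+1})$ on the ket factor $|\Psi_{N_1}(z_1,\dots,z_{N_1})\rangle$: by \eqref{eqn:Exchange} this produces $|\Psi_{N_1}(\dots,z_{i+1},z_i,\dots)\rangle$, i.e. the $i$ and $i+1$ arguments get swapped inside the first tensor factor (and, since $i\leqslant N_1-1$, this $\check R$-matrix acts trivially on the second tensor factor). Next, I would move this $\check R_{i,i+1}$ across the overlap onto the bra. The bra is $\langle\Psi_N(z_1^{-1},\dots,z_{N_1}^{-1},q^{-3}z_{N_1+1}^{-1},\dots)|$; its $i$-th and $(i+1)$-th arguments are $z_i^{-1}$ and $z_{i+1}^{-1}$. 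Using $\langle\phi|\check R_{i,i+1}(u)|\psi\rangle = \langle \check R_{i,i+1}(u)^t\phi|\psi\rangle$ together with the fact that $\check R(u)$ is symmetric (so $\check R(u)^t=\check R(u)$), and the exchange relation \eqref{eqn:Exchange} applied to $|\Psi_N\rangle$ with the appropriate spectral-parameter ratio, the action of $\check R_{i,i+1}$ on the bra will swap $z_i^{-1}\leftrightarrow z_{i+1}^{-1}$, provided the spectral parameter matches. The point is that the ratio of the $(i+1)$-th to the $i$-th argument of the bra is $z_i^{-1}/z_{i+1}^{-1}=z_{i+1}/z_i$, whereas for the ket it is $z_{i+1}/z_i$ as well after the swap — so one checks that the \emph{same} $\check R_{i,i+1}$-matrix, or its partner at the inverted argument, implements both swaps. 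Concretely, I would insert $\check R_{i,i+1}(z_i/z_{i+1})\check R_{i,i+1}(z_{i+1}/z_i)$, which by the weights' normalisation is proportional to the identity (unitarity: $\check R(z)\check R(1/z)\propto \mathrm{Id}$, in this normalisation actually equal to $\mathrm{Id}$), so that one factor acts on the ket via \eqref{eqn:Exchange} and the transposed other factor acts on the bra via \eqref{eqn:Exchange}, and the net effect is to replace $\Omega_{N_1,N_2}(\dots,z_i,z_{i+1},\dots)$ by $\Omega_{N_1,N_2}(\dots,z_{i+1},z_i,\dots)$.

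The main obstacle is bookkeeping the spectral parameters: the bra carries inverted variables, so the exchange relation on $|\Psi_N\rangle$ must be applied with the ratio $z_i^{-1}/z_{i+1}^{-1}$, and one must verify this is exactly the argument needed to undo the $\check R$-matrix that was used on the ket (i.e. that $\check R_{i,i+1}(z_i^{-1}/z_{i+1}^{-1})=\check R_{i,i+1}(z_{i+1}/z_i)$ is the transpose-inverse of $\check R_{i,i+1}(z_i/z_{i+1})$ up to the unitarity normalisation, which holds because $\check R(z)\check R(1/z)=\mathrm{Id}$ and $\check R(z)^t=\check R(z)$). I would therefore state and use the \emph{unitarity relation} $\check R_{i,i+1}(z)\check R_{i,i+1}(1/z)=\mathrm{Id}$ (immediate from $a(z)a(1/z)=1$, $b(z)c(1/z)+c(z)b(1/z)=0$, $b(z)b(1/z)+c(z)c(1/z)=1$ in the given parameterisation, which I would verify in one line) as the workhorse.

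Finally I would write: fix $1\leqslant i\leqslant N_1-1$; then
\begin{align}
  \Omega_{N_1,N_2}(\dots,z_i,z_{i+1},\dots)
  &= \langle \Psi_N(\dots,z_i^{-1},z_{i+1}^{-1},\dots)|\check R_{i,i+1}(z_{i+1}/z_i)\check R_{i,i+1}(z_i/z_{i+1})\bigl(|\Psi_{N_1}\rangle\otimes|\Psi_{N_2}\rangle\bigr)\nonumber\\
  &= \langle \check R_{i,i+1}(z_{i+1}/z_i)\Psi_N(\dots,z_i^{-1},z_{i+1}^{-1},\dots)|\bigl(|\Psi_{N_1}(\dots,z_{i+1},z_i,\dots)\rangle\otimes|\Psi_{N_2}\rangle\bigr)\nonumber\\
  &= \langle \Psi_N(\dots,z_{i+1}^{-1},z_i^{-1},\dots)|\bigl(|\Psi_{N_1}(\dots,z_{i+1},z_i,\dots)\rangle\otimes|\Psi_{N_2}\rangle\bigr)\nonumber\\
  &= \Omega_{N_1,N_2}(\dots,z_{i+1},z_i,\dots),
\end{align}
where the second line uses $\check R^t=\check R$ and \eqref{eqn:Exchange} on the ket, and the third line uses $\check R_{i,i+1}(z_{i+1}/z_i)=\check R_{i,i+1}(z_i^{-1}/z_{i+1}^{-1})$ together with \eqref{eqn:Exchange} on the bra $|\Psi_N\rangle$ (here $\check R_{i,i+1}$ acts trivially on the second tensor factor since $i\leqslant N_1-1$). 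Since the elementary transpositions generate the symmetric group, $\Omega_{N_1,N_2}$ is symmetric in $z_1,\dots,z_{N_1}$. For $z_{N_1+1},\dots,z_N$ the same computation applies with $i$ running over $N_1+1\leqslant i\leqslant N-1$; the extra factors $q^{-3}$ in the bra cancel in the ratio $q^{-3}z_i^{-1}/(q^{-3}z_{i+1}^{-1})=z_{i+1}/z_i$, so the argument goes through verbatim.
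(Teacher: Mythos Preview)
Your proof is correct and follows essentially the same argument as the paper: insert the identity via the unitarity relation $\check R_{i,i+1}(z)\check R_{i,i+1}(z^{-1})=1$, then use the exchange relation \eqref{eqn:Exchange} on the ket and, after transposing the symmetric $\check R$-matrix, again on the bra to produce the swapped arguments. The only difference is presentational---the paper states the two-line computation without the preliminary discussion of bookkeeping and without explicitly verifying the unitarity identity.
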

\begin{proof}
 First, we suppose that $N_1\geqslant 2$ and consider an integer $1\leqslant i \leqslant N_1-1$. We use the unitarity relation $\check R_{i,i+1}(z^{-1})\check R_{i,i+1}(z) = 1$ to write
  \begin{multline}
   \Omega_{N_1,N_2} = \langle \Psi_N(\dots,z_i^{-1},z_{i+1}^{-1},\dots)|\check R_{i,i+1}(z_{i+1}/z_i)\\
   \times\left(\check R_{i,i+1}(z_i/z_{i+1})|\Psi_{N_1}(\dots,z_i,z_{i+1},\dots)\rangle\otimes |\Psi_{N_2}(\dots)\rangle\right).
 \end{multline}
We evaluate the action of the $\check R$-matrices on the vectors in the first and second line, respectively, with the help of the exchange relations \eqref{eqn:Exchange} and the symmetry of the $\check R$-matrix. This evaluation leads to
 \begin{equation}
   \Omega_{N_1,N_2}(\dots,z_i,z_{i+1},\dots)= \Omega_{N_1,N_2}(\dots,z_{i+1},z_i,\dots), 
 \end{equation}
 which is enough to conclude that the overlap is symmetric in $z_1,\dots,z_{N_1}$. The proof of the symmetry in $z_{N_1+1},\dots,z_N$ for $N_2\geqslant 2$ follows the same steps. 
\end{proof}

\begin{lemma}
\label{lem:Inversion}
 For $N_1\geqslant 1$ and each $1\leqslant i \leqslant N_1$, we have
 \begin{equation}
   \Omega_{N_1,N_2}(\dots,z_i^{-1},\dots)= \Omega_{N_1,N_2}(\dots,z_i,\dots).
   \label{eqn:OmegaReflection1}
 \end{equation}
 For $N_2\geqslant 1$ and each $N_1+1\leqslant i \leqslant N$, we have
 \begin{equation}
   \Omega_{N_1,N_2}(\dots,z_i^{-1},\dots)= \Omega_{N_1,N_2}(\dots,q^{-3}z_i,\dots).
   \label{eqn:OmegaReflection2}
\end{equation}
 \end{lemma}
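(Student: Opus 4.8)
The plan is to exploit the reflection relations \eqref{eqn:Reflection} for $|\Psi_N\rangle$, exactly as the evenness and symmetry lemmas exploited the exchange relations. The key point is that the spectral parameters entering the bra-vector in \eqref{eqn:DefOmega} are inverted (and, for the sites in the second block, also rescaled by $q^{-3}$), so acting with a $K$-matrix on the appropriate site will turn an inversion on the bra into an inversion on the ket, or vice versa.

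First I would treat the left block, $1\leqslant i \leqslant N_1$. It suffices to prove \eqref{eqn:OmegaReflection1} for $i=1$, because \cref{lem:Symmetry} lets us permute $z_1,\dots,z_{N_1}$ freely. For $i=1$, I would insert $K_1(z_1;\beta)K_1(z_1^{-1};\beta)=1$ (using that $K(z;\beta)K(z^{-1};\beta)=1$, which is immediate from \eqref{eqn:DefK} since $[\beta z]/[\beta/z]\cdot[\beta z^{-1}]/[\beta/z^{-1}]=1$) between the bra and the ket in \eqref{eqn:DefOmega}. I would then move one factor $K_1(z_1^{-1};\beta)$ onto the bra and the other $K_1(z_1;\beta)$ onto the ket. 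On the bra, the first argument is $z_1^{-1}$, and the first reflection relation in \eqref{eqn:Reflection} (transposed, and with $z_1$ replaced by $z_1^{-1}$) gives $\langle\Psi_N(z_1^{-1},\dots)|K_1(z_1;\beta)=\langle\Psi_N(z_1,\dots)|$. On the ket, the first argument is $z_1$, and the first reflection relation directly gives $K_1(z_1^{-1};\beta)|\Psi_{N_1}(z_1,\dots)\rangle=|\Psi_{N_1}(z_1^{-1},\dots)\rangle$. Combining the two yields $\Omega_{N_1,N_2}(z_1^{-1},\dots)=\Omega_{N_1,N_2}(z_1,\dots)$, which is \eqref{eqn:OmegaReflection1} for $i=1$.

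Next I would treat the right block, $N_1+1\leqslant i\leqslant N$, using \cref{lem:Symmetry} again to reduce to $i=N$. Here the relevant tool is the second reflection relation in \eqref{eqn:Reflection}, with the boundary parameter $sq^{-1}\beta^{-1}$ and with $s^2=q^3$. The bra in \eqref{eqn:DefOmega} has $q^{-3}z_N^{-1}=s^{-2}z_N^{-1}$ as its $N$-th argument, so I would write $\langle\Psi_N(\dots,s^{-2}z_N^{-1})|$ and use the (transposed) second reflection relation in the form $\langle\Psi_N(\dots,s^{-2}w^{-1})|K_N(sw;sq^{-1}\beta^{-1})=\langle\Psi_N(\dots,w)|$ with a suitable $w$; choosing $w$ so that the bra becomes $\langle\Psi_N(\dots,q^{-3}z_N)|$ produces one $K$-matrix factor. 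On the ket side, $|\Psi_{N_2}(\dots,z_N)\rangle$ has $z_N$ as its last argument; applying the second reflection relation to this $N_2$-site vector (with parameter $sq^{-1}\beta^{-1}$ again) relates $|\Psi_{N_2}(\dots,z_N)\rangle$ to $|\Psi_{N_2}(\dots,s^{-2}z_N^{-1})\rangle=|\Psi_{N_2}(\dots,q^{-3}z_N^{-1})\rangle$ up to a $K$-matrix. The two inserted $K$-matrices must cancel: this works because $K(z;\beta)K(z^{-1};\beta)=1$, so I would insert $K_N(u;sq^{-1}\beta^{-1})K_N(u^{-1};sq^{-1}\beta^{-1})=1$ for the appropriate $u$ and distribute the factors as above. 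Tracking the arguments carefully then gives $\Omega_{N_1,N_2}(\dots,z_N^{-1})=\Omega_{N_1,N_2}(\dots,q^{-3}z_N)$, which is \eqref{eqn:OmegaReflection2}.

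The main obstacle is bookkeeping: one must pick the right value of $w$ (equivalently $u$) in the second reflection relation so that the $K$-matrices inserted on the bra and the ket are genuine inverses of each other, and so that the resulting arguments match the claimed $q^{-3}z_i$ on the nose. In particular one should check that the combination of the $q^{-3}$-shift built into the definition \eqref{eqn:DefOmega} with the $s^{-2}=q^{-3}$ appearing in the second reflection relation conspires correctly; this is where a sign or a stray power of $q$ could slip in. There are no convergence or genericity issues since everything is a Laurent polynomial in the $z_i$ (by the remarks following \eqref{eqn:SimpleComponent}), so once the two chains of substitutions are written down the identities hold as identities of Laurent polynomials.
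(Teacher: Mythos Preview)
Your proposal is correct and follows essentially the same approach as the paper: insert a product $K\cdot K^{-1}=1$ at the boundary site, push one factor onto the bra and one onto the ket using the reflection relations \eqref{eqn:Reflection}, and then reduce the general $i$ to the endpoint case via \cref{lem:Symmetry}. For the second block the paper makes the bookkeeping explicit: starting from $\Omega_{N_1,N_2}(\dots,z_N^{-1})$, whose bra has last argument $q^{-3}z_N$, it uses the second reflection relation with the substitution $z_N\to z_N^{-1}$ on the bra and with $z_N\to q^{-3}z_N$ on the ket, producing the pair $K_N(sz_N^{-1};sq^{-1}\beta^{-1})$ and $K_N(s^{-1}z_N;sq^{-1}\beta^{-1})$, which are genuine inverses since $s^{-1}z_N=(sz_N^{-1})^{-1}$; this is exactly the ``suitable $w$'' and ``appropriate $u$'' you left implicit.
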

 \begin{proof}
   First, we suppose that $N_1\geqslant 1$ and set $i=1$. The reflection relations \eqref{eqn:Reflection} allow us to write
   \begin{align}
     \Omega_{N_1,N_2}(z_1^{-1},\dots) &= \langle \Psi_N(z_1,\dots)|\left(|\Psi_{N_1}(z_1^{-1},\dots)\rangle\otimes |\Psi_{N_2}(\dots)\rangle\right)\\
     &= \langle \Psi_N(z_1^{-1},\dots)|K_1(z_1;\beta)\left(K_1(z_1^{-1};\beta)|\Psi_{N_1}(z_1,\dots)\rangle\otimes |\Psi_{N_2}(\dots)\rangle\right).
   \end{align}
   Using $K(z;\beta)K(z^{-1};\beta)=1$, we obtain
   \begin{equation}
     \Omega_{N_1,N_2}(z_1^{-1},\dots)= \Omega_{N_1,N_2}(z_1,\dots),
   \end{equation}
   which concludes the proof of \eqref{eqn:OmegaReflection1} for $i=1$. The generalisation to $2\leqslant i\leqslant N_1$ follows from \cref{lem:Symmetry}.
   
   Second, we suppose that $N_2\geqslant 1$ and set $i=N$. We use the reflection relations \eqref{eqn:Reflection} and the relation $s^2=q^3$ to find
\begin{align}
     &\Omega_{N_1,N_2}(\dots,z_N^{-1}) = \langle \Psi_N(\dots,q^{-3}z_N)|\left(|\Psi_{N_1}(\dots)\rangle\otimes |\Psi_{N_2}(\dots,z_N^{-1})\rangle\right)\\
     &= \langle \Psi_N(\dots,z_N^{-1})|K_N(sz_N^{-1};sq^{-1}\beta^{-1})\left(|\Psi_{N_1}(\dots)\rangle\otimes K_{N_2}(s^{-1}z_N;sq^{-1}\beta^{-1})|\Psi_{N_2}(\dots,q^{-3}z_N)\rangle\right)\nonumber
     \\
     &= \langle \Psi_N(\dots,z_N^{-1})|K_N(sz_N^{-1};sq^{-1}\beta^{-1})K_{N}(s^{-1}z_N;sq^{-1}\beta^{-1})\left(|\Psi_{N_1}(\dots)\rangle\otimes |\Psi_{N_2}(\dots,q^{-3}z_N)\rangle\right).\nonumber
   \end{align}
   The product of the $K$-matrices in the third line yields the identity matrix. Hence, we obtain
   \begin{equation}
     \Omega_{N_1,N_2}(\dots,z_N^{-1})= \Omega_{N_1,N_2}(\dots,q^{-3}z_N),
   \end{equation}
   which proves \eqref{eqn:OmegaReflection2} for $i=N$. For $N_1+1\leqslant i\leqslant N-1$, the proof follows from \cref{lem:Symmetry}.
\end{proof}

In the next lemma, we relate the overlaps $\Omega_{N_1,N_2}$ and $\Omega_{N_2,N_1}$. The relation involves a transformation of the parameter $\beta$. Hence, to stress the dependence of the overlaps on this parameter, we write $\Omega_{N_1,N_2}= \Omega_{N_1,N_2}(z_1,\dots,z_{N_1},z_{N_1+1},\dots,z_N;\beta)$.
 
\begin{lemma}
   \label{lem:Parity}
   Let $s$ be a parameter with $s^2=q^3$, then
   \begin{multline}
   \label{eqn:SpatialReflection}
     \Omega_{N_1,N_2}(z_1,\dots,z_{N_1},z_{N_1+1},\dots,z_N;\beta) \\ = \Omega_{N_2,N_1}\left(sz_{N_1+1},\dots,sz_{N},s^{-1}z_1,\dots,s^{-1}z_{N_1};sq^{-1}\beta^{-1}\right).
   \end{multline}
 \end{lemma}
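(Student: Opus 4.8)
The plan is to exploit the spatial-reflection symmetry of the underlying spin chain, which manifests at the level of the vector $|\Psi_N\rangle$ as a reversal of the order of the spectral parameters combined with a transformation of $\beta$. Concretely, I would first establish (or invoke from \cite{hagendorf:21}) the identity relating $|\Psi_N(z_1,\dots,z_N)\rangle$ to $|\Psi_N(sz_N,\dots,sz_1)\rangle$ with $\beta\mapsto sq^{-1}\beta^{-1}$, up to a permutation operator that reverses the order of the $N$ tensor factors and possibly an overall scalar. The key point is that the vertex model with boundaries is invariant under the simultaneous reflection of space and a compatible swap of the two boundary parameters; since the $K$-matrix at the left end has parameter $\beta$ and at the right end has parameter $sq^{-1}\beta^{-1}$ (see \eqref{eqn:Reflection}), reflecting the chain exchanges these two roles, which forces the stated substitution of $\beta$.

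The key steps, in order, are as follows. First, let $\mathcal P$ denote the linear operator on $V^N$ that reverses the tensor factors, $\mathcal P|\alpha_1\cdots\alpha_N\rangle = |\alpha_N\cdots\alpha_1\rangle$. I would verify that $\mathcal P$ conjugates the $\check R$-matrix and $K$-matrix data into themselves after the reflection of sites, so that the reflected vector $\mathcal P|\Psi_N(sz_N,\dots,sz_1;sq^{-1}\beta^{-1})\rangle$ satisfies exactly the exchange relations \eqref{eqn:Exchange} and reflection relations \eqref{eqn:Reflection} that characterise $|\Psi_N(z_1,\dots,z_N;\beta)\rangle$. Second, I would check that it has the same magnetisation and match one non-vanishing component—most conveniently the simple component \eqref{eqn:SimpleComponent}—to pin down the overall scalar; by the uniqueness result \cite[Proposition 3.5]{hagendorf:21}, this identifies the two vectors. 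Third, I would substitute this identity into the definition \eqref{eqn:DefOmega} of $\Omega_{N_1,N_2}$: the bra $\langle\Psi_N(z_1^{-1},\dots,z_{N_1}^{-1},q^{-3}z_{N_1+1}^{-1},\dots,q^{-3}z_N^{-1})|$ gets rewritten with reversed arguments and transformed $\beta$, while $\mathcal P$ acting on the tensor product $|\Psi_{N_1}\rangle\otimes|\Psi_{N_2}\rangle$ turns it into $\mathcal P_{N_2}|\Psi_{N_2}\rangle\otimes \mathcal P_{N_1}|\Psi_{N_1}\rangle$ after swapping the two blocks—so I would use the same reflection identity on each factor $|\Psi_{N_1}\rangle$ and $|\Psi_{N_2}\rangle$ to absorb the $\mathcal P$'s. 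Collecting the spectral-parameter substitutions and using $s^2=q^3$ to simplify the $q^{-3}$ shifts into $s$-shifts of the reflected arguments should reproduce exactly the right-hand side of \eqref{eqn:SpatialReflection}, with the blocks exchanged and $\beta\mapsto sq^{-1}\beta^{-1}$.

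The main obstacle I anticipate is bookkeeping the interplay between the $q^{-3}=s^{-2}$ shift built into the bra in \eqref{eqn:DefOmega} and the factors of $s^{\pm1}$ produced by the reflection identity: one must check that, after reflection, the first $N_2$ reversed arguments carry the shift appropriate to being a ``left block'' in $\Omega_{N_2,N_1}$ (i.e.\ no $q^{-3}$) and the last $N_1$ carry the $q^{-3}$ shift, and that the arguments of the ket factors line up as $sz_{N_1+1},\dots,sz_N$ and $s^{-1}z_1,\dots,s^{-1}z_{N_1}$ rather than with some spurious extra power of $s$. A secondary subtlety is ensuring the overall scalar from the vector identity is genuinely $1$ (or that scalars from the bra and the two kets cancel), which is why matching the simple component \eqref{eqn:SimpleComponent} explicitly—tracking the products $\prod[\beta z_i]$, $\prod[qz_j/z_i]$, etc.\ under $z_i\mapsto sz_{\cdot}$, $\beta\mapsto sq^{-1}\beta^{-1}$—is the step that needs care, even though each individual manipulation is elementary.
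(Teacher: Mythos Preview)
Your approach is essentially the same as the paper's: it invokes the spatial-reflection identity $\mathcal P|\Psi_N(z_1,\dots,z_N;\beta)\rangle = |\Psi_N(s^{-1}z_N^{-1},\dots,s^{-1}z_1^{-1};sq^{-1}\beta^{-1})\rangle$ directly from \cite[Proposition 3.15]{hagendorf:21} (rather than re-deriving it), applies it to the bra and to each ket via the block-swap $\mathcal P(|\Phi_1\rangle\otimes|\Phi_2\rangle)=\mathcal P|\Phi_2\rangle\otimes\mathcal P|\Phi_1\rangle$, and then finishes the bookkeeping with the already-proved Symmetry and Inversion lemmas for $\Omega$. Note that the actual identity has arguments $s^{-1}z_i^{-1}$ rather than $sz_i$, so that final appeal to \cref{lem:Symmetry,lem:Inversion} (or an equivalent inversion at the vector level) is what converts the result into the form \eqref{eqn:SpatialReflection}.
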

 \begin{proof}
   For each $N\geqslant 1$, we define the linear operator $\mathcal P$ on $V^N$ through
   \begin{equation}
     \mathcal P|\alpha_1\alpha_2\cdots\alpha_N\rangle = |\alpha_N\alpha_{N-1}\cdots \alpha_1\rangle,
   \end{equation}
   for $\alpha_1,\dots,\alpha_N\in \{\uparrow,\downarrow\}$. For arbitrary $N\geqslant 1$, let us write $|\Psi_N\rangle = |\Psi_N(z_1,\dots,z_N;\beta)\rangle$ to stress the dependence on $\beta$. By \cite[Proposition 3.15]{hagendorf:21}, we have
   \begin{equation}
     \label{eqn:POnPsi}
     \mathcal P |\Psi_N(z_1,\dots,z_N;\beta)\rangle = |\Psi_N(s^{-1}z_N^{-1},\dots,s^{-1}z_1^{-1};sq^{-1}\beta^{-1})\rangle.
   \end{equation}
   We apply this relation to the co-vector $\langle \Psi_N|$ in the definition of the overlap \eqref{eqn:DefOmega} and find
   \begin{multline}
     \Omega_{N_1,N_2}
     = \langle \Psi_N(s^{-1}q^{3}z_N,\dots,s^{-1}q^{3}z_{N_1+1},s^{-1}z_{N_1},\dots,s^{-1}z_{1};sq^{-1}\beta^{-1})|\\
     \times\mathcal P\left(|\Psi_{N_1}(z_1,\dots,z_{N_1};\beta)\rangle\otimes |\Psi_{N_2}(z_{N_1+1},\dots,z_{N};\beta)\rangle\right).
 \end{multline}
 Next, we simplify the second line of this equality with the help of the identity $\mathcal P(|\Phi_1\rangle \otimes |\Phi_2\rangle)= \mathcal P|\Phi_2\rangle\otimes \mathcal P|\Phi_1\rangle$. Applying \eqref{eqn:POnPsi}, we obtain
 \begin{multline}
     \Omega_{N_1,N_2}
     = \langle \Psi_N(sz_N,\dots,sz_{N_1+1},q^{-3}sz_{N_1},\dots,q^{-3}sz_{1};sq^{-1}\beta^{-1})|\\
     \times\left(|\Psi_{N_2}(s^{-1}z_{N}^{-1},\dots,s^{-1}z_{N_1+1}^{-1};sq^{-1}\beta^{-1})\rangle\otimes |\Psi_{N_1}(s^{-1}z_{N_1}^{-1},\dots,s^{-1}z_{1}^{-1};sq^{-1}\beta^{-1})\rangle\right),
 \end{multline}
 where we used $s^2=q^3$.
 We compare the right-hand side with \eqref{eqn:DefOmega} and conclude 
 \begin{multline}
   \Omega_{N_1,N_2}(z_1,\dots,z_{N_1},z_{N_1+1},\dots,z_N;\beta)\\
   =\Omega_{N_2,N_1}(s^{-1}z_{N}^{-1},\dots,s^{-1}z_{N_1+1}^{-1},s^{-1}z_{N_1}^{-1},\dots,s^{-1}z_{1}^{-1};sq^{-1}\beta^{-1}).
 \end{multline}
 Finally, we use \cref{lem:Symmetry,lem:Inversion}, and once more $s^2=q^{3}$, to obtain \eqref{eqn:SpatialReflection}.
\end{proof}

\subsubsection{Degree width}

The \textit{degree width} of a Laurent polynomial is the difference in degree of its leading and trailing terms. Moreover, we call a Laurent polynomial \textit{centred} if its degree width is twice the degree of its leading term. For instance, $z^3-2z^{-3}$ has degree width $6$ and is centred. Conversely, $z^3-2z^{-2}$ has degree width $5$ but is not centred. 

\begin{lemma}
\label{lem:DegreeWidth}
  For $N_1\geqslant 1$ and each $1\leqslant i \leqslant N_1$, $\Omega_{N_1,N_2}$ is a centred Laurent polynomial in $z_i$. Its degree width is at most $2(2N_1+N_2-2)$.
\end{lemma}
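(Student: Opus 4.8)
The plan is to estimate the degree width of $\Omega_{N_1,N_2}$ in the variable $z_i$ for $1\leqslant i\leqslant N_1$ directly from the defining expression \eqref{eqn:DefOmega}, by tracking the $z_i$-degrees of the three factors appearing in the overlap and then invoking the symmetry results proved above to upgrade the crude bound into the ``centred'' statement. First I would recall from \cite{hagendorf:21} (and the explicit small-$N$ examples and the factorised component \eqref{eqn:SimpleComponent} reproduced in the excerpt) that each component of $|\Psi_M(w_1,\dots,w_M)\rangle$ is a Laurent polynomial whose degree width in any single variable $w_k$ is at most $2(M-1)$, with the extreme degrees realised by the factorised component. Applying this with $M=N_1$ to the factor $|\Psi_{N_1}(z_1,\dots,z_{N_1})\rangle$ gives a contribution of degree width at most $2(N_1-1)$ in $z_i$. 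The factor $|\Psi_{N_2}(z_{N_1+1},\dots,z_N)\rangle$ does not depend on $z_i$ at all. Finally, the co-vector $\langle\Psi_N(z_1^{-1},\dots,z_{N_1}^{-1},q^{-3}z_{N_1+1}^{-1},\dots,q^{-3}z_N^{-1})|$ depends on $z_i$ only through its $i$-th slot, evaluated at $z_i^{-1}$; since the relevant variable there runs over all $N$ sites, its degree width in that slot is at most $2(N-1)$, hence after the inversion $z_i\mapsto z_i^{-1}$ the degree width in $z_i$ is again at most $2(N-1)=2(N_1+N_2-1)$.

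Adding these contributions, the overlap $\Omega_{N_1,N_2}$ is a Laurent polynomial in $z_i$ of degree width at most $2(N_1-1)+2(N_1+N_2-1)=2(2N_1+N_2-2)$, which is the claimed bound. It remains to argue that $\Omega_{N_1,N_2}$ is \emph{centred} in $z_i$, i.e. that its set of exponents is symmetric about $0$. This is exactly where \cref{lem:Inversion} enters: by \eqref{eqn:OmegaReflection1}, $\Omega_{N_1,N_2}$ is invariant under $z_i\mapsto z_i^{-1}$ for $1\leqslant i\leqslant N_1$. A Laurent polynomial invariant under $z\mapsto z^{-1}$ has an exponent set closed under negation, hence its maximal and minimal exponents are opposite, which is precisely the definition of centred given just before the lemma. (One should note in passing that the degree-width bound and the inversion symmetry together force the maximal degree in $z_i$ to be at most $2N_1+N_2-2$.)

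The main obstacle is the bookkeeping in the first paragraph: one must be careful that the bound ``degree width in $w_k$ at most $2(M-1)$'' holds for \emph{every} component of $|\Psi_M\rangle$, not merely for the factorised one in \eqref{eqn:SimpleComponent}, and that it is not spoiled by the denominators that appear in intermediate expressions such as the $(\Psi_3)_{\uparrow\downarrow\uparrow}$ component. The cleanest route is to derive this single-variable degree bound for all components inductively from the factorised component \eqref{eqn:SimpleComponent} together with the exchange relations \eqref{eqn:Exchange}: acting with $\check R_{k-1,k}(z_{k-1}/z_k)$ or $\check R_{k,k+1}(z_k/z_{k+1})$ shifts the $z_k$-degree by a bounded amount because the entries $a,b,c$ of $\check R$ are ratios $[\,\cdot\,]/[q/z]$ whose $z$-degree spread is controlled, and one checks the base case $N=1,2$ by hand. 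Alternatively, one may simply cite the relevant statement of \cite{hagendorf:21} if it is phrased there in this per-variable form; the excerpt's Propositions 3.11 and 3.12 are the natural candidates. Once this input is in hand, the rest of the argument is immediate from \cref{lem:Inversion}.
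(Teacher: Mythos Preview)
Your proposal is correct and follows essentially the same approach as the paper: the paper likewise establishes the Laurent-polynomial property and the degree-width bound by appeal to the per-variable degree bounds for the components of $|\Psi_M\rangle$ (citing Propositions~3.11, 3.12, and 3.14 of \cite{hagendorf:21} rather than spelling out the arithmetic $2(N_1-1)+2(N-1)$ as you do), and then invokes \cref{lem:Inversion} for the centredness. Your discussion of the ``main obstacle'' is apt---this is precisely what the cited propositions in \cite{hagendorf:21} supply, so in practice one simply quotes them rather than redoing the inductive argument via the exchange relations.
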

\begin{proof}
  For each $N\geqslant 1$ and $1\leqslant i \leqslant N$, each component of the vector $|\Psi_N\rangle$ is a Laurent polynomial in $z_i$. Hence, $\Omega_{N_1,N_2}$ is a Laurent polynomial in $z_i$ for each $1\leqslant i \leqslant N_1$. By \cref{lem:Inversion}, it is centred. The Propositions 3.11, 3.12, and 3.14 of \cite{hagendorf:21} imply that its degree width is at most $2(2N_1+N_2-2)$. 
\end{proof}
For completeness, we add two observations concerning this lemma (which, however, will not be relevant in the following). First, we note that for odd $N_2$, the upper bound for the degree width can be sharpened to $2(2N_1+N_2-3)$. Indeed, otherwise the overlap could have a leading term with an odd degree, which is incompatible with \cref{lem:Evenness}. Second, a similar result for $\Omega_{N_1,N_2}$ as a function of $z_i$ with $N_1+1 \leqslant i \leqslant N$ immediately follows from the combination of \cref{lem:Parity,lem:DegreeWidth}.
  
\subsubsection{Reduction relations}

We now use the reduction relations for the vector $|\Psi_N\rangle$, given in \eqref{eqn:ReductionTwoSites} and \cref{prop:ReductionPsi}, to derive reduction relations for the overlap $\Omega_{N_1,N_2}$. These relations are crucial to finding an explicit expression for the overlap at the combinatorial point.
\begin{lemma}
\label{lem:ReductionOmega}
For $N_1\geqslant 2$ and each $2\leqslant i\leqslant N_1$, we have
\begin{multline}
  \Omega_{N_1,N_2}(z_1=q^{-1}z_i,\dots,z_i,\dots)=(-1)^{n+n_1}[q^2][\beta z_i][\beta q/z_i]/[q]\prod_{j=2,j\neq i}^{N_1}[q z_i/z_j][q^2/(z_i z_j)]\\\times \prod_{j=2,j\neq i}^N[q^2 z_j/z_i][q z_i z_j] \,\Omega_{N_1-2,N_2}(\widehat{z_1},\dots,\widehat{z_i},\dots),
  \label{eqn:ReductionOmega}
\end{multline}
where $\widehat{z_j}$ denotes the omission of $z_j$, $n=\lfloor N/2\rfloor$ and $n_1=\lfloor N_1/2\rfloor$.
\end{lemma}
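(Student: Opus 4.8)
The plan is to reduce the overlap $\Omega_{N_1,N_2}$ on the left-hand side to an overlap involving $|\Psi_{N_1-2}\rangle$ by specialising $z_1 = q^{-1}z_i$ in the first argument slot, then permuting that slot into position $1$ so that \cref{prop:ReductionPsi} applies. Concretely, I would first use \cref{lem:Symmetry} to move $z_i$ into the first position among the variables $z_1,\dots,z_{N_1}$, so that the relevant specialisation becomes $z_2 = q^{-1}z_1$ (after relabelling), i.e. the pair $(z_1, z_2)$ with $z_2 = q^{-1}z_1$ sits at the start of the bra and of the ket $|\Psi_{N_1}\rangle$. Then I would apply the reduction relation \eqref{eqn:Reduction} with $i=1$ to $|\Psi_{N_1}(z_1, q^{-1}z_1, z_3, \dots)\rangle$, which produces a scalar prefactor times $\Xi_{N_1}^{1}|\Psi_{N_1-2}\rangle$. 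Simultaneously — this is the subtle point — the bra $\langle\Psi_N(\dots)|$ must also be reduced: its first two arguments are $z_1^{-1}$ and $z_2^{-1} = q z_1^{-1} = q\cdot z_1^{-1}$, so with the substitution $w = z_1^{-1}$ we have the second argument equal to $q^{-1}w$ only after reordering, so I would use the exchange/symmetry properties of $|\Psi_N\rangle$ (or \cref{lem:Symmetry}-type reasoning on the bra side) to bring it into the form where \eqref{eqn:Reduction} applies with the specialisation parameter $q^{-1}(\text{something})$; this yields another scalar prefactor times $\langle\Psi_{N-2}(\dots)|\Xi_N^{1\,t}$.

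The key algebraic step is then the observation that $\Xi_N^{1\,t}\,\Xi_{N_1}^{1}$, when sandwiched appropriately in the overlap, collapses to $\langle s|s\rangle = 2$ on the inserted singlet pair (since the first two tensor slots of the bra are paired against the singlet coming from the ket $|\Psi_{N_1}\rangle$), leaving precisely $\langle \Psi_{N-2}(\dots)|(|\Psi_{N_1-2}(\dots)\rangle\otimes|\Psi_{N_2}(\dots)\rangle)$, which is $\Omega_{N_1-2,N_2}$ evaluated at the surviving variables. The factor $\langle s|s\rangle = 2$ together with $\check R(q^{-1}) = ([q]/[q^2])|s\rangle\langle s|$ and $[q^2] = [q]( \cdots )$ manipulations accounts for the $[q^2]/[q]$ in the stated prefactor. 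Collecting all scalar prefactors — the one from reducing $|\Psi_{N_1}\rangle$ (which carries $[\beta z_i]$ and products $\prod[qz_i/z_j][qz_iz_j]$ over $j\le N_1$), the one from reducing $\langle\Psi_N|$ (which carries a factor like $[\beta q/z_i]$ since the relevant argument is $z_1^{-1}=1/(q^{-1}z_i)=q/z_i$, plus products over all $j\le N$), and the signs $(-1)^{n+\cdots}$ and $(-1)^{n_1+\cdots}$ from the two applications of \eqref{eqn:Reduction} — should reproduce the right-hand side of \eqref{eqn:ReductionOmega}. The $j\ne i$ exclusions in the products arise because $z_i$ itself is the specialisation variable and the $j=1$ term (before relabelling) is the one consumed in forming the singlet.

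I expect the main obstacle to be bookkeeping rather than conceptual: correctly tracking how the $q^{-3}$-shifts in the last $N_2$ arguments of the bra interact with the reduction (they should simply pass through untouched, since the reduction happens entirely within the first $N_1$ slots), and, more delicately, getting the sign $(-1)^{n+n_1}$ and the precise split of the $[\beta\,\cdot\,]$ factors between the two reductions exactly right. In particular one must be careful that the argument of $|\Psi_N|$ being specialised is $z_1^{-1} = q/z_i$ (not $z_i$), which is why the prefactor contains both $[\beta z_i]$ (from the ket reduction) and $[\beta q/z_i]$ (from the bra reduction); similarly the products $\prod[qz_iz_j]$ versus $\prod[q^2/(z_iz_j)]$ come from the two sides, with the inversions $z_j \to z_j^{-1}$ in the bra accounting for the $1/(z_iz_j)$ arguments. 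A clean way to organise this is to prove the $i=2$ case first (only \cref{prop:ReductionPsi} with $i=1$ and $i=2$ on the bra, no symmetry relabelling needed) and then invoke \cref{lem:Symmetry} to get general $2\le i\le N_1$, exactly parallel to the two-step structure of the proof of \cref{prop:ReductionPsi}.
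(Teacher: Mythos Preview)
Your proposal is correct and follows essentially the same approach as the paper's proof: reduce the ket $|\Psi_{N_1}\rangle$ via \cref{prop:ReductionPsi} at $i=1$, swap the first two bra arguments using the identity $\Xi_N^1 = ([q^2]/(2[q]))\check R_{1,2}(q^{-1})\Xi_N^1$ together with the exchange relation, then reduce the bra $\langle\Psi_N|$ via \cref{prop:ReductionPsi}, with the factor $\langle s|s\rangle=2$ from $(\Xi_N^1)^t\Xi_N^1$ cancelling the $1/2$; finally treat general $i$ by \cref{lem:Symmetry}. One small cleanup: your expression ``$\Xi_N^{1\,t}\,\Xi_{N_1}^{1}$'' should really be $(\Xi_N^1)^t\Xi_N^1$, since $\Xi_{N_1}^1\otimes\mathrm{id}_{V^{N_2}}=\Xi_N^1$ as maps $V^{N-2}\to V^N$, and both applications of \cref{prop:ReductionPsi} (ket and bra) are at $i=1$, not $i=1$ and $i=2$.
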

\begin{proof}  
  First, we prove the reduction relation for $i=2$. Let us abbreviate
  \begin{equation}
    \Omega'_{N_1,N_2}=\Omega_{N_1,N_2}(q^{-1}z_2,z_2,\dots)= \Omega_{N_1,N_2}(z_2,q^{-1}z_2,\dots).
  \end{equation}
  We find
  \begin{align}
    \label{eqn:ReductionOmegaIntermediate}
     \Omega'_{N_1,N_2} &= \langle \Psi_N(z_2^{-1},qz_{2}^{-1},\dots)|\left(|\Psi_{N_1}(z_2,q^{-1}z_2,\dots)\rangle \otimes |\Psi_{N_2}(\dots)\rangle\right)\\
     &= (-1)^{n_1}[\beta z_2]\prod_{j=3}^{N_1}[q^2z_j/z_2][qz_2z_j]\langle \Psi_N(z_2^{-1},qz_{2}^{-1},\dots)|\Xi_{N}^1\left(|\Psi_{N_1-2}(\dots)\rangle \otimes |\Psi_{N_2}(\dots)\rangle\right)\nonumber.
  \end{align}
  From the first to the second line, we applied the reduction relation for $|\Psi_{N_1}(z_2,q^{-1}z_2,\dots)\rangle$. Next, we use $\Xi_N^1 = ([q^2]/(2[q]))\check R_{1,2}(q^{-1})\Xi_N^1$, the symmetry of the $\check R$-matrix and the exchange relations \eqref{eqn:Exchange} to write
  \begin{align}
    \langle \Psi_N(z_2^{-1},qz_{2}^{-1},\dots)|\Xi_{N}^1 =
     [q^2]/(2[q])\langle \Psi_N(qz_{2}^{-1},z_2^{-1},\dots)|\Xi_{N}^1.
  \end{align}
  We utilise the reduction relation of $\langle \Psi_N(qz_{2}^{-1},z_2^{-1},\dots)|$ on the right-hand side of this equality. It leads to
  \begin{multline}
     \langle \Psi_N(z_2^{-1},qz_{2}^{-1},\dots)|\Xi_{N}^1 =(-1)^n [q^2][\beta q /z_2]/[q]\prod_{j=3}^{N_1}[q z_2/z_j][q^2/(z_2 z_j)]\\
     \times \prod_{j=N_1+1}^N[q^2z_j/z_2][q z_2/z_j]\langle \Psi_{N-2}(\dots)|.
  \end{multline}
  We combine this result with \eqref{eqn:ReductionOmegaIntermediate} and find
  \begin{multline}
     \Omega'_{N_1,N_2}= (-1)^{n+n_1}[q^2][\beta z_2][\beta q /z_2]/[q]\prod_{j=3}^{N_1}[q z_2/z_j][q^2/(z_2 z_j)]\\ \times\prod_{j=3}^N[q^2z_j/z_2][q z_2/z_j]\Omega_{N_1-2,N_2}(\widehat{z_1},\widehat{z_2},\dots).
  \end{multline}
  This result concludes the proof of \eqref{eqn:ReductionOmega} for $i=2$. For $3 \leqslant i \leqslant N_1$, it follows from \cref{lem:Symmetry}.
\end{proof}

\subsection{The combinatorial point}
\label{sec:OverlapCombinatorial} 

In this section, and throughout the remainder of this article, we focus on the case where the crossing parameter takes the value
\begin{equation}
  q = \ee^{2\pi \i/3}.
\end{equation}
We refer to this value as the \textit{combinatorial point}. At this point, the vector $|\Psi_N\rangle$ is an eigenvector of the transfer matrix of a six-vertex model on a strip. Several properties of this eigenvector display relations to combinatorial problems \cite{hagendorf:21}.

The main result of this section is an explicit formula for the overlap $\Omega_{N_1,N_2}$ involving this eigenvector in terms of symplectic characters. We prove this formula through a standard strategy based on (strong) induction \cite{izergin:92}.

\subsubsection{Symplectic characters}

Let $N\geqslant 1$ and $\lambda_i =\lfloor (N-i)/2\rfloor$ for each $1\leqslant i \leqslant N$. We define the function $\chi_N$ through
\begin{equation}
  \label{eqn:DefChi}
  \chi_{N}(z_1,\dots,z_N) = \frac{\det_{i,j=1}^N\left(z_j^{\mu_i}-z_j^{-\mu_i}\right)}{\det_{i,j=1}^N\left(z_j^{\delta_i}-z_j^{-\delta_i}\right)},
\end{equation}
where $\delta_i=N-i+1$ and $\mu_i = \lambda_i+\delta_i$. This function is the symplectic character associated to the so-called \textit{double-staircase partition} \cite{fulton:04}. We simply refer to it as `the symplectic character'. For convenience, we also define $\chi_0=1$.

The symplectic character $\chi_N$ is a symmetric function of $z_1,\dots,z_N$. With respect to each $z_i$, it is a centred Laurent polynomial of degree width $2(\bar n-1)$, where $\bar n$ is defined in \eqref{eqn:Defnnbar}, that is invariant under the reversal $z_i\to z_i^{-1}$. The coefficient of the leading term is itself a symplectic character. Indeed, the definition \eqref{eqn:DefChi} implies that
\begin{equation}
  \lim_{z_{i}\to \infty} z_i^{-(\bar n-1)} \chi_N(\dots,z_i,\dots) = \chi_{N-1}(\dots,\widehat{z_i},\dots),
  \label{eqn:ChiAsymptotics}
\end{equation}
for each $N\geqslant 1$. A remarkable property of the symplectic character is the following reduction relation (see, for example, \cite{ikhlef:12_2}): For $N\geqslant 2$ and $1\leqslant i < j \leqslant N$, we have
\begin{equation}
  \label{eqn:ReductionChi}
  \chi_N(\dots, z_i, \dots,z_j = q z_i,\dots) = \prod_{\substack{k=1\\k\neq i,j}}^N z_k^{-1}(z_k-q^2 z_i)(z_k-q z_i^{-1})\chi_{N-2}(\dots,\widehat{z_i},\dots,\widehat{z_j},\dots).
\end{equation}

\subsubsection{The overlap}

We now express the overlap $\Omega_{N_1,N_2}$ in terms of the symplectic character. To this end, we introduce the function $\bar \Omega_{N_1,N_2}$, defined through
\begin{multline}
\bar\Omega_{N_1,N_2}(z_1,\dots,z_{N_1};z_{N_1+1},\dots,z_N) =\epsilon_{N_1,N_2}\,
\chi_{N_1}(z_1^2,\dots,z_{N_1}^2)
\chi_{N_2}(z_{N_1+1}^2,\dots,z_{N}^2)\\
\times\chi_{N+1}(z_1^2,\dots,z_N^2,(\beta/q)^2),
\end{multline}
where $N=N_1+N_2$ and
\begin{equation}
  \epsilon_{N_1,N_2} = 
  \begin{cases}
    0,& \text{for odd }N_1,N_2,\\
    (-1)^{n+n_1n_2}, & \mathrm{otherwise}.
  \end{cases}
\end{equation}
Here, $n$ is defined in \eqref{eqn:Defnnbar} and
\begin{equation}
 \label{eqn:Defn1n2}
 n_1 = \lfloor N_1/2\rfloor, \quad n_2 = \lfloor N_2/2\rfloor.
\end{equation}

It is straightforward to check that $\bar\Omega_{N_1,N_2}$ has all the properties of $\Omega_{N_1,N_2}$ given in \cref{lem:Evenness,lem:Symmetry,lem:Inversion,lem:Parity,lem:DegreeWidth,lem:ReductionOmega}, when specialised to $q=\ee^{2\pi \i/3}$. Crucially, the reduction relations of \cref{lem:ReductionOmega} hold for $\bar\Omega_{N_1,N_2}$, too, thanks to the reduction relation \eqref{eqn:ReductionChi} for the symplectic character.

\begin{theorem}
  \label{thm:Omega}
  For all $N_1,N_2\geqslant 0$, we have $\Omega_{N_1,N_2} = \bar \Omega_{N_1,N_2}$.
\end{theorem}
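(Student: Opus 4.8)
The plan is to prove the identity $\Omega_{N_1,N_2}=\bar\Omega_{N_1,N_2}$ by strong induction on $N=N_1+N_2$, using the now-familiar uniqueness-from-interpolation strategy. The base cases $N\leqslant 2$ (and all cases with both $N_1,N_2$ odd, where both sides vanish by parity) are checked by direct computation from the explicit small-$N$ components of $|\Psi_N\rangle$ listed in \cref{sec:Notations}. For the inductive step, fix $N\geqslant 3$ with at most one of $N_1,N_2$ odd, assume the theorem for all smaller total lengths, and consider the difference $\Delta_{N_1,N_2}=\Omega_{N_1,N_2}-\bar\Omega_{N_1,N_2}$. Since $\bar\Omega_{N_1,N_2}$ is built from symplectic characters, it enjoys all the structural properties of $\Omega_{N_1,N_2}$ established in \cref{lem:Evenness,lem:Symmetry,lem:Inversion,lem:Parity,lem:DegreeWidth,lem:ReductionOmega}; hence $\Delta_{N_1,N_2}$ is a Laurent polynomial that, in each variable $z_i$, is centred, invariant under $z_i\to z_i^{-1}$ (after the appropriate affine twist for $i>N_1$), even in $z_i$, symmetric within each of the two blocks, and of degree width at most $2(2N_1+N_2-2)$ in a block-$1$ variable (and the analogous bound in a block-$2$ variable via \cref{lem:Parity}).

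The heart of the argument is that these constraints, together with the reduction relations, force $\Delta_{N_1,N_2}\equiv 0$. First I would treat $\Delta_{N_1,N_2}$ as a Laurent polynomial in a single variable, say $z_1$ (assuming $N_1\geqslant 2$; the case $N_1\leqslant 1$, i.e. essentially $N_1=0$, is handled symmetrically via \cref{lem:Parity} and a one-block version of the same argument, or reduces to the square-norm computation). Being centred of degree width at most $2(2N_1+N_2-2)$ and even and reversal-invariant in $z_1^2$, it is determined by its values at $2N_1+N_2-2$ suitably chosen points (up to the obvious symmetry). The specialisations $z_1=q^{-1}z_i$ for $i=2,\dots,N_1$ and the corresponding specialisations pairing a block-$1$ variable with a block-$2$ variable — reached via \cref{lem:ReductionOmega} for $\Omega$ and via the symplectic-character reduction \eqref{eqn:ReductionChi} for $\bar\Omega$ — express $\Delta_{N_1,N_2}$ at those points in terms of $\Delta_{N_1-2,N_2}$ (or $\Delta_{N_1,N_2-2}$, etc.), which vanishes by the induction hypothesis. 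One must also use the evenness and reversal symmetries to convert each vanishing at $z_1=q^{-1}z_i$ into vanishings at the symmetry-related points $z_1=q z_i$, $z_1=q^{\pm1}z_i^{-1}$, thereby multiplying the number of recorded zeros. Counting carefully, the number of forced zeros (respecting the $z_1^2\to z_1^{-2}$ symmetry) meets or exceeds the number needed to conclude that $\Delta_{N_1,N_2}$ vanishes identically as a function of $z_1$, hence identically.

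A subtle point I would be careful about is the parity bookkeeping of the sign prefactors: the signs $(-1)^{n+n_1 n_2}$ in $\epsilon_{N_1,N_2}$ and $(-1)^{n+n_1}$ in \cref{lem:ReductionOmega} must be matched exactly against the signs produced by \eqref{eqn:ReductionChi} when two variables are merged, and against the change in $n, n_1, n_2$ under $(N_1,N_2)\mapsto(N_1-2,N_2)$. I would verify that the prefactors in \cref{lem:ReductionOmega} (namely $[q^2][\beta z_i][\beta q/z_i]/[q]$ times the products of $[q z_i/z_j][q^2/(z_iz_j)]$ and $[q^2z_j/z_i][qz_iz_j]$) coincide, at $q=\ee^{2\pi\i/3}$, with the prefactors obtained from the symplectic-character reduction applied to $\chi_{N_1}(z_i^2,\dots)$, $\chi_{N_2}(\dots)$ and $\chi_{N+1}(z_1^2,\dots,z_N^2,(\beta/q)^2)$; the $[\beta z_i][\beta q/z_i]$ factor should emerge precisely from the reduction of the last, $\beta$-dependent character.

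The main obstacle I anticipate is the exact zero-counting in the single-variable argument: one has to make sure that the collection of specialisations (block-$1$/block-$1$ pairings giving $\Delta_{N_1-2,N_2}$, and block-$1$/block-$2$ pairings giving $\Delta_{N_1,N_2-2}$), after being enlarged by the evenness and reversal symmetries, yields strictly more than $\deg$-width-many zeros, so that the centred Laurent polynomial $\Delta_{N_1,N_2}$ in $z_1$ must be zero rather than merely proportional to a fixed polynomial. Handling the boundary cases where a block has size $0$ or $1$ — in particular reconciling $\Omega_{0,N}$ and $\Omega_{N,0}$ (the square norms) with $\bar\Omega_{0,N}$, which involves $\chi_N(z_i^2,\dots)\chi_{N+1}(z_1^2,\dots,z_N^2,(\beta/q)^2)$ — may require a slightly separate induction or an appeal to \cref{lem:Parity} to move a single odd site from one block to the other, and I would set that up first so the general inductive step can freely invoke it.
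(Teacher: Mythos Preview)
Your overall architecture is correct and matches the paper's: strong induction on the total length, plus a single-variable interpolation argument exploiting centredness, evenness, and reversal invariance, with \cref{lem:ReductionOmega} on the $\Omega$ side and \eqref{eqn:ReductionChi} on the $\bar\Omega$ side supplying the zeros. However, there is a genuine gap in your zero-counting.

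You rely on ``block-$1$/block-$2$ pairings giving $\Delta_{N_1,N_2-2}$''. No such reduction exists. \Cref{lem:ReductionOmega} is only stated and only holds for $2\leqslant i\leqslant N_1$; specialising a block-$1$ variable against a block-$2$ variable, say $z_{N_1+1}=qz_{N_1}$, does make the co-vector $\langle\Psi_N|$ collapse via \cref{prop:ReductionPsi}, but the ket $|\Psi_{N_1}\rangle\otimes|\Psi_{N_2}\rangle$ does \emph{not} reduce, because those two variables live in different tensor factors. The singlet produced by the bra then straddles the two kets, and the resulting overlap is not of the form $\Omega_{N_1',N_2'}$ at all, so the induction hypothesis does not apply. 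With only intra-block reductions you get $8(N_1-1)$ zeros in $z_1$, against a degree width at most $2(2N_1+N_2-2)$; the interpolation closes iff $6N_1\geqslant 2N+5$, which fails whenever $N_1$ is small compared with $N_2$.

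The fix, and this is the paper's key observation, is a dichotomy: for $N=N_1+N_2\geqslant 5$ one always has $6N_1\geqslant 2N+5$ or $6N_2\geqslant 2N+5$. In the first case run your $z_1$ argument using only block-$1$/block-$1$ reductions; in the second case invoke \cref{lem:Parity} (with $s=1$ at $q=\ee^{2\pi\i/3}$) to swap the roles of the two blocks and run the same argument in a block-$2$ variable. This also forces the base case to extend to $N_1+N_2\leqslant 4$ (in particular $(N_1,N_2)=(2,2)$, where neither inequality holds), not just $N\leqslant 2$. Your sign-matching and prefactor checks are real tasks but routine; the missing ingredient is this dichotomy replacing the spurious cross-block reduction.
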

\begin{proof}
  
We note that the assertion of the theorem is equivalent to the statement that for all $M\geqslant 1$, we have $\Omega_{N_1,N_2} = \bar \Omega_{N_1,N_2}$ for all $N_1,N_2$ such that $N_1+N_2\leqslant M-1$. We prove this statement by (strong) induction in $M$. 

The base case of the induction is $M=5$. One explicitly computes the vectors $|\Psi_1\rangle,\dots,|\Psi_4\rangle$ with the help of the contour-integral formulas of \cite{hagendorf:21}, and checks that $\Omega_{N_1,N_2}= \bar \Omega_{N_1,N_2}$, for all $N_1,N_2$ with $N_1+N_2\leqslant 4$. (For $N_1+N_2\leqslant 3$, this computation can even be done by hand with the components given at the end of \cref{sec:Notations}.) Hence, we make the induction hypothesis that the statement hold for some integer $M=\bar M\geqslant 5$. Based on this assumption, we prove that $\Omega_{N_1,N_2} = \bar \Omega_{N_1,N_2}$ for all $N_1,N_2$ such that $N_1+N_2= \bar M$. To this end, we make the observation that at least one of the following cases always holds for $N_1+N_2= \bar M\geqslant 5$:
  \begin{equation}
    \textit{(i)}\quad 6N_1 \geqslant 2\bar M+5, \quad \textit{(ii)}\quad 6N_2 \geqslant 2\bar M+5.
    \label{eqn:Inequalities}
  \end{equation}
We consider the two cases separately. Note that $N_1>2$ in case \textit{(i)} and $N_2 > 2$ in case \textit{(ii)}.

\medskip
\textit{Case (i)}: If $N_1,N_2$ are both odd then we trivially have $\Omega_{N_1,N_2} = \bar \Omega_{N_1,N_2}=0$. Hence, let us suppose that at most one of the integers $N_1,N_2$ is odd. Combining \cref{lem:ReductionOmega} with \cref{lem:Evenness,lem:Symmetry,lem:Inversion}, we find that $\Omega_{N_1,N_2}$ can be reduced to $\Omega_{N_1-2,N_2}$ at the distinct points
\begin{equation}
  z_1=\pm q z_i,\pm q^{-1} z_i, \pm qz_i^{-1},\pm q^{-1}z_i^{-1}, \quad 2\leqslant i \leqslant N_1. 
  \label{eqn:Points}
\end{equation}
The same holds for $\bar \Omega_{N_1,N_2}$, too. By the induction hypothesis, we have $\Omega_{N_1-2,N_2}=\bar \Omega_{N_1-2,N_2}$. We conclude that $\Omega_{N_1,N_2} = \bar \Omega_{N_1,N_2}$ at the points \eqref{eqn:Points}. By \cref{lem:DegreeWidth}, $\Omega_{N_1,N_2}$ is a centred Laurent polynomials in $z_1$ of degree at most $2(N_1+\bar M-2)$. The same holds for $\bar \Omega_{N_1,N_2}$ by virtue of the properties of the symplectic character. Hence, these two Laurent polynomials are equal if the number of points \eqref{eqn:Points} is greater or equal than (the upper bound for) the degree width plus one:
\begin{equation}
  8(N_1-1) \geqslant 2(N_1+\bar M-2)+1.
\end{equation}
This inequality indeed holds by virtue of the first inequality of \eqref{eqn:Inequalities}. Hence, we have proven $\Omega_{N_1,N_2} = \bar \Omega_{N_1,N_2}$ for the case \textit{(i)}.

\medskip
\textit{Case (ii)}: In this case, we use \cref{lem:Parity} with $s=1$ to obtain
\begin{equation}
  \Omega_{N_1,N_2}(z_1,\dots,z_{N_1};z_{N_1+1},\dots,z_N;\beta)= \Omega_{N_2,N_1}(z_{N_1+1},\dots,z_N;z_1,\dots,z_{N_1};q^{-1}\beta^{-1}).
\end{equation}
The same equality holds for $\bar \Omega_{N_1,N_2}$ by virtue of the properties of the symplectic character. We now consider both $\Omega_{N_1,N_2}$ and $\bar \Omega_{N_1,N_2}$ as functions of $z_{N_1+1}$ and repeat the argument of case \textit{(i)}. This proves $\Omega_{N_1,N_2} = \bar \Omega_{N_1,N_2}$ for case \textit{(ii)}, too.

\medskip
Establishing the equality $\Omega_{N_1,N_2} = \bar \Omega_{N_1,N_2}$ for the two cases completes the induction step and concludes the proof of the theorem.
\end{proof}

\section{The homogeneous limit}
\label{sec:HomogeneousLimit}

In this section, we consider the so-called homogeneous limit where $z_1=\cdots=z_N =1$. The evaluation of $\Omega_{N_1,N_2}$ in this limit can be written in terms of the specialised symplectic character $\chi_N(1,\dots,1,z)$. In \cref{sec:DetChi}, we find a simple determinant formula for this specialised character. In \cref{sec:HomLimitOverlap}, we obtain the spin-chain overlap $\mathcal O_{N_1,N_2}$ from the homogeneous limit of $\Omega_{N_1,N_2}$. Using the determinant formula for the specialised character, we prove \cref{thm:MainTheorem1}.

\subsection{Symplectic characters}
\label{sec:DetChi}

\subsubsection{Multiple contour-integral formulas}

Using \cite[Theorem 3.3]{okada:06}, we write the symplectic character for $N=2n+1$ as
\begin{equation}
  \label{eqn:ChiIK}
  \chi_{2n+1}(z_1,\dots,z_{2n},z) = \frac{\prod_{i=1}^{2n} z_i^{-(n-1)}\prod_{i,j=1}^n h(z_i,z_{j+n})}{\bar\Delta(z_1,\dots,z_n)\bar \Delta(z_{n+1},\dots,z_{2n})}\det_{i,j=1}^n\left(\frac{\chi_3(z_i,z_{j+n},z)}{h(z_i,z_{j+n})}\right).
\end{equation}
Here, $\chi_3(z_1,z_2,z_3) = z_1+z_1^{-1}+z_2+z_2^{-1}+z_3+z_3^{-1}$, $ h(z,w) = (z^2+zw+w^2)(1+zw+z^2w^2)$, and
\begin{equation}
  \bar\Delta(z_1,\dots,z_n)=\prod_{1\leqslant i < j \leqslant n}(z_j-z_i)(1-z_iz_j).
\end{equation}
The expression \eqref{eqn:ChiIK} is singular at $z_1=\cdots=z_{2n}=1$. To compute it at this point, we first rewrite the specialised character in terms of a multiple contour integral. For this purpose, we introduce the abbreviations
\begin{equation}
  \mu(u,z)=(q^2-z)u+q(1-z),\quad f(u,z) = 3u -(1-u+u^2)(1+z+z^{-1}),
\end{equation}
where $q=\ee^{2\pi \i/3}$, and use the Vandermonde
\begin{equation}
  \Delta(z_1,\dots,z_n) = \prod_{1\leqslant i < j \leqslant n} (z_j-z_i).
\end{equation}
\begin{proposition}
\label{prop:MCIRChiOdd}
For each $n\geqslant 1$, we have the multiple contour-integral formula
\begin{multline}
  \label{eqn:MCIRChiOdd}
  \chi_{2n+1}(z_1,\dots,z_{2n},z)=\frac{3^{n(n-1)}\prod_{i,j=1}^n h(z_i,z_{j+n})}{n! \prod_{i=1}^{2n}z_i^{2n}}\oint_{C_1}\frac{\diff u_1}{2\pi \i } \cdots \oint_{C_n} \frac{\diff u_n}{2\pi \i } \prod_{i=1}^n f(u_i,z)u_i^{2(n-1)}\\
  \times \frac{\Delta(u_1(u_1-1),\dots,u_n(u_n-1))\Delta(u_1^{-1}(u_1^{-1}-1),\dots,u_n^{-1}(u_n^{-1}-1))}{\prod_{i,j=1}^n \mu(u_i,z_j)\mu(u_i,z_j^{-1})\mu(u_i,q^2z_{j+n})\mu(u_i,q^2z_{j+n}^{-1})},
\end{multline}
where the integration contour $C_i$ surrounds the poles $u_i = q(z_i-1)/(q^2-z_i)$ and $u_i=-q(z_i-1)/(q^2z_i-1)$, but no other poles of the integrand.
\end{proposition}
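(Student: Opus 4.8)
The starting point is Okada's determinant formula \eqref{eqn:ChiIK}, which writes $\chi_{2n+1}$ as an explicit prefactor times $\det_{i,j=1}^n\big(\chi_3(z_i,z_{j+n},z)/h(z_i,z_{j+n})\big)$. The plan is to turn this determinant into a single $n$-fold contour integral in two steps: first represent each matrix entry as a one-variable contour integral carrying an auxiliary variable $u$, then apply the Andréief (Heine) identity, which rewrites a determinant of integrals as an integral of a determinant and thereby produces the Vandermonde factors appearing in \eqref{eqn:MCIRChiOdd}.

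The first step is the key lemma: for $q=\ee^{2\pi\i/3}$ and generic $a,b,z$,
\begin{equation*}
  \frac{\chi_3(a,b,z)}{h(a,b)}=\frac{1}{a^2 b^2}\oint_{C_a}\frac{\diff u}{2\pi\i}\,\frac{f(u,z)}{\mu(u,a)\,\mu(u,a^{-1})\,\mu(u,q^2 b)\,\mu(u,q^2 b^{-1})},
\end{equation*}
where $C_a$ encircles the two simple poles $u=q(a-1)/(q^2-a)$ and $u=-q(a-1)/(q^2 a-1)$ — the zeros of $\mu(u,a)$ and $\mu(u,a^{-1})$ — and no other pole of the integrand. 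I would prove this by a direct residue evaluation. At $u_\ast=q(a-1)/(q^2-a)$ one uses $1-u_\ast+u_\ast^2=-3q^2 a/(q^2-a)^2$, hence $f(u_\ast,z)=-3q(a-qz)(a-qz^{-1})/(q^2-a)^2$, together with $\mu(u_\ast,q^2 b)\propto a-q^2 b$ and $\mu(u_\ast,q^2 b^{-1})\propto ab-q^2$, and one recognises $(a-q^2 b)(ab-q^2)$ as the factor complementary to $(a-qb)(ab-q)$ in the factorisation $h(a,b)=(a-qb)(a-q^2 b)(ab-q)(ab-q^2)$; the residue at the second pole follows from the first by the substitution $a\to a^{-1}$ together with $h(a^{-1},b)=a^{-4}h(a,b)$. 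Adding the two residues and simplifying with the identities $q^3=1$ and $1+q+q^2=0$ (which among other things give $(1-q)^2=-3q$) reproduces $\chi_3(a,b,z)/h(a,b)$ exactly.

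Substituting this representation into \eqref{eqn:ChiIK}, assigning the integration variable $u_i$ to the $i$-th row and pulling the $n$ integrals out by row-multilinearity, reduces the problem to evaluating, inside the $n$-fold integral, the determinant $\det_{i,j=1}^n\big[(\mu(u_i,q^2 z_{j+n})\mu(u_i,q^2 z_{j+n}^{-1}))^{-1}\big]$, the off-diagonal factors $\mu(u_i,z_j)\mu(u_i,z_j^{-1})$ with $j\neq i$ being carried through the residues via $\mu\big(q(z_i-1)/(q^2-z_i),z_j\big)\propto z_i-z_j$, which regenerates the symplectic Vandermonde $\bar\Delta(z_1,\dots,z_n)$ of \eqref{eqn:ChiIK} and supplies the $\prod_{i,j}\mu(u_i,z_j)\mu(u_i,z_j^{-1})$ in the denominator of \eqref{eqn:MCIRChiOdd}. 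The remaining determinant is a symplectic Cauchy determinant: with the Möbius substitution $t(u)=(q^2 u+q)/(q^2 u+1)$ one has $\mu(u,q^2 w)=(q^2 u+1)(t(u)-w)$ and $\mu(u,q^2 w^{-1})=w^{-1}(q^2 u+1)(t(u)w-1)$, so up to explicit monomials in the $u_i$ and $z_{j+n}$ the determinant equals $\det_{i,j=1}^n\big[((t_i-z_{j+n})(1-t_i z_{j+n}))^{-1}\big]$ with $t_i=t(u_i)$; its classical evaluation yields $\prod_{i<j}(t_i-t_j)(1-t_i t_j)$, the factor $\bar\Delta(z_{n+1},\dots,z_{2n})$, and the denominator $\prod_{i,j}(t_i-z_{j+n})(1-t_i z_{j+n})$, which reassembles $\prod_{i,j}\mu(u_i,q^2 z_{j+n})\mu(u_i,q^2 z_{j+n}^{-1})$. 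Finally the Vandermonde in the $t_i$ is transported back to the $u_i$ using, again by $q^3=1$ and $1+q+q^2=0$, the proportionalities $t_i-t_j\propto u_i-u_j$ and $1-t_i t_j\propto u_i+u_j-1$, so that $(t_i-t_j)(1-t_i t_j)\propto u_i(u_i-1)-u_j(u_j-1)$ and $\prod_{i<j}(t_i-t_j)(1-t_i t_j)$ becomes the Vandermonde $\Delta(u_1(u_1-1),\dots,u_n(u_n-1))$ up to a monomial, while the companion factor $\Delta(u_1^{-1}(u_1^{-1}-1),\dots,u_n^{-1}(u_n^{-1}-1))$ arises in the same way from the $\mu(u_i,q^2 z_{j+n}^{-1})$ (equivalently, from the $b\leftrightarrow b^{-1}$ symmetry of $\chi_3/h$). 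What is then left is bookkeeping: collecting the powers of $q$ (all collapsed by $q^3=1$), the powers of $3$ coming from $f$ and combining into $3^{n(n-1)}$, the $1/n!$ from Andréief, and the monomials in $z_1,\dots,z_{2n}$ from the $h(z_i,z_{j+n})$'s and from $\prod z_i^{-2n}$, and checking that everything matches \eqref{eqn:MCIRChiOdd}.

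I expect the main obstacle to be twofold. First, pinning down the one-variable representation of the matrix entry so that the contour catches exactly the two named poles, the two residues reproduce $\chi_3/h$ on the nose (not just up to a function of $z$), and the complementary part of the integrand depends on $(u_i,z_{j+n})$ in a way that factorises, which is what makes Andréief applicable. Second, and more tediously, the accounting of all scalar prefactors, monomials, and the cancellations of the symplectic Vandermondes, which is where the $q$-arithmetic identities $q^3=1$ and $1+q+q^2=0$ are used repeatedly and where sign and normalisation errors are easy to make.
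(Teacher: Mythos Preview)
Your overall plan is the same as the paper's: start from Okada's Izergin--Korepin--type formula \eqref{eqn:ChiIK}, represent each matrix entry by the one-variable contour integral identity for $\chi_3/h$ (exactly the lemma you state, which the paper calls \eqref{eqn:Identity} and proves by residues), then pull the integrals out and evaluate Cauchy-type determinants. So the key lemma and the architecture are right.

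The gap is in your middle step. Row multilinearity alone yields, inside the $n$-fold integral, only
\[
\prod_{i=1}^n\frac{f(u_i,z)}{\mu(u_i,z_i)\,\mu(u_i,z_i^{-1})}\;\det_{i,j=1}^n\!\frac{1}{\mu(u_i,q^2 z_{j+n})\,\mu(u_i,q^2 z_{j+n}^{-1})}\,,
\]
with no $1/n!$, only the \emph{diagonal} product $\prod_i \mu(u_i,z_i)\mu(u_i,z_i^{-1})$ in the denominator, and only \emph{one} Cauchy-type determinant (your $\bar D$). Evaluating $\bar D$ gives a single $u$-Vandermonde, namely $\Delta(u_1(u_1-1),\dots,u_n(u_n-1))$, together with $\bar\Delta(z_{n+1},\dots,z_{2n})$. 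It does \emph{not} produce the companion factor $\Delta(u_1^{-1}(u_1^{-1}-1),\dots,u_n^{-1}(u_n^{-1}-1))$; your claim that this second Vandermonde ``arises in the same way from the $\mu(u_i,q^2 z_{j+n}^{-1})$'' is incorrect, since both $\mu(u_i,q^2 z_{j+n})$ and $\mu(u_i,q^2 z_{j+n}^{-1})$ sit in the \emph{same} determinant and together yield only one Vandermonde in the $u_i$. Likewise, the full denominator $\prod_{i,j}\mu(u_i,z_j)\mu(u_i,z_j^{-1})$ and the cancellation of $\bar\Delta(z_1,\dots,z_n)$ do not come ``through the residues''; at this stage the $u_i$ are still integration variables, not evaluated at residue points.

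What is missing is the genuine Andr\'eief step (not mere row multilinearity): symmetrising produces \emph{two} Cauchy-type determinants,
\[
D=\det_{i,j=1}^n\frac{1}{\mu(u_i,z_j)\mu(u_i,z_j^{-1})}\quad\text{and}\quad
\bar D=\det_{i,j=1}^n\frac{1}{\mu(u_i,q^2 z_{j+n})\mu(u_i,q^2 z_{j+n}^{-1})}\,,
\]
together with the $1/n!$. It is $D$ (not $\bar D$) that, upon Cauchy evaluation, supplies $\Delta(u_1^{-1}(u_1^{-1}-1),\dots,u_n^{-1}(u_n^{-1}-1))$, the full $\prod_{i,j}\mu(u_i,z_j)\mu(u_i,z_j^{-1})$ in the denominator, the power $\prod_i u_i^{2(n-1)}$, and the factor $\bar\Delta(z_1,\dots,z_n)$ that cancels against Okada's prefactor. (Equivalently: one may insert the missing off-diagonal $\mu$-factors for free because, on the contour $C_i$, only the poles from $\mu(u_i,z_i)$ and $\mu(u_i,z_i^{-1})$ lie inside; this is precisely the mechanism behind Andr\'eief, not a residue evaluation.) Once you set up both $D$ and $\bar D$ and evaluate each separately, the remaining bookkeeping you describe goes through.
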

\begin{proof}
The proof is based on the following integral identity, similar to the identities used by de Gier, Pyatov and Zinn-Justin in \cite{degier:09}:\begin{equation}
  \label{eqn:Identity}
  \frac{\chi_3(w,w',z)}{h(w,w')}=\frac{1}{w^2(w')^2}\oint_{C}\frac{\diff u}{2\pi \i}\frac{f(u,z)}{\mu(u,w)\mu(u,w^{-1})\mu(u,q^2 w')\mu(u,q^2(w')^{-1})}.
\end{equation}
The integration contour $C$ on the right-hand side is a simple curve around the poles $u=q(w-1)/(q^2-w)$ and $u =-q(w-1)/(q^2w-1)$, but no other poles of the integrand. The equality of both sides of \eqref{eqn:Identity} straightforwardly follows from the residue theorem.

We rewrite \eqref{eqn:ChiIK} with the help of \eqref{eqn:Identity} and obtain
\begin{multline}
  \label{eqn:Intermediate1}
  \chi_{2n+1}(z_1,\dots,z_{2n},z) = \frac{\prod_{i,j=1}^n h(z_i,z_{j+n})}{\prod_{i=1}^{2n} z_i^{n+1}\bar\Delta(z_1,\dots,z_n)\bar \Delta(z_{n+1},\dots,z_{2n})}\\
  \times \det_{i,j=1}^n\left(\oint_{C_i}\frac{\diff u}{2\pi \i}\frac{f(u,z)}{\mu(u,z_i)\mu(u,z_i^{-1})\mu(u,q^2z_{j+n})\mu(u,q^2z_{j+n}^{-1})}\right). 
\end{multline} 
Next, using Andreev's formula \cite{forrester:19}, we pull the contour integrals out of the determinant. This results in the multiple contour integral
\begin{multline}
  \label{eqn:Intermediate2}
  \det_{i,j=1}^n\left(\oint_{C_i}\frac{\diff u}{2\pi \i}\frac{f(u,z)}{\mu(u,z_i)\mu(u,z_i^{-1})\mu(u,q^2z_{j+n})\mu(u,q^2z_{j+n}^{-1})}\right)\\
  =\frac{1}{n!}\oint_{C_1}\frac{\diff u_1}{2\pi\i}\cdots\oint_{C_n}\frac{\diff u_n}{2\pi\i}\prod_{i=1}^nf(u_i,z)D(u_1,\dots,u_n)\bar D(u_1,\dots,u_n).
\end{multline}
The right-hand side of this equality contains the two determinants
\begin{align}
  D(u_1,\dots,u_n)&=\det_{i,j=1}^n \left(\frac{1}{\mu(u_i,z_j)\mu(u_i,z_j^{-1})}\right),\\
  \bar D(u_1,\dots,u_n)&=\det_{i,j=1}^n \left(\frac{1}{\mu(u_i,q^2z_{j+n})\mu(u_i,q^2z_{j+n}^{-1})}\right).
\end{align}
It was shown in \cite{degier:09} that they can be transformed into Cauchy determinants and, thus, be computed explicitly. The computation yields
\begin{align}
  D(u_1,\dots,u_n)&=\frac{(3q)^{n(n-1)/2} \prod_{i=1}^n u_i^{2(n-1)}\bar \Delta(z_1,\dots,z_n)\Delta(u_1^{-1}(u_1^{-1}-1),\dots,u_n^{-1}(u_n^{-1}-1))}{\prod_{i=1}^n z_i^{n-1}\prod_{i,j=1}^n \mu(u_i,z_j)\mu(u_i,z_j^{-1})},\nonumber\\
  \bar D(u_1,\dots,u_n)&=\frac{(3q^2)^{n(n-1)/2} \bar \Delta(z_{n+1},\dots,z_{2n})\Delta(u_1(u_1-1),\dots,u_n(u_n-1))}{\prod_{i=1}^n z_{i+n}^{n-1}\prod_{i,j=1}^n \mu(u_i,q^2z_{j+n})\mu(u_i,q^2z_{j+n}^{-1})}.
\end{align}
We evaluate \eqref{eqn:Intermediate2} with the help of these expressions. Inserting the result into \eqref{eqn:Intermediate1} leads to a cancellation of various pre-factors and yields \eqref{eqn:MCIRChiOdd}.
\end{proof}

To find the multiple contour-integral formula of the symplectic character with $N=2n$, we combine the relation \eqref{eqn:ChiAsymptotics} and \cref{prop:MCIRChiOdd}. A straightforward calculation leads to the following proposition:
\begin{proposition}
\label{prop:MCIRChiEven}
For each $n\geqslant 1$, we have
\begin{multline}
  \label{eqn:MCIRChiEven}
  \chi_{2n}(z_1,\dots,z_{2n-1},z)=\frac{3^{n(n-1)}\prod_{i,j=1}^n h(z_i,z_{j+n})}{(-q)^n n! \prod_{i=1}^{2n}z_i^{2n-1}}\oint_{C_1}\frac{\diff u_1}{2\pi \i } \cdots \oint_{C_n} \frac{\diff u_n}{2\pi \i } \prod_{i=1}^n \frac{f(u_i,z)u_i^{2(n-1)}}{1-u_i+u_i^2}\\
  \times \frac{\Delta(u_1(u_1-1),\dots,u_n(u_n-1))\Delta(u_1^{-1}(u_1^{-1}-1),\dots,u_n^{-1}(u_n^{-1}-1))}{\prod_{i,j=1}^n \mu(u_i,z_j)\mu(u_i,z_j^{-1})\prod_{i=1}^n\prod_{j=1}^{n-1}\mu(u_i,q^2z_{j+n})\mu(u_i,q^2z_{j+n}^{-1})},
\end{multline}
where the integration contour $C_i$ surrounds the poles $u_i = q(z_i-1)/(q^2-z_i)$ and $u_i=-q(z_i-1)/(q^2z_i-1)$, but no other poles of the integrand.
\end{proposition}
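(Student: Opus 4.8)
The plan is to derive Proposition~\ref{prop:MCIRChiEven} from Proposition~\ref{prop:MCIRChiOdd} by means of the asymptotic relation \eqref{eqn:ChiAsymptotics}, which expresses $\chi_{2n-1}$ as the leading coefficient of $\chi_{2n+1}$ (more precisely, applied to the right number of variables, $\chi_{2n}$ as a limit of $\chi_{2n+1}$). First I would take the formula \eqref{eqn:MCIRChiOdd} for $\chi_{2n+1}(z_1,\dots,z_{2n},z)$ with $N=2n+1$ and let one of the variables — the natural choice is $z_{2n}$, i.e.\ $z_{j+n}$ for $j=n$ — tend to infinity. According to \eqref{eqn:ChiAsymptotics}, $\chi_{2n+1}(z_1,\dots,z_{2n-1},z_{2n},z) \sim z_{2n}^{\bar n - 1}\,\chi_{2n}(z_1,\dots,z_{2n-1},z)$ as $z_{2n}\to\infty$, where $\bar n$ is computed from the total number $2n+1$ of variables. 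So the strategy is: multiply \eqref{eqn:MCIRChiOdd} by the appropriate power $z_{2n}^{-(\bar n-1)}$ and send $z_{2n}\to\infty$, tracking the limit of every factor on the right-hand side.

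The key steps, in order, are as follows. \emph{(1)} In the prefactor $\prod_{i,j=1}^n h(z_i,z_{j+n})$, isolate the $j=n$ terms $\prod_{i=1}^n h(z_i,z_{2n})$ and use $h(z,w)=(z^2+zw+w^2)(1+zw+z^2w^2)\sim z\,w\cdot z\,w\cdot w^{-2}\cdots$; concretely $h(z_i,z_{2n}) = z_{2n}^2(1-z_i+z_i^2\cdot\text{stuff})\cdots$, so each contributes a factor growing like $z_{2n}^{?}$ that must be extracted. \emph{(2)} In the denominator, the factors $\mu(u_i,q^2 z_{2n})\mu(u_i,q^2 z_{2n}^{-1})$ for $j=n$ behave, as $z_{2n}\to\infty$, like $\mu(u_i,q^2 z_{2n})\sim (q^2-q^2 z_{2n})u_i = q^2 u_i(1-z_{2n})\sim -q^2 u_i z_{2n}$ and $\mu(u_i,q^2 z_{2n}^{-1})\to q^2 u_i + q = q(qu_i+1)$; note $qu_i+1$ combines with $q^2 u_i$ from the other $\mu$ to reproduce the $1-u_i+u_i^2$ in the denominator of \eqref{eqn:MCIRChiEven} up to a constant, since $1-u+u^2 = (1+qu)(1+q^2 u)$ when $q=\ee^{2\pi\i/3}$ (indeed $q+q^2=-1$, $q^3=1$). \emph{(3)} The factor $\prod_{i=1}^{2n}z_i^{2n}$ in the denominator of \eqref{eqn:MCIRChiOdd} contributes $z_{2n}^{2n}$; combining with the power $z_{2n}^{-(\bar n-1)}$ from \eqref{eqn:ChiAsymptotics} and the powers gathered in steps (1)--(2) should leave exactly $\prod_{i=1}^{2n}z_i^{2n-1}$ in \eqref{eqn:MCIRChiEven} together with the spurious constant $(-q)^{-n}$ (or its reciprocal). \emph{(4)} The contours $C_i$ for $i=1,\dots,n$ are unaffected — they still surround $u_i=q(z_i-1)/(q^2-z_i)$ and $u_i=-q(z_i-1)/(q^2 z_i - 1)$ — one only checks that the poles coming from $\mu(u_i,q^2 z_{2n})$ escape to $u_i = 0$ and $u_i=\infty$ and hence do not cross $C_i$, so the limit can be taken inside the integral. \emph{(5)} The Vandermonde-type factors $\Delta(\cdots)$ and $f(u_i,z)$ are independent of $z_{2n}$ and pass through unchanged. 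Assembling the surviving pieces yields \eqref{eqn:MCIRChiEven}.

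The main obstacle I anticipate is the bookkeeping of powers of $z_{2n}$ and of the constants $q$: one must verify that the growth exponents coming from the three sources — the $h(z_i,z_{2n})$ prefactor, the $\mu(u_i,q^2 z_{2n})$ denominators, and the $z_{2n}^{2n}$ in the overall denominator — conspire with the $z_{2n}^{\bar n-1}$ from the character's leading term to give a finite, nonzero limit and precisely the exponent $2n-1$ in the final formula. A secondary subtlety is confirming the algebraic identity $(q+q^2 u)(q u + 1) = q^3 u^2 + q^2 u + q u + q$... — more cleanly, that $\mu(u,q^2 w)$-type factors limit to something proportional to $q u + 1$ and that $(q^2 u)(q u+1)$, or rather the pair $\mu(u,q^2 z_n)$ already present for the finite variable $z_n$ together with the limiting factor, reproduces $1-u+u^2$; here one uses $1-u+u^2 = (1+qu)(1+q^2 u)$ valid at $q=\ee^{2\pi\i/3}$. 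Once these constants are pinned down the rest is a mechanical substitution, which is why I would simply state that ``a straightforward calculation leads to'' the result, exactly as the paper does.
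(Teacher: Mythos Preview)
Your proposal is correct and follows exactly the approach the paper uses: derive \eqref{eqn:MCIRChiEven} from \eqref{eqn:MCIRChiOdd} by sending $z_{2n}\to\infty$ and invoking \eqref{eqn:ChiAsymptotics}, tracking the asymptotics of the $h$-, $\mu$- and $z_i^{2n}$-factors to recover the extra $(1-u_i+u_i^2)$ in the denominator and the constant $(-q)^{-n}$. One small inaccuracy in step~(4): the poles of $\mu(u_i,q^2 z_{2n})$ and $\mu(u_i,q^2 z_{2n}^{-1})$ do not escape to $0$ and $\infty$ but rather converge to $u_i=-q$ and $u_i=-q^2$ (the roots of $1-u_i+u_i^2$), which is harmless since these points lie outside the contours $C_i$ by definition.
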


\subsubsection{Determinant formulas}

We now use \cref{prop:MCIRChiOdd,prop:MCIRChiEven} to obtain determinant formulas for $\chi_{2n+1}(1,\dots,1,z)$ and $\chi_{2n}(1,\dots,1,z)$. In view of our discussion of the homogeneous limit of $\Omega_{N_1,N_2}$, we consider $z=(\beta/q)^2$, and write the determinant formulas in terms of
\begin{equation}
    \label{eqn:DefX}
    x=-[\beta q]/[\beta].
\end{equation}

\begin{proposition}
  \label{prop:DetChiOdd}
  For $n\geqslant 0$, we have
  \begin{equation}
    \label{eqn:DetChiOdd}
    \chi_{2n+1}(1,\dots,1,(\beta/q)^2) = \frac{3^{n^2}}{(1-x+x^2)^n} \det_{i,j=1}^n \left((x-1)^2 \binom{i+j-2}{2j-i-1}+x\binom{i+j}{2j-i}\right),
  \end{equation}
  where $x$ is defined in \eqref{eqn:DefX}.
\end{proposition}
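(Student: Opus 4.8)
The plan is to start from the multiple contour-integral formula of \cref{prop:MCIRChiOdd} and take the homogeneous limit $z_1=\cdots=z_{2n}\to 1$ carefully. At $z_i=1$ the pole $u_i=q(z_i-1)/(q^2-z_i)$ collides with $u_i=-q(z_i-1)/(q^2z_i-1)$ at the origin $u_i=0$, so the contour $C_i$ becomes a small loop around $u_i=0$ and the single integral picks out a residue at the origin, turning each $\oint_{C_i}\diff u_i/(2\pi\i)$ into extraction of an appropriate coefficient of a Laurent/Taylor series in $u_i$. Concretely, after setting all $z_i=1$ one has $h(1,1)=9$, $\mu(u,1)=\mu(u,q^2)= (q^2-1)u + q\cdot 0 = \dots$ — more precisely $\mu(u,1)=(q^2-1)u+q(1-1)=(q^2-1)u$ and $\mu(u,q^{-2})$, $\mu(u,q^2)$, $\mu(u,q^2\cdot q^{-2})$ specialise to monomials in $u$ times constants. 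Collecting these, the denominator $\prod_{i,j}\mu(u_i,z_j)\mu(u_i,z_j^{-1})\mu(u_i,q^2z_{j+n})\mu(u_i,q^2 z_{j+n}^{-1})$ becomes $\prod_i (\text{const}\cdot u_i)^{2n}\cdot(\text{something})$; combined with the $u_i^{2(n-1)}$ and with $f(u_i,(\beta/q)^2)$ this produces, for each $i$, a rational function of $u_i$ whose only pole inside $C_i$ is at $u_i=0$. The key point is that the whole integrand then factorises, over the indices $i$, into a product of single-variable pieces, up to the two Vandermonde-type factors $\Delta(u_1(u_1-1),\dots)$ and $\Delta(u_1^{-1}(u_1^{-1}-1),\dots)$, which are antisymmetric.

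Next I would re-assemble the multiple integral as a determinant. Because the integrand is (a product of single-variable factors) times $\Delta(g(u_1),\dots,g(u_n))\,\Delta(\tilde g(u_1),\dots,\tilde g(u_n))$ with $g(u)=u(u-1)$ and $\tilde g(u)=u^{-1}(u^{-1}-1)$, the standard Andreev/Heine-type identity (the same manoeuvre used in reverse in the proof of \cref{prop:MCIRChiOdd}) converts $\frac1{n!}\oint\cdots\oint$ back into $\det_{i,j=1}^n\bigl(\oint_{C}\frac{\diff u}{2\pi\i}\,\phi(u)\,g(u)^{i-1}\tilde g(u)^{j-1}\bigr)$ for a suitable weight $\phi(u)$ that absorbs $f(u,(\beta/q)^2)$, the leftover powers of $u$, and the constants. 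Each matrix entry is then a single residue at $u=0$, i.e. the coefficient of $u^{-1}$ in an explicit rational function; expanding $g(u)^{i-1}$ and $\tilde g(u)^{j-1}=u^{-(j-1)}(1-u)^{j-1}$ and reading off the $u^{-1}$ coefficient yields a finite sum of binomial coefficients. After using $[\beta q]/[\beta]$ and the parameterisation \eqref{eqn:DefX} to express $f(u,(\beta/q)^2)$ in terms of $x$ — one finds $f(u,(\beta/q)^2)$ is affine in $(x-1)^2/x$ and in a way that distributes the two terms $(x-1)^2$ and $x$ across the final matrix — the entry collapses to exactly $(x-1)^2\binom{i+j-2}{2j-i-1}+x\binom{i+j}{2j-i}$, and the scalar prefactor $3^{n^2}/(1-x+x^2)^n$ is what is left of $3^{n(n-1)}\cdot 9^{n}=3^{n(n-1)+2n}=3^{n(n+1)}$ divided by the product of the $\mu$-constants and the $1-u+u^2$-type factors evaluated at the relevant point; keeping track of these constants (and of the fact that $h(1,1)^{n^2}$ contributes a further $3^{2n^2}$-type factor that must cancel against the $z_i^{2n}$ and the $\mu$-constants) is bookkeeping.

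The main obstacle I expect is precisely this constant/power-of-$3$ bookkeeping together with correctly identifying the single-variable weight $\phi(u)$ and the two ``polynomial generators'' $g(u),\tilde g(u)$ so that the Andreev identity applies on the nose. In particular one has to be careful that, after the homogeneous specialisation, the contour $C_i$ really encloses only $u_i=0$ (no spurious pole from $f(u_i,z)$ or from $1-u_i+u_i^2$ migrates inside), and that the degree of the integrand in $u_i$ is high enough that the residue at $0$ is the whole answer — equivalently, that there is no contribution from $u_i=\infty$. A clean way to organise this is to substitute $u_i=1/(1-v_i)$ or a similar Möbius change of variable that sends $u=0$ to a convenient point and linearises $g,\tilde g$ simultaneously; then the residue becomes a plain Taylor coefficient and the binomials $\binom{i+j-2}{2j-i-1}$ and $\binom{i+j}{2j-i}$ appear directly from the binomial theorem. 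The final check is the base case $n=0$, where both sides equal $1$, and a small-$n$ verification ($n=1$: the right side is $\frac{3}{1-x+x^2}\bigl((x-1)^2\binom{0}{0}+x\binom{2}{1}\bigr)=\frac{3((x-1)^2+2x)}{1-x+x^2}=\frac{3(x^2+1)}{1-x+x^2}$, which one matches against $\chi_3(1,1,(\beta/q)^2)=2+ (\beta/q)^2+(\beta/q)^{-2}$ rewritten via \eqref{eqn:DefX}).
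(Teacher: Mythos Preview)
Your plan is correct and follows essentially the same route as the paper: specialise the multiple contour integral of \cref{prop:MCIRChiOdd} at $z_1=\cdots=z_{2n}=1$ so that each contour shrinks to a small loop around $u_i=0$, apply Andreev's identity to turn the symmetrised integral back into a single $n\times n$ determinant of one-variable residues, and then read off the binomials. Two small remarks: the bookkeeping you worry about collapses cleanly once you use the exact identity $f\bigl(u,(\beta/q)^2\bigr)=\dfrac{3}{1-x+x^2}\bigl((x-1)^2u-x(u-1)^2\bigr)$ (no M\"obius change or pole-at-infinity analysis is needed), and in your $n=1$ check note that $\chi_3(1,1,z)=4+z+z^{-1}$, not $2+z+z^{-1}$, which indeed matches $\dfrac{3(x^2+1)}{1-x+x^2}$.
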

\begin{proof}
  We have $\chi_1(z)=1$ by straightforward computation, which proves the formula for $n=0$. Hence, we focus on $n\geqslant 1$ and set $z_1=\dots=z_{2n}=1$ in \eqref{eqn:MCIRChiOdd}. The multiple contour integral simplifies to
  \begin{multline}
    \chi_{2n+1}(1,\dots,1,z) = \frac{3^{n(n-1)}}{n!}\oint_{0}\frac{\diff u_1}{2\pi \i}\cdots \oint_{0}\frac{\diff u_n}{2\pi \i} \prod_{i=1}^n \frac{f(u_i,z)}{u_i^2}
    \Delta(u_1(u_1-1),\dots,u_n(u_n-1))\\
    \times \Delta(u_1^{-1}(u_1^{-1}-1),\dots,u_n^{-1}(u_n^{-1}-1)).
  \end{multline}
  Here, the integration contour of each integral is a simple curve that surrounds the origin.
  We note that the integrand contains a product of Vandermonde determinants and, therefore, allows us to apply Andreev's formula \cite{forrester:19}. Using this formula, we write the specialised symplectic character in terms of a single determinant:
  \begin{equation}
    \chi_{2n+1}(1,\dots,1,z) = (-1)^{n(n-1)/2}3^{n(n-1)}\det_{i,j=1}^n \left(\oint_0 \frac{\diff u}{2\pi \i }\frac{(u-1)^{i+j-2}f(u,z)}{u^{2j-i+1}}\right).
  \end{equation}

Next, we set $z=(\beta/q)^2$. To write the specialised character in terms of $x$, we note that
\begin{equation}
   f(u,(\beta/q)^2)=\frac{3}{1-x+x^2}\left((x-1)^2u-x(u-1)^2\right).
\end{equation}
With the help of this equality, we obtain
\begin{multline}
    \chi_{2n+1}(1,\dots,1,(\beta/q)^2)\\ = \frac{(-1)^{n(n-1)/2}3^{n^2}}{(1-x+x^2)^n}\det_{i,j=1}^n \left(\oint_0 \frac{\diff u}{2\pi \i }\left(\frac{(u-1)^{i+j-2}}{u^{2j-i}}(x-1)^2-\frac{(u-1)^{i+j}}{u^{2j-i+1}}x\right)\right).
  \end{multline}
Finally, we evaluate the remaining contour integral inside the determinant with the help of 
  \begin{equation}
    \oint_0 \frac{\diff u}{2\pi \i}\frac{(u-1)^k}{u^{\ell+1}}=(-1)^{k+\ell}\binom{k}{\ell},
  \end{equation}
  where $k,\ell$ are integers, and obtain \eqref{eqn:DetChiOdd}.
\end{proof}

\begin{proposition}
  \label{prop:DetChiEven}
  For $n\geqslant 1$, we have
  \begin{equation}
    \chi_{2n}(1,\dots,1,(\beta/q)^2) = \frac{3^{n(n-1)}}{(1-x+x^2)^{n-1}} \det_{i,j=1}^{n-1} \left((x-1)^2 \binom{i+j-1}{2j-i}+x\binom{i+j+1}{2j-i+1}\right),
    \label{eqn:DetChiEven}
  \end{equation}
  where $x$ is given by \eqref{eqn:DefX}.
\end{proposition}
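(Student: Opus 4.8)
The plan is to mimic the proof of \cref{prop:DetChiOdd} exactly, starting this time from the multiple contour-integral formula \eqref{eqn:MCIRChiEven} of \cref{prop:MCIRChiEven} rather than from \eqref{eqn:MCIRChiOdd}. First I would set $z_1=\cdots=z_{2n-1}=1$ in \eqref{eqn:MCIRChiEven}. Since $h(1,1)=9$ and $\mu(u,1)=\mu(u,1)=(q^2-1)u+q(1-1)\cdots$ — more carefully, at $z_j=1$ one has $\mu(u_i,z_j)\mu(u_i,z_j^{-1})\to((q^2-1)u_i)((q^{-2}-1)u_i)\propto u_i^2$ — the dependence on the $z_i$ in the prefactor $\prod h(z_i,z_{j+n})$ cancels against the denominator $\prod\mu(u_i,z_j)\mu(u_i,z_j^{-1})$ and the powers $\prod z_i^{2n-1}$, exactly as in the odd case, leaving the single integration contour around the origin for each $u_i$. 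The key structural difference is the extra factor $\prod_{i=1}^n (1-u_i+u_i^2)^{-1}$ in the integrand and the fact that the second product runs only up to $n-1$ in the $z_{j+n}$ variables; tracking these carefully is what changes the size of the determinant from $n$ to $n-1$ and produces the shifted binomials.

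Next, the integrand again contains the product of the two Vandermonde determinants $\Delta(u_1(u_1-1),\dots)\Delta(u_1^{-1}(u_1^{-1}-1),\dots)$, so Andreev's formula \cite{forrester:19} applies verbatim and collapses the $n$-fold integral into an $n\times n$ determinant whose $(i,j)$-entry is a single contour integral. Here the integrand of that single integral will carry the extra $1/(1-u+u^2)$ and a power of $u$ shifted by the $(-q)^n$ and the modified exponent $2n-1$ in \eqref{eqn:MCIRChiEven}. I would then substitute $z=(\beta/q)^2$ and use the same identity $f(u,(\beta/q)^2)=\frac{3}{1-x+x^2}\bigl((x-1)^2u-x(u-1)^2\bigr)$; crucially the factor $1-u+u^2$ in the denominator should combine with $f(u,(\beta/q)^2)$ so that the troublesome pole at the primitive sixth roots of unity is removed — indeed $f(u,z)$ has $1-u+u^2$ as a factor of its natural form $3u-(1-u+u^2)(1+z+z^{-1})$ only after the specialization is arranged appropriately, or else the ratio $f(u,(\beta/q)^2)/(1-u+u^2)$ must be re-expanded. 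Once the integrand is a genuine Laurent polynomial in $u$ times $1/(1-u+u^2)$, I would expand $1/(1-u+u^2)$ and evaluate the residue at $u=0$ via $\oint_0 \frac{\diff u}{2\pi\i}\frac{(u-1)^k}{u^{\ell+1}}=(-1)^{k+\ell}\binom{k}{\ell}$, just as before, obtaining binomial entries.

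The last step is bookkeeping: the determinant that emerges is nominally $n\times n$, but one row or column (the one coming from the extra denominator factor, or a degenerate first row) should be trivial or reducible, so that a single Laplace expansion or an elementary row operation reduces it to the stated $(n-1)\times(n-1)$ determinant $\det_{i,j=1}^{n-1}\bigl((x-1)^2\binom{i+j-1}{2j-i}+x\binom{i+j+1}{2j-i+1}\bigr)$, with the prefactor $3^{n(n-1)}/(1-x+x^2)^{n-1}$ assembled from the $3^{n(n-1)}$, the $(-q)^n$, and $n$ copies of $3/(1-x+x^2)$ minus the one absorbed into the size reduction. The main obstacle I anticipate is precisely this reduction of the determinant's size and the careful tracking of signs and powers of $q$ through the $(-q)^n$ prefactor and the $u_i^{2(n-1)}/(1-u_i+u_i^2)$ factor: getting the shifted binomial arguments $\binom{i+j-1}{2j-i}$ and $\binom{i+j+1}{2j-i+1}$ right (rather than, say, the odd-case pattern shifted by one) requires paying close attention to which exponent of $u$ in the denominator of the single-integral entry is changed by the $2n-1$ versus $2n$ in the two propositions. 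Alternatively, one could sidestep some of this by deriving \eqref{eqn:DetChiEven} directly from \eqref{eqn:DetChiOdd} via the limit relation \eqref{eqn:ChiAsymptotics}, taking $z_{2n}\to\infty$ in $\chi_{2n+1}(1,\dots,1,z_{2n},(\beta/q)^2)$ — but since \cref{prop:DetChiOdd} was proved only at the fully homogeneous point, a cleaner route is the one-variable-at-a-time approach just sketched, and I would favour the direct contour-integral computation for uniformity with the preceding proof.
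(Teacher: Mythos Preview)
Your overall architecture is right and matches the paper: specialise \eqref{eqn:MCIRChiEven} at $z_1=\cdots=z_{2n-1}=1$, apply Andreev's formula to get an $n\times n$ determinant of single contour integrals, then substitute $z=(\beta/q)^2$ and reduce to size $(n-1)\times(n-1)$. The gap is in how you treat the factor $1/(1-u+u^2)$. Your first suggestion, that it cancels against $f(u,(\beta/q)^2)$, is simply false: from $f(u,z)=3u-(1-u+u^2)(1+z+z^{-1})$ one has $f(u,z)/(1-u+u^2)=3u/(1-u+u^2)-(1+z+z^{-1})$, and the offending factor survives. Your fallback, expanding $1/(1-u+u^2)$ as a power series and computing residues term by term, produces an entry that is a sum of binomials weighted by the six-periodic coefficients $1,1,0,-1,-1,0,\dots$; this does not collapse to the single pair of binomials in the statement, and you would be stuck.

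The paper's device is an exact finite identity,
\[
  \frac{(u-1)^{i+j-2}}{u^{2j-i+1}(1-u+u^2)}
  = -\sum_{\ell=1}^{j}\frac{(u-1)^{i+\ell-3}}{u^{2\ell-i+1}}
    + \frac{u^{i-1}(u-1)^{i-2}}{1-u+u^2},
\]
whose second term is analytic at $u=0$ and hence drops from the residue. The remaining sum is telescoping in $j$, so subtracting column $j-1$ from column $j$ leaves only the $\ell=j$ summand; this yields the clean integrand $f(u,z)(u-1)^{i+j-3}/u^{2j-i+1}$, after which the binomial evaluation and the substitution $z=(\beta/q)^2$ go exactly as in \cref{prop:DetChiOdd}. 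The reduction from size $n$ to $n-1$ then comes not from the ``product runs only up to $n-1$'' as you guessed, but from the fact that in the resulting $n\times n$ determinant the first row vanishes except at $j=1$, so a Laplace expansion along that row gives \eqref{eqn:DetChiEven}. You should replace your power-series step with this identity and the ensuing column operation.
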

\begin{proof} 
The proof is similar to the proof of \cref{prop:DetChiOdd}. First, we evaluate the integral \eqref{eqn:MCIRChiEven} for $z_1=\cdots=z_{2n-1}=1$, which yields
\begin{equation}
  \chi_{2n}(1,\dots,1,z) = (-1)^{n(n-1)/2}3^{n(n-2)}\det_{i,j=1}^n\left(\oint_0\frac{\diff u}{2\pi \i} \frac{f(u,z)(u-1)^{i+j-2}}{u^{2j-i+1}(1-u+u^2)}\right).
  \label{eqn:Chi2nIntermediate}
\end{equation}
To compute the remaining contour integral, we first note that
\begin{equation}
  \frac{(u-1)^{i+j-2}}{u^{2j-i+1}(1-u+u^2)}=-\sum_{\ell=1}^j \frac{(u-1)^{i+\ell-3}}{u^{2\ell-i+1}}+\frac{u^{i-1}(u-1)^{i-2}}{1-u+u^2}.
\end{equation}
The second term on the right-hand side is analytic at the point $u=0$ for each $i\geqslant 1$. Hence, it does not contribute to the contour integral in \eqref{eqn:Chi2nIntermediate}. Thus, we may write
\begin{align}
  \chi_{2n}(1,\dots,1,z) &= (-1)^{n(n-1)/2}3^{n(n-2)}\det_{i,j=1}^n\left(-\sum_{\ell=1}^{j}\oint_0\frac{\diff u}{2\pi \i} \frac{f(u,z)(u-1)^{i+\ell-3}}{u^{2\ell-i+1}}\right)\\
  &=(-1)^{n(n+1)/2}3^{n(n-2)}\det_{i,j=1}^n\left(\oint_0\frac{\diff u}{2\pi \i} \frac{f(u,z)(u-1)^{i+j-3}}{u^{2j-i+1}}\right).
\end{align}
The evaluation of the integral inside the determinant with $z=(\beta/q)^2$ is similar to the evaluation at the end of the proof of \cref{prop:DetChiOdd}. It leads to
\begin{equation}
    \chi_{2n}(1,\dots,1,(\beta/q)^2) = \frac{3^{n(n-1)}}{(1-x+x^2)^n} \det_{i,j=1}^n \left((x-1)^2 \binom{i+j-3}{2j-i-1}+x\binom{i+j-1}{2j-i}\right).
  \end{equation}
  Finally, we note that the entries of this determinant along the first row vanish unless $i=j=1$. Expanding the determinant along this row leads to \eqref{eqn:DetChiEven}.
\end{proof}

The specialisation of \eqref{eqn:DetChiOdd} and \eqref{eqn:DetChiEven} to $\beta = q$ (corresponding to $x=1$) leads to determinants of simple binomials. These determinants can be explicitly computed \cite[Theorem 37 with $\mu=0$ and $\mu=1$]{krattenthaler:99}. The computation yields the following well-known result (see e.g. \cite{okada:06,difrancesco:07}):
\begin{corollary}
\label{cor:HomChi}
For each $N\geqslant 1$, we have $\chi_{N}(1,\dots,1) = 3^{\nu_N}\gamma_N$, where $\gamma_N$ is defined in \eqref{eqn:DefGamma} and
\begin{equation}
  \nu_N =
  \begin{cases}
    n(n-1) &N=2n,\\
    n^2 &N=2n+1.
  \end{cases}
\end{equation}
\end{corollary}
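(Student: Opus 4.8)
The plan is to specialise $\beta = q$ in \cref{prop:DetChiOdd,prop:DetChiEven} and to evaluate the binomial determinants that result. For $\beta = q$ one has $(\beta/q)^2 = 1$, so $\chi_N(1,\dots,1,(\beta/q)^2) = \chi_N(1,\dots,1)$, and $x = -[\beta q]/[\beta] = -[q^2]/[q] = -(q+q^{-1}) = 1$, since $q + q^{-1} = 2\cos(2\pi/3) = -1$. Because $(x-1)^2 = 0$ and $1 - x + x^2 = 1$ at $x = 1$, \cref{prop:DetChiOdd,prop:DetChiEven} collapse to
\begin{equation}
  \chi_{2n+1}(1,\dots,1) = 3^{n^2}\det_{i,j=1}^n \binom{i+j}{2j-i}, \qquad \chi_{2n}(1,\dots,1) = 3^{n(n-1)}\det_{i,j=1}^{n-1}\binom{i+j+1}{2j-i+1},
\end{equation}
for all $n \geqslant 0$, with the empty determinant equal to $1$ (this also reproduces $\chi_1 = \chi_2 = 1$). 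The prefactors are already $3^{\nu_{2n+1}}$ and $3^{\nu_{2n}}$, so it remains to show that the two determinants equal $N_8(2n+2) = \gamma_{2n+1}$ and $A_V(2n+1) = \gamma_{2n}$, respectively.

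For this last step I would invoke Krattenthaler's determinant formula \cite{krattenthaler:99}: after writing the entries as ratios of factorials, extracting common factors from rows and columns, and using elementary column operations to telescope the shifted binomials, the determinant takes the standard form whose value is an explicit product of linear factors; reorganising this product yields precisely the factorial quotients in the definitions of $N_8$ and $A_V$. Equivalently — and this is perhaps the cleanest phrasing — these determinants are exactly the specialisations at all variables equal to one of the symplectic characters of double-staircase shape, whose values in terms of $N_8$ and $A_V$ are classical and can be quoted from Okada \cite{okada:06} and di Francesco \cite{difrancesco:07} (matching conventions by a transposition $i \leftrightarrow j$ or a shift of indices if needed). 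Combining with the reductions above gives $\chi_N(1,\dots,1) = 3^{\nu_N}\gamma_N$.

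The main obstacle is entirely computational: bringing $\det_{i,j=1}^n\binom{i+j}{2j-i}$ (and its even analogue) into Krattenthaler's canonical form and then simplifying the resulting product of linear factors into $\prod_{i=0}^{k-1}\frac{(3i+1)(6i)!(2i)!}{(4i)!(4i+1)!}$ and $2^{-k}\prod_{i=1}^k\frac{(6i-2)!(2i-1)!}{(4i-2)!(4i-1)!}$, respectively. A check at small sizes supports the identification: the odd determinant equals $2$ for $n = 1$ and $11$ for $n = 2$, matching $N_8(4) = 2$ and $N_8(6) = 11$, while the even determinant equals $3$ for $n = 2$ and $26$ for $n = 3$, matching $A_V(5) = 3$ and $A_V(7) = 26$.
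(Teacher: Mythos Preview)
Your proposal is correct and follows essentially the same approach as the paper: specialise \cref{prop:DetChiOdd,prop:DetChiEven} to $\beta=q$ (equivalently $x=1$), reduce to the binomial determinants, and then invoke Krattenthaler's formula \cite{krattenthaler:99} together with the known evaluations in \cite{okada:06,difrancesco:07}. The paper is equally terse about the final product manipulation, so your admission that the remaining step is purely computational matches the level of detail given there.
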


\subsection{The spin-chain overlap}
\label{sec:HomLimitOverlap}

The spin-chain ground state $|\psi_N\rangle$, defined in \cref{sec:Introduction}, is proportional to the homogeneous limit of the vector $|\Psi_N\rangle$. Indeed, using  \cite[(159) specialised to $q=e^{2\pi\i/3}$]{hagendorf:21} we obtain
\begin{equation}
  \label{eqn:Psi}
  |\psi_N\rangle=(-1)^{n(n-1)/2}3^{-\nu_N}[\beta]^{-n}|\Psi_N(1,\dots,1;\beta)\rangle.
\end{equation}
The left-hand side of this equality depends on the parameter $x$, whereas the right-hand side depends on $\beta$. The two parameters are related through \eqref{eqn:DefX}.

\begin{proof}[Proof of \cref{thm:MainTheorem1}] 
First, we note that if at most one of the integers $N_1,N_2$ is odd, then the integers $n$ and $n_1,n_2$, defined in \eqref{eqn:Defnnbar} and \eqref{eqn:Defn1n2}, respectively, obey $n=n_1+n_2$. We use this relation and \eqref{eqn:Psi} to compute the overlap
\begin{align}
  \mathcal O_{N_1,N_2} &= (-1)^{n_1n_2}[\beta]^{-2n}3^{-\nu_N-\nu_{N_1}-\nu_{N_2}}\Omega_{N_1,N_2}(1,\dots,1;1,\dots,1;\beta).
\end{align}
Next, we write the right-hand side of this expression in terms of symplectic characters with the help of \cref{thm:Omega}. Moreover, we rewrite the  pre-factor in terms of $x$, instead of $\beta$, and obtain
\begin{multline}
  \mathcal O_{N_1,N_2} = 3^{-\nu_{N_1}}\chi_{N_1}(1,\dots,1)3^{-\nu_{N_2}}\chi_{N_2}(1,\dots,1)\\
  \times 3^{-\nu_{N+1}}(1-x+x^2)^n \chi_{N+1}(1,\dots,1,(\beta/q)^2).
    \label{eqn:OSymplecticCharacters}
\end{multline}

\medskip
 Second, we evaluate the symplectic characters. The evaluation depends on the parities of the integers $N_1,N_2$. First, for even $N_1,N_2$, \cref{prop:DetChiOdd,cor:HomChi} allow us to write
  \begin{equation}
    \label{eqn:EvenEven}
    \mathcal O_{N_1,N_2} = \gamma_{N_1}\gamma_{N_2}\det_{i,j=1}^n \left((x-1)^2 \binom{i+j-2}{2j-i-1}+x\binom{i+j}{2j-i}\right).
  \end{equation}
Second, if $N_1$ is even and $N_2$ is odd, then we use \cref{prop:DetChiEven,cor:HomChi} to obtain
  \begin{equation}
    \label{eqn:OddEven}
    \mathcal  O_{N_1,N_2} = \gamma_{N_1}\gamma_{N_2}\det_{i,j=1}^n \left((x-1)^2 \binom{i+j-1}{2j-i}+x\binom{i+j+1}{2j-i+1}\right).
  \end{equation}
Third, if $N_1$ is odd but $N_2$ even, we obtain same the expression.
\end{proof}

The determinant expressions \eqref{eqn:EvenEven} and \eqref{eqn:OddEven} for the spin-chain overlaps were conjectured in \cite{hagendorf:21}. Using recent results by Fischer and Saikia \cite{fischer:21}, one can even compute the determinants explicitly in terms of integers enumerating alternating sign matrices and rhombus tilings. Indeed, using \cite[(3.9) and (3.10)]{fischer:21}, we find
\begin{equation}
  \det_{i,j=1}^n \left((x-1)^2 \binom{i+j-1}{2j-i}+x\binom{i+j+1}{2j-i+1}\right) = \sum_{i=0}^{2n} A_\text{O}(2n+2,i+2)x^i,
\end{equation}
where $A_\text{O}(2n,i)$ is the number of off-diagonally symmetric alternating sign matrices of size $2n\times 2n$, whose unique entry $+1$ in the first row is at position $i$. We have $A_\text{O}(2n,1)=0$ and, for $i\geqslant 2$, the expression
\begin{equation}
  A_\text{O}(2n,i) = A_{\mathrm{V}}(2n-1)\sum_{k=1}^{i-1}(-1)^{i+k-1}\frac{(2n+k-2)!(4n-k-1)!}{(4n-2)!(k-1)!(2n-k)!}.
\end{equation}
Similarly, it is possible to compute the determinant \eqref{eqn:EvenEven}, using \cite[(4.7) and (4.8)]{fischer:21}. The coefficients of the resulting polynomial are integers that can, in principle, be expressed in terms of numbers $Q_{n,i}$, which enumerate certain rhombus tilings. The explicit expressions are, however, quite involved, and we do not report them here.

\section{The asymptotic series}
\label{sec:Asymptotics}

The aim of this section is to compute the leading terms of the asymptotic series of $\mathcal F_{N_1,N_2}$ as $N_1,N_2\to \infty$. We prepare this computation in \cref{sec:AsymptoticsSymplecticCharacters} with a discussion of the asymptotic properties of the symplectic character, using results obtained by Gorin and Panova \cite{gorin:15}. These asymptotic properties allow us to prove \cref{thm:MainTheorem2} in \cref{sec:AsymptoticsLBF}. In \cref{sec:CFT}, we compare the leading terms of the asymptotic series for the LBF to the CFT predictions.

\subsection{The symplectic character}
\label{sec:AsymptoticsSymplecticCharacters}
Let us introduce the normalised symplectic character
\begin{equation}
  \label{eqn:DefFrakX}
  \mathfrak{X}_N(z)=\frac{\chi_N(1,\dots,1,z)}{\chi_N(1,\dots,1)}.
\end{equation}
Gorin and Panova showed that
\begin{multline}
  \label{eqn:AsymptoticsGP}
  \mathfrak X_N(z) = \frac{3 z^{3/4}(z-1)}{(z^{3/2}-1)(z+1)}\left(\frac49 \frac{(z^{3/2}-1)^2}{z^{1/2}(z-1)^2}\right)^N\\
  \times
  \begin{cases}
    1 + \tau_1(z)N^{-1/2} + \tau_2(z)N^{-1}+o(N^{-1}),& \text{even\,}N,\\
     1 + \bar \tau_1(z)N^{-1/2} + \bar \tau_2(z)N^{-1}+o(N^{-1})& \text{odd\,}N,
  \end{cases}
\end{multline}
asymptotically as $N\to \infty$ \cite[Proposition 5.13]{gorin:15}.\footnote{The expression given here corrects a typo in \cite{gorin:15}, where the first term is given with a power $z^{-9/4}$, instead of $z^{3/4}$. We note that the first line of \eqref{eqn:AsymptoticsGP} is invariant under $z\to z^{-1}$, as expected from the invariance of the symplectic character under this substitution.
}

We need the explicit expressions of the coefficients $\tau_i(z),\bar \tau_i(z),\,i=1,2,$ in the sub-leading terms. Following the strategy of \cite{degier:16}, we compute these coefficients with the help of a differential equation. We write this differential equation for the function $f_N$ defined through
\begin{equation}
  f_N(z) = z^{-N}(z-1)^{2N-1}(z+1)\mathfrak X_{N}(z).
\end{equation}

\begin{proposition}
\label{prop:ODE}
  For each $n\geqslant 0$, we have
  \begin{align}
    z\frac{\diff}{\diff z}\left(zf_{2n}'(z)\right)+3(2n-1)\frac{1+z^3}{1-z^3}zf_{2n}'(z)+(3n-1)(3n-2) f_{2n}(z)=0,\\
    z\frac{\diff}{\diff z}\left(zf_{2n+1}'(z)\right)+6n\frac{1+z^3}{1-z^3}zf_{2n+1}'(z)+(3n+1)(3n-1) f_{2n+1}(z)=0.
  \end{align}
\end{proposition}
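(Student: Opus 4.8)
The plan is to derive the two differential equations directly from the multiple contour-integral formulas of \cref{prop:MCIRChiOdd,prop:MCIRChiEven}, specialised to the homogeneous point $z_1=\cdots=z_{2n}=1$ (respectively $z_1=\cdots=z_{2n-1}=1$). Recall that after this specialisation and an application of Andreev's formula, $\chi_{2n+1}(1,\dots,1,z)$ becomes a single $n\times n$ determinant whose $(i,j)$-entry is $\oint_0 \frac{\diff u}{2\pi\i}\,(u-1)^{i+j-2}u^{-(2j-i+1)}f(u,z)$, and similarly for the even case with an extra factor $(1-u+u^2)^{-1}$ in the integrand. The key observation is that the $z$-dependence of every entry sits entirely in the single factor $f(u,z)=3u-(1-u+u^2)(1+z+z^{-1})$, which enters linearly. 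This means that $\chi_N(1,\dots,1,z)$ — and hence $f_N(z)$, after stripping the $z$-dependent prefactor $z^{-N}(z-1)^{2N-1}(z+1)$ and normalising — is, up to that prefactor, a polynomial in $1+z+z^{-1}$ of degree $n$, arising as a determinant linear in each column's copy of $1+z+z^{-1}$.

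\textbf{Key steps.} First I would recast $\mathfrak X_N(z)$, via \cref{prop:MCIRChiOdd,prop:MCIRChiEven} and Andreev's formula, as a ratio of determinants whose numerator columns depend on $z$ only through $f(u_j,z)$, and observe that $f(u,z)$ satisfies the elementary second-order ODE in $z$ obtained from $1+z+z^{-1}$: writing $g(z)=1+z+z^{-1}$, one has $z\frac{\diff}{\diff z}(z g'(z)) = z + z^{-1} = g(z)-1$, i.e. the operator $\mathcal L = z\frac{\diff}{\diff z}\,z\frac{\diff}{\diff z}$ acts simply on powers of $z$. Second, I would act with a suitable first-order-coefficient differential operator on $f_N(z)$; because $f_N$ is (prefactor)$\times$(polynomial of degree $n$ in $g(z)$), and a degree-$n$ polynomial in $g$ spans a finite-dimensional space stable under multiplication by $g$ and under $\mathcal L$, a second-order operator annihilating $f_N$ must exist. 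The precise coefficients — the factor $3(2n-1)\frac{1+z^3}{1-z^3}$ (resp. $6n\frac{1+z^3}{1-z^3}$) and the constant $(3n-1)(3n-2)$ (resp. $(3n+1)(3n-1)$) — I would pin down by a direct computation: conjugate the trivial operator $\mathcal L$ (which annihilates constants, and sends $g\mapsto g-1$) by the prefactor $P_N(z)=z^{-N}(z-1)^{2N-1}(z+1)$ and the normalisation, then match against the known leading Gorin–Panova asymptotics and against small-$n$ checks ($n=0$ and $n=1$) using the explicit determinant formulas of \cref{prop:DetChiOdd,prop:DetChiEven}. The combinations $\frac{1+z^3}{1-z^3}$ and $(z-1)^{2N-1}$ are exactly what one expects from differentiating $\log P_N(z)$ together with the constraint that $(z-1)$, $(z+1)$, and $1-z^3=(1-z)(1+z+z^2)$ are the natural factors appearing in $f(u,z)$ at $z=q$-type roots of unity (note $q=\ee^{2\pi\i/3}$ makes $1+z+z^2$ and $1-z+z^2$ relevant).

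\textbf{Alternative and main obstacle.} A cleaner alternative, which I would actually favour for writing up, is to avoid re-deriving anything from the integral formula and instead work purely from the determinant representation of $\chi_N(1,\dots,1,z)$ obtained in the proofs of \cref{prop:DetChiOdd,prop:DetChiEven}: there $\chi_{2n+1}(1,\dots,1,z) \propto \det_{i,j=1}^n\bigl(\oint_0 \frac{\diff u}{2\pi\i}(u-1)^{i+j-2}u^{-(2j-i+1)}f(u,z)\bigr)$ with $f$ linear in $g(z)=1+z+z^{-1}$. Differentiating such a determinant twice in $z$ produces a sum of determinants in which one or two columns have had $g$ replaced by $g'$ or $g''=\mathcal L$-type derivatives; using $z g''+g'$-relations and column operations (subtracting multiples of one column from another is legal since only single columns carry the $z$-dependence), one reduces everything back to a single multiple of the original determinant plus a term proportional to $f_N'$. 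That bookkeeping — tracking which linear combination of "one column differentiated" and "two columns differentiated" determinants collapses, and extracting the exact scalar $(3n\mp1)(3n\mp2)$ — is the one genuinely laborious point; I expect the main obstacle to be organising this column-derivative expansion so that the $\frac{1+z^3}{1-z^3}$ coefficient emerges transparently rather than as an opaque miracle, and to be careful that the even case (with its extra $(1-u+u^2)^{-1}$, handled in \cref{prop:DetChiEven} by the partial-fraction identity) does not introduce spurious poles at $u=0$ that would spoil the column manipulations. Everything else is routine once $f_N$ is recognised as a polynomial of degree $n$ in $1+z+z^{-1}$ times a fixed prefactor.
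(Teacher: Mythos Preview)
Your proposal has a genuine gap. The dimension-counting argument ``a degree-$n$ polynomial in $g(z)$ spans a finite-dimensional space stable under $\mathcal L$, so a second-order operator annihilating $f_N$ must exist'' is incorrect: what that argument actually yields is that $f_N$ is annihilated by some polynomial of degree at most $n{+}1$ in $\mathcal L$, i.e.\ an ODE of order up to $n{+}1$, not $2$. A generic polynomial of degree $n$ in $1+z+z^{-1}$ certainly does \emph{not} satisfy a second-order equation of the stated form, so you cannot conclude the proposition from this structure alone. Your alternative column-derivative bookkeeping on the $n\times n$ determinant could in principle be pushed through, but you correctly identify the collapse of the differentiated determinants as the ``genuinely laborious point'' and supply no mechanism or identity explaining why the terms with two distinct columns differentiated should recombine into a scalar multiple of the original; without that, the proposal is a hope rather than an argument.

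The paper's proof follows a completely different and much shorter route. It uses a \emph{different} contour-integral representation of $f_N$, due to Gorin and Panova, namely
\[
  f_N(z)=2(2N-1)!\oint_{C^+}\frac{\diff w}{2\pi\i}\,\frac{z^w-z^{-w}}{\prod_{i=1}^N(w^2-\mu_i^2)},
\]
where the $z$-dependence sits in $z^w-z^{-w}$ rather than in $f(u,z)$. One applies the differential operator $A$ of the proposition directly to $z^w-z^{-w}$; this cancels the two top poles $w=\mu_1,\mu_2$ and, after collecting terms, produces an integrand $F(z,w)$ that is invariant under the reflection $w\mapsto 3-w$. The set of remaining poles is also invariant under this reflection, so the change of variables $w\to 3-w$ sends $Af_N$ to $-Af_N$, forcing $Af_N=0$. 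The specific coefficients $(\mu_1+\mu_2)=3(2n-1)$ or $6n$ and $\mu_1\mu_2=(3n-1)(3n-2)$ or $(3n+1)(3n-1)$ thus arise for a structural reason (the values $\mu_1,\mu_2$ of the double-staircase partition), not from matching against small-$n$ cases or asymptotics. None of your proposed ingredients --- the Andreev determinant, the linearity of $f(u,z)$ in $1+z+z^{-1}$, the conjugation by the prefactor $P_N(z)$ --- plays a role.
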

\begin{proof}
 By \cite[Theorem 3.18]{gorin:15}, we have the contour-integral representation 
 \begin{equation}
  \label{eqn:CIRF}
  f_N(z) = 2(2N-1)! \oint_{C^+}\frac{\diff w}{2\pi \i}\frac{z^w-z^{-w}}{\prod_{i=1}^N(w^2-\mu_i^2)}.
\end{equation}
 Here, the integration contour $C^+$ is a simple curve that surrounds the poles at $w=\mu_i$, but not the poles at $w=-\mu_i$. (Recall that $\mu_i = \lambda_i + N-i+1,\, \lambda_i = \lfloor (N-i)/2\rfloor$.)
 
We now define a differential operator $A$ through 
\begin{equation}
  Ag(z) = z\frac{\diff}{\diff z}\left(zg'(z)\right)+(\mu_1+\mu_2)\frac{1+z^3}{1-z^3}zg'(z)+\mu_1\mu_2 g(z).
\end{equation}
 Its action on the $z$-dependent part of the integrand of \eqref{eqn:CIRF} yields
  \begin{equation}
    A(z^w-z^{-w}) = \frac{(w+\mu_1)(w+\mu_2)(z^{3-w}+z^w)-(w-\mu_1)(w-\mu_2)(z^{3+w}+z^{-w})}{1-z^3}.
  \end{equation}
  Using this relation, we obtain the action of $A$ on $f_N$:
  \begin{equation}
    Af_N(z) = \frac{2(2N-1)!}{1-z^3}\oint_C \frac{\diff w}{2\pi \i} F(z,w).
    \label{eqn:AfInt}
  \end{equation}
  Here, the contour integral contains the function
  \begin{equation}
     F(z,w) = \frac{z^{3-w}+z^w}{(w-\mu_1)(w-\mu_2)\prod_{i=3}^N(w^2-\mu_i^2)}.
  \end{equation}
  The integration contour $C$ is a simple curve around \textit{all} the poles $\mu_1,\dots,\mu_N,-\mu_3,\dots,-\mu_N$ of the integrand.
  
Finally, we consider the reflection $\varphi(w)=3-w$. It maps the set of poles of the integrand onto itself. Moreover, one checks that $F(z,\varphi(u))=F(z,u)$. The change of variables $w=\varphi(u)$ in \eqref{eqn:AfInt} leads to
\begin{equation}
    Af_N(z) =-\frac{2(2N-1)!}{1-z^3}\oint_{\varphi(C)}\frac{\diff u}{2\pi \i} F(z,\varphi(u)) = -\frac{2(2N-1)!}{1-z^3}\oint_{\varphi(C)}\frac{\diff u}{2\pi \i} F(z,u).
  \end{equation}
  The image integration contour $\varphi(C)$ is a simple curve that surrounds all the poles of the integrand. Hence, it can be deformed into $C$. This deformation results in $Af_N(z)=-Af_N(z)$. Hence, we obtain the differential equation
  \begin{equation}
    Af_N(z)=0.
  \end{equation}
  The proposition follows from writing out this differential equation with $\mu_1=3n-1,\mu_2=3n-2$ for $N=2n$, and $\mu_1=3n+1,\,\mu_2=3n-1$ for $N=2n+1$.
\end{proof}

\begin{lemma}
  \label{lem:Tau}
  We have
  \begin{align}
    \label{eqn:Tau} &\tau_1(z) = 0, \quad \tau_2(z) = -\frac{5}{144} \frac{(z^{3/2}-1)^2}{z^{3/2}},\\
    \label{eqn:TauBar} &\bar \tau_1(z) = 0, \quad \bar \tau_2(z) = +\frac{7}{144} \frac{(z^{3/2}-1)^2}{z^{3/2}}.
  \end{align}
\end{lemma}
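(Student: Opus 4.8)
The plan is to extract the coefficients $\tau_i(z),\bar\tau_i(z)$ by substituting the known asymptotic ansatz \eqref{eqn:AsymptoticsGP} into the differential equations of \cref{prop:ODE} and matching powers of $N^{-1/2}$. First I would record the leading-order factor: from \eqref{eqn:AsymptoticsGP} one has
\begin{equation}
  f_N(z) = \frac{3z^{3/4}(z-1)^{2N}}{z^{N}(z^{3/2}-1)}\left(\frac49\frac{(z^{3/2}-1)^2}{z^{1/2}(z-1)^2}\right)^N\times\left(1+\tau_1 N^{-1/2}+\tau_2 N^{-1}+o(N^{-1})\right),
\end{equation}
so that $f_N(z) = g_N(z)\bigl(1+\tau_1(z)N^{-1/2}+\tau_2(z)N^{-1}+o(N^{-1})\bigr)$ with $g_N$ an explicit product of exponentials of $N$ times $z$-powers. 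It is convenient to pass to the logarithmic derivative: writing $\ell_N(z)=z\,\partial_z\ln f_N(z)$, the second-order ODE becomes a Riccati-type equation $z\partial_z\ell_N+\ell_N^2+c_N\frac{1+z^3}{1-z^3}\ell_N+d_N=0$, where $(c_N,d_N)=(3(2n-1),(3n-1)(3n-2))$ for $N=2n$ and $(6n,(3n+1)(3n-1))$ for $N=2n+1$. Then $\ell_N = \ell_N^{(0)} + N^{1/2}L_{1/2} + \cdots$ where $\ell_N^{(0)}=z\partial_z\ln g_N$ has an $O(N)$ piece and an $O(1)$ piece, and the correction $z\partial_z\ln(1+\tau_1N^{-1/2}+\tau_2N^{-1})$ contributes at orders $N^{-1/2}$ and $N^{-1}$.

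The key computational step is an order-by-order expansion in powers of $N^{1/2}$. The $O(N^2)$ and $O(N^{3/2})$ terms vanish identically by the choice of the leading exponential factor $\bigl(\tfrac49(z^{3/2}-1)^2/(z^{1/2}(z-1)^2)\bigr)^N$ — this is precisely Gorin–Panova's leading-order statement, so I would simply cite it. The $O(N)$ order determines nothing new about $\tau_i$ but fixes the prefactor power $z^{3/4}$; the $O(N^{1/2})$ order gives a linear equation forcing $\tau_1(z)=0$ (and $\bar\tau_1(z)=0$), consistent with the $z\to z^{-1}$ symmetry of the character. The $O(1)$ order then yields a first-order linear ODE for $\tau_2(z)$ (now that $\tau_1=0$, its contribution decouples cleanly) of the schematic form $z\tau_2'(z) + (\text{known coefficient})\,\tau_2(z) = (\text{known inhomogeneity built from }g_N\text{ and }c_N,d_N)$. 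One checks directly that $\tau_2(z) = \mp\tfrac{c}{144}(z^{3/2}-1)^2 z^{-3/2}$ with $c=5$ for even $N$ and $c=-7$ (i.e.\ $+7/144$) for odd $N$ solves this equation; alternatively, since $(z^{3/2}-1)^2/z^{3/2}$ is the unique (up to scale) solution symmetric under $z\to z^{-1}$ and vanishing appropriately, it suffices to fix the constant by a single Taylor coefficient, e.g.\ by expanding $\chi_N(1,\dots,1,z)$ around $z=1$ to second order and comparing with $\mathfrak X_N(z)$ using \cref{prop:DetChiOdd,prop:DetChiEven,cor:HomChi}.

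An important subtlety is that the two $\mu$'s entering each ODE differ between the even and odd cases only by $O(1)$ shifts ($\mu_1\mu_2$ and $\mu_1+\mu_2$ differ at orders $N$ and $1$), so the entire even/odd distinction in $\tau_2$ versus $\bar\tau_2$ comes from these subleading shifts in $c_N$ and $d_N$; I would keep $\mu_1,\mu_2$ symbolic as long as possible and substitute the explicit values only at the end, which makes the parallel structure of the two computations transparent and the sign/constant difference ($5$ versus $-7$) easy to track. The main obstacle I anticipate is purely bookkeeping: correctly organising the half-integer powers of $N$ and the $z$-dependent coefficients so that the $O(N^{1/2})$ and $O(1)$ equations are isolated without error, and verifying that the inhomogeneous term on the right of the $\tau_2$-equation is indeed proportional to $(z^{3/2}-1)^2/z^{3/2}$ — a short but error-prone manipulation with $(z-1)^2$, $(z^{3/2}-1)^2$ and $(1+z^3)/(1-z^3)$. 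Once that is in hand, solving the first-order ODE and fixing the integration constant is immediate, and \eqref{eqn:Tau} and \eqref{eqn:TauBar} follow.
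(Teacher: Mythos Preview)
Your approach is essentially the paper's: substitute the Gorin--Panova asymptotic ansatz into the ODEs of \cref{prop:ODE} and match orders in $N^{-1/2}$. The paper does this directly on $f_N$ rather than passing to the Riccati form for $\ell_N=z\partial_z\ln f_N$, but that is a cosmetic difference.

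Two points are worth flagging. First, the $O(N^{1/2})$ balance does not force $\tau_1(z)=0$; it only gives $\tau_1'(z)=0$, so $\tau_1$ is an undetermined constant. You still need an initial condition. Second, and relatedly, the paper fixes all integration constants by the observation that $\mathfrak X_N(1)=1$ for every $N$, which immediately yields $\tau_i(1)=\bar\tau_i(1)=0$. This is much simpler than your proposed routes (verifying the explicit candidate satisfies the ODE, or Taylor-expanding the determinant formulas of \cref{prop:DetChiOdd,prop:DetChiEven} near $z=1$); with $\tau_1(1)=0$ and $\tau_1'=0$ one gets $\tau_1\equiv 0$, and then the first-order ODE for $\tau_2$ integrates to $\tau_2(z)=c_2-\tfrac{5}{144}(z^3+1)z^{-3/2}$ with $c_2$ fixed by $\tau_2(1)=0$. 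Using this normalisation removes the only nontrivial bookkeeping from your sketch.
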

\begin{proof}
  We substitute the asymptotic expansion \eqref{eqn:AsymptoticsGP} into the differential equations of \cref{prop:ODE}. For $N=2n$, we find
  \begin{equation}
    \sqrt{2}\tau_1'(z)n^{1/2}+ \left(\tau_2'(z)-\tau_1'(z)\tau_1(z)+\frac{5}{96}\frac{z^{3}-1}{z^{5/2}}\right)+O(n^{-1/2})=0,
  \end{equation}
  asymptotically as $n\to \infty$. Equating the coefficients of the two leading terms  to zero, we find two differential equations for $\tau_1$ and $\tau_2$. Their solution is
  \begin{equation}
    \tau_1(z) = c_1,\quad \tau_2(z)=c_2-\frac{5}{144}\frac{z^3+1}{z^{3/2}},
  \end{equation}
where $c_1$ and $c_2$ are integration constants. Since $\mathfrak X_N(1) = 1$ for all $N$, we have $\tau_1(1)=\tau_2(1)=0$. This initial condition fixes the values of the integration constants to $c_1=0$ and $c_2=5/72$ and leads to the expressions given in \eqref{eqn:Tau}.
  
Similarly, the expressions \eqref{eqn:TauBar} follow from the differential equation for $N=2n+1$.
\end{proof}
  
\subsection{The logarithmic bipartite fidelity}
\label{sec:AsymptoticsLBF}

Let us write $\mathcal F_{N_1,N_2}=\mathcal F_{N_1,N_2}(x)$ to stress the dependence of the LBF on the parameter $x$. Using \eqref{eqn:DefFO}, \eqref{eqn:OSymplecticCharacters} and \eqref{eqn:DefFrakX}, we find
  
  \begin{equation}
    \mathcal F_{N_1,N_2}(x) = \mathcal F_{N_1,N_2}(1)- \ln \left(\frac{\mathfrak{X}_{N+1}(z)}{\mathfrak X_{N_1+1}(z)\mathfrak X_{N_2+1}(z)}\right),
    \label{eqn:FSplit}
  \end{equation}
  where $N=N_1+N_2$ and
  \begin{equation}
    z = \left(\frac{\beta}{q}\right)^2 =\frac{qx+1}{q+x}. \label{eqn:ZX}
  \end{equation}
  
\begin{proof}[Proof of \cref{thm:MainTheorem2}] First, we determine the leading terms of the asymptotic series of $\mathcal F_{N_1,N_2}(1)$ as $N_1,N_2\to \infty$. It was computed in  \cite{hagendorf:17} up to $O(1)$ with the help of Stirling's formula. Finding the $O(N^{-1})$-term is straightforward. If $N_1,N_2$ are even, then
  \begin{equation}
    \mathcal F_{N_1,N_2}(1) = \frac16 \ln N + \frac {1}{6}\ln(\xi(1-\xi))-\ln K + \frac{13}{72}\left(\frac{1}{\xi}+\frac{1}{1-\xi}-1\right)N^{-1}+O(N^{-2}),
    \label{eqn:FSUSYAsymp}
  \end{equation}
  where
  \begin{equation}
   \label{eqn:FSUSYAsympC}
   K = \frac{8}{3\Gamma(1/3)}\sqrt{\frac{\pi}{3}}.
  \end{equation}
  Likewise, for even $N_1$ and odd $N_2$, we obtain
  \begin{equation}
  \label{eqn:FSUSYAsymp2}
  \mathcal F_{N_1,N_2}(1) = \frac16 \ln N + \frac {1}{6}\ln\left(\frac{\xi}{1-\xi}\right)-\ln K + \frac{1}{72}\left(\frac{13}{\xi}+11\left(1-\frac{1}{1-\xi}\right)\right)N^{-1}+O(N^{-2}).
  \end{equation}
  
  Second, we consider $z$ defined in \eqref{eqn:ZX}. Using the parameterisation \eqref{eqn:Xr}, we find $z= \ee^{2\pi \i(1-r)/3}$. We evaluate the second term of \eqref{eqn:FSplit} for this value of $z$ with the help of the asymptotic series \eqref{eqn:AsymptoticsGP}. For even $N_1,N_2$, we find
  \begin{multline}
    \label{eqn:LogXAsymp}
     \ln \left(\frac{\mathfrak{X}_{N+1}(z)}{\mathfrak X_{N_1+1}(z)\mathfrak X_{N_2+1}(z)}\right) = \ln \left(\frac{3\sin(2\pi(r-1)/3)}{4\sin(\pi(r-1)/2)}\right)\\+\bar \tau_2(z)\left(1-\frac{1}{\xi}-\frac{1}{1-\xi}\right)N^{-1}+o(N^{-1}).
  \end{multline}
  For even $N_1$ and odd $N_2$, we obtain
  \begin{multline}
    \label{eqn:LogXAsymp2}
     \ln \left(\frac{\mathfrak{X}_{N+1}(z)}{\mathfrak X_{N_1+1}(z)\mathfrak X_{N_2+1}(z)}\right) = \ln \left(\frac{3\sin(2\pi(r-1)/3)}{4\sin(\pi(r-1)/2)}\right)\\+ \left(\tau_2(z)\left(1-\frac{1}{1-\xi}\right)-\frac{\bar \tau_2(z)}{\xi}\right)N^{-1}+o(N^{-1}).
  \end{multline}  
  Using \cref{lem:Tau}, we infer
  \begin{equation}
    \label{eqn:Tau2r}
    \tau_2(z) = \frac{5}{36}\sin^2\left(\frac{\pi(r-1)}{2}\right), \quad \bar \tau_2(z) = -\frac{7}{36}\sin^2\left(\frac{\pi(r-1)}{2}\right).
  \end{equation}
  Inserting \eqref{eqn:FSUSYAsymp}, \eqref{eqn:FSUSYAsympC} and \eqref{eqn:LogXAsymp} into \eqref{eqn:FSplit} leads to \eqref{eqn:Asymp1}, whereas \eqref{eqn:FSUSYAsymp2}, \eqref{eqn:FSUSYAsympC} and \eqref{eqn:LogXAsymp2} allow us to obtain \eqref{eqn:Asymp2}.
  
  Finally, for odd $N_1$ and even $N_2$, we exploit the fact that $\mathcal F_{N_1,N_2}=\mathcal F_{N_2,N_1}$ (which follows from \eqref{eqn:DefFO} and \cref{thm:MainTheorem1}). This property implies that we may obtain the asymptotic series from the one for even-odd case, with $\xi$ replaced by $1-\xi$.
\end{proof}

\subsection{The prediction of conformal field theory}
\label{sec:CFT}

We now compare \cref{thm:MainTheorem2} to the CFT prediction for the asymptotic series of the LBF.
For the case \eqref{eqn:Parameters}, we expect this CFT to be a free boson theory, whose compactification radius is fine-tuned to a value where the conformal symmetry is enhanced to a superconformal symmetry \cite{degier:04,degier:05}. Here below, we recall the CFT prediction for this case and use physics arguments to infer the CFT data from the characteristics of the spin chain's ground-state vector.

\subsubsection{The CFT prediction}

Dubail and St\'ephan computed the leading terms of the asymptotic series of $\mathcal F_{N_1,N_2}$ for one-dimensional quantum critical systems using CFT arguments \cite{dubail:11,stephan:13}. Their CFT derivation relates the LBF to a correlation function of four primary boundary fields $\varphi_i$ with conformal weights $\Delta_i$, $1\leqslant i \leqslant 4$, on a so-called \textit{flat-pants domain}, as illustrated in \cref{fig:pants}. The prediction is
\begin{equation}
\label{eqn:FAB.DS}
\mathcal F_{N_1,N_2}= \Big(\frac c8 + \Delta_2\Big) \ln N + f(\xi) + g(\xi) N^{-1}\ln N + O(N^{-1}),
\end{equation}
asymptotically as $N_1,N_2\to \infty$, where $N= N_1+N_2$ and $\xi = N_1/N$. The coefficient of the leading term is universal. It only depends on the CFT's central charge $c$ and the conformal weight $\Delta_2$ of the primary field localised at tip of the slit in the flat-pants domain. The coefficients $f(\xi)$ and $g(\xi)$ of the sub-leading terms possess explicit expressions in terms of the fields' conformal weights. We present these expressions here below for a free boson theory with central charge $c=1$. For this theory, the weights are $\Delta_i = \alpha_i^2/2$ where $\alpha_i$ are the fields' so-called $U(1)$ charges. We have
\begin{alignat}{2}
\label{eqn:f(x)}
f(\xi) &= \left(\frac 1{24}\left(2\xi-1+\frac 2 \xi\right) + \left(1-\frac 1\xi \right) \alpha_1^2 - \frac{\alpha_2^2}2 -2 \alpha_2 \alpha_3 - \alpha_3^2 +\left(1-\xi\right)\alpha_4^2\right) \ln{(1-\xi)} \\[0.15cm]
& +\left(\frac 1{24}\left(1-2\xi +\frac {2}{1-\xi}\right) + \left(1-\frac{1}{1-\xi} \right) \alpha_3^2 - \frac{\alpha_2^2}2 -2 \alpha_2 \alpha_1 - \alpha_1^2 + \xi\alpha_4^2\right) \ln{\xi}+C.
\end{alignat}
Here, $C$ is a non-universal constant in the sense that it cannot be obtained from CFT arguments. The coefficient $g(\xi)$ is given by
\begin{equation}
\label{eqn:g(x)}
g(\xi) = \Xi \times \frac 12 \left( \alpha_4^2-\frac 1 {12} + \Big(\frac 1{12}- \alpha_1^2\Big)\frac 1\xi + \Big(\frac 1{12}-\alpha_3^2\Big)\frac 1{1-\xi}\right),
\end{equation}
where $\Xi$ is a non-universal factor called the \textit{extrapolation length} \cite{diehl:81,cardy:84}. We note that  the $U(1)$ charges, as well as the non-universal constants $C$ and $\Xi$ may in principle depend on the parity of the integers $N_1,N_2$.
 
\begin{figure}
\begin{center}

\begin{tikzpicture}[>=stealth, scale=0.9]
 
   \fill[color=blue!10](1,2.5) rectangle (5.,-2);
    
   
   \shade[top color=white, bottom color = blue!10] (1,2.5) rectangle (5,3);
   \shade[top color = blue!10, bottom color = white] (1,-2) rectangle (5,-2.5);

   \foreach \x in {1,5}
   {
     \path[path fading=fade up, draw=black] (\x,-2) -- (\x,-2.5);
     \path[path fading=fade down, draw=black] (\x,2.5) -- (\x,3);
     \draw(\x,-2) -- (\x,2.5);
   }
   
   \draw[draw=black,fill=black!100] (2.75,0) circle (1.5pt);
   \draw[thick] (2.75,0) -- (2.75,-2);
   \path[path fading=fade up, thick, draw=black] (2.75,-2) -- (2.75,-2.5);
   
   \draw[<->](1,-1)--(2.75,-1);
   \draw[<->](2.75,-1)--(5.,-1);
   \draw[<->](1,1.25)--(5,1.25);

   \draw (1.875, -2.3) node {$\varphi_1$};
   \draw (2.75, 0) node[above] {$\varphi_2$};
   \draw (3.875, -2.3) node {$\varphi_3$};
   \draw (3, 2.8) node {$\varphi_4$};

   \draw (1.85, -1.3) node {$N_1$};
   \draw (3.875, -1.3) node {$N_2$};
   \draw (3, 1.55) node {$N$};
    
  \end{tikzpicture} 
\end{center}
\caption{The picture illustrates the geometry of the flat-pants domain. It corresponds to a concatenation of an upper vertical semi-infinite strip of width $N$ and two lower vertical semi-infinite strips of widths $N_1, N_2$, separated by a slit. The LBF can be constructed from CFT correlation functions with primary fields at boundary points of this domain. The fields $\varphi_1,\,\varphi_3$ and $\varphi_4$ are localised at the infinitely-remote ends of the pants' legs, whereas the insertion point of $\varphi_2$ is the top of the slit.}
\label{fig:pants}
\end{figure}
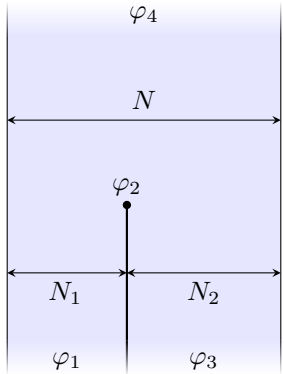
  
\subsubsection{Comparison and discussion}

To compare \cref{thm:MainTheorem2} to the CFT prediction, we need to find the charges $\alpha_1,\dots,\alpha_4$ of the four primary fields on the flat pants domain. To this end, we follow the analysis of \cite{hagendorf:17} and consider the asymptotic series of the ground-state eigenvalue of the spin-chain Hamiltonian as $N\to \infty$. For a general quantum critical Hamiltonian with open boundary conditions, the first terms of this series are expected to be \cite{cardy:86_3,affleck:86}
\begin{equation}
E_0 = N E_{\text{bulk}} + E_{\text{bndr}}+ \pi \nu_F\Big(\Delta_0-\frac{c}{24} \Big)N^{-1} + O(N^{-2}).
\end{equation}
Here, the pre-factors $E_{\text{bulk}}$ and $E_{\text{bndr}}$ are non-universal numbers. Moreover, the so-called Fermi velocity $\nu_F$ is a non-zero number. For the XXZ chain, it can be written in terms of the anisotropy parameter $\Delta$ \cite{affleck:90}. Finally, $\Delta_0$ is the conformal weight of the primary field associated to the scaling limit of the spin chain's ground state. For the Hamiltonian of the open XXZ chain  \eqref{eqn:XXZHamiltonian} with the parameters \eqref{eqn:Parameters}, we infer from the explicit expression \eqref{eqn:SpecialEV} of $E_0$ the values
\begin{equation}
  E_{\text{bulk}}=-\frac34,\quad E_{\text{bndr}} = \frac14(2-x)(2-x^{-1}), \quad \Delta_0 = \frac{1}{24}.
\end{equation}
Here, we assumed that the central charge is $c=1$. Since the primary fields $\varphi_1,\varphi_3,\varphi_4$ correspond to the ground state, we immediately infer the conformal weights $\Delta_1 = \Delta_3 = \Delta_4 = \frac{1}{24}$, independently of the value of $x>0$.
These values fix the corresponding $U(1)$ charges up to a sign. For $x=1$, the analysis of the magnetisation of the ground-state vectors involved in the spin-chain overlap determines this sign \cite{hagendorf:17}. Since the magnetisation of these vectors does not change with $x$ (and since the Hamiltonian depends continuously on $x$), it is plausible to assume that the unknown sign coincides with its value at $x=1$. For even $N_1,N_2$, this assumption leads to\footnote{The sign convention for the $U(1)$ charges used here differs slightly from \cite{hagendorf:17}. That article exclusively works with $U(1)$ charges assigned to vectors (kets). Here, in contrast, $\alpha_4$ is the $U(1)$ charge of a co-vector (bra), which has the advantage that the charge neutrality condition matches the conventions commonly used in CFT. Note that the charges of a ket and a bra vector are obtained from one another through the multiplication by a minus sign.}
\begin{equation}
  \alpha_1= \alpha_3 = \frac{1}{2\sqrt{3}}, \quad \alpha_4 = -\frac{1}{2\sqrt{3}},
  \label{eqn:U1a}
\end{equation}
whereas, for even $N_1$ and odd $N_2$, we find the values
\begin{equation}
  \alpha_1= \alpha_4 = \frac{1}{2\sqrt{3}}, \quad \alpha_3 = -\frac{1}{2\sqrt{3}}.
   \label{eqn:U1b}
\end{equation}
The $U(1)$ charge of the primary field at the tip of the cut follows from the charge neutrality condition $\alpha_1+\alpha_2+\alpha_3+\alpha_4=0$. In both cases, we obtain
\begin{equation}
  \alpha_2 = -\frac{1}{2\sqrt{3}}.
   \label{eqn:U1c}
\end{equation}
Upon substitution of the $U(1)$ charges \eqref{eqn:U1a} and \eqref{eqn:U1c} into \eqref{eqn:f(x)} and \eqref{eqn:g(x)}, we obtain for even $N_1,N_2$ the expressions
\begin{equation}
  f(\xi) = \frac16\ln\left(\xi(1-\xi)\right) + C_{\mathrm{ee}}, \quad g(\xi) = 0.
\end{equation}
Similarly, using \eqref{eqn:U1b} and \eqref{eqn:U1c}, we find  
  \begin{equation}
  f(\xi) = \frac16\ln\left(\frac{\xi}{1-\xi}\right)+C_{\mathrm{eo}}, \quad g(\xi) = 0
\end{equation}
for even $N_1$ and odd $N_2$.
Comparing these expressions with \cref{thm:MainTheorem2}, we observe that the CFT prediction perfectly matches the exact calculations, provided that we set $C_{\mathrm{ee}}=C_{\mathrm{eo}}=-\ln D$.
In particular, the vanishing of the coefficient $g(\xi)$ in both cases is consistent with the identification of the conformal weights and $U(1)$ charges associated to the ground states.

We would like to point out the power of the CFT results. Based on a few physical arguments, they effortlessly predict that the leading terms of the LBF's asymptotic series do not depend on the boundary parameter $x>0$, except for a non-universal additive constant in the $O(1)$-term that may depend on it. This is by no means obvious from the finite-size expressions which have a highly non-trivial $x$-dependence.

\section{Conclusion}
\label{sec:Conclusion}

In this article, we have computed the finite-size LBF for the open XXZ spin chain with anisotropy $\Delta=-\frac12$ and a one-parameter family of diagonal boundary magnetic fields. The computation exploits the fact that the spin chain's ground-state vector can be obtained as a specialisation of a solution to boundary quantum Knizhnik-Zamolodchikov equations. We have used the finite-size results to find the asymptotic series of the LBF for large systems. The leading terms of this series perfectly match the CFT predictions.

We conclude our investigation with a short discussion of possible generalisations of the present work and open problems. First, the LBF of the XXZ chain can be generalised in several ways. These generalisations are based on bipartite and multipartite overlaps, involving the ground-state vectors for periodic, twisted and open boundary conditions. For the anisotropy parameter $\Delta =-\frac 12$, many of these generalisations possess exact finite-size expressions in terms of combinatorial numbers. They allow one to find the corresponding fidelities' asymptotic series for large systems \cite[Chapter 7]{lienardy:20}. However, a CFT prediction for these asymptotic series, in particular for multipartite fidelities, appears to be missing in the literature and could be an interesting endeavour.  Second, the techniques used to find bipartite and multipartite overlaps and fidelities for the XXZ chain at $\Delta=-\frac12$, such as supersymmetry or the quantum Knizhnik-Zamolodchikov equations, do not easily generalise to other values of the anisotropy parameter $\Delta$. However, for generic values of this parameter, the standard techniques of quantum integrability, such as the algebraic Bethe ansatz or the quantum separation of variables technique allow one, in principle, to construct the spin chain's ground state. It remains a challenge to understand if these constructions lead to expressions for the LBF that are amenable to an asymptotic analysis and the extraction of the asymptotic series' leading coefficients. Third, very few properties of the bipartite fidelity for off-critical or near-critical systems are known. To leading order, the LBF of an off-critical system is characterised by a correlation length \cite{dubail:11,weston:11}. However, in contrast to critical systems, nothing appears to be known about sub-leading terms and their possible physical content. Moreover, it remains an open problem to understand if the LBF of a near-critical system can be characterised in terms of a massive quantum field theory. A suitable starting point to investigate these questions could be the computation of the LBF for the open XY spin chain. Alternatively, one might investigate the supersymmetry-preserving deformation of the supersymmetric point $\Delta=-1/2,\,p=\bar p=-1/4$ to the open XYZ chain \cite{hagendorf:20}. As for the XXZ case, the exact lattice supersymmetry could simplify the evaluation of bipartite and multipartite overlaps and open up the possibility for an exact calculation of the LBF.

\subsection*{Acknowledgements}
This work was supported by the Fonds de la Recherche Scientifique (F.R.S.-FNRS) and the Fonds Wetenschappelijk Onderzoek-Vlaanderen (FWO) through the Belgian Excellence of Science (EOS) project no. 30889451 "PRIMA -- Partners in Research on Integrable Models and Applications". CH was supported in part by the Centre de la Recherche Scientifique (CNRS) and thanks the Laboratoire de Physique Th\'eorique et Mod\`eles Statistiques, Universit\'e Paris-Saclay, for hospitality. GP holds a CRM-ISM postdoctoral fellowship and acknowledges support from the Mathematical Physics Laboratory of the CRM. We thank Alexi Morin-Duchesne, Jean Li\'enardy and Jean-Marie St\'ephan for discussions, and Alexandre Lazarescu for his comments on the manuscript.

\end{document}